\def\dOi{12(4:6)2016}
\subjclass{F.1.2 Modes of Computation, F.3.1 Specifying and Verifying and Reasoning about Programs,
H.3.5 Online Information Services, H.5.3 Group and Organization Interfaces}
\keywords{Web Services, Coordination and Orchestration, Contract Agreement, Control Theory, Linear Programming, Intuitionistic Logic}
\let\keywords\relax
\newcommand{\keywords}[1]{\par\addvspace\baselineskip 
\noindent\keywordname\enspace\ignorespaces#1}
\newcommand{\ithel}[2]{{#1}_{({#2})}}
\newcommand{\blk}{\mbox{\tiny$\boxempty$}}
\newcommand{\Oset}{\mathbb{O}}
\newcommand{\Rset}{\mathbb{R}}
\newcommand{\matchrel}{\bowtie}
\newcommand{\conctrans}[1]{\OPconctrans}
\DeclareMathOperator{\OPconctrans}{\blackdiamond}
\newcommand{\blackdiamond}{\diamondsuit}
\renewcommand{\epsilon}{\varepsilon}
\newcommand{\hole}[0]{\bullet}
\newcommand{\irule}[2]{\frac{\textstyle\rule[-1.3ex]{0cm}{3ex}#1}%
{\textstyle\rule[-.5ex]{0cm}{3ex}#2}}
\newcommand{\hbra}{
\hbox to 1 \textwidth{\vrule width0.3mm height 1.8mm depth-0.3mm
                    \leaders\hrule height1.8mm depth-1.5mm\hfill
                    \vrule width0.3mm height 1.8mm depth-0.3mm}}
\newcommand{\hket}{
\hbox to 1 \textwidth{\vrule width0.3mm height1.5mm
                    \leaders\hrule height0.3mm\hfill
                    \vrule width0.3mm height1.5mm}}
\newcommand{\qst}{\;\colon\;} 
\newcommand{\A}{\bold{A}}
\newcommand{\Abot}{\bold{A^\bot}}
\newcommand{\La}{\bold{L}}
\newcommand{\illmix}{$ILL^{mix}$}
\begin{document}

\title[Automata for Specifying and Orchestrating Service Contracts]
      {Automata for Specifying and Orchestrating Service Contracts}

\thanks{This work has been partially supported by the projects  \emph{Security Horizons}, funded by MIUR, and PRA\_2016\_64 \emph{Through the fog}, funded by the University of Pisa.}

%
%
\author[D.~Basile]{Davide Basile}
\author[P.~Degano]{Pierpaolo Degano}
\author[G.-L.~Ferrari]{Gian-Luigi Ferrari}
\address{Dipartimento di Informatica, Universit\`{a} di Pisa, Italy}
\email{\{basile, degano, giangi\}@di.unipi.it}


%
%


\begin{abstract}
  An approach to the formal description of service contracts is
  presented in terms of automata.  We focus on the basic property of
  guaranteeing that in the multi-party composition of principals each
  of them gets his requests satisfied, so that the overall composition
  reaches its goal.  Depending on whether requests are satisfied
  synchronously or asynchronously, we construct an orchestrator that
  at static time either yields composed services enjoying the required
  properties or detects the principals responsible for possible
  violations.  To do that in the asynchronous case we resort to Linear
  Programming techniques.  We also relate our automata with two
  logically based methods for specifying contracts.

\end{abstract}
\maketitle


\section{Introduction}  

%

Modern software applications are not \emph{stand-alone} entities and are embedded in a dynamic distributed environment where new functionalities are added or deleted in a relatively short period of time.  \emph{Service Oriented Computing}~\cite{soc} is a paradigm for designing distributed applications where applications are built by combining several \emph{fine-grained} and \emph{loosely-coupled} distributed components, called \emph{services}. Services can be  combined to accomplish a certain computational task or to form a more complex service.
A service exposes both the functionalities it provides and the parameters it requires. Clients exploit service public information to discover and bind the services that better fit their requirements.

Service coordination is a fundamental mechanism of the service-oriented approach because it dictates 
how the involved services are compositionally assembled together.  
Service coordination policies differ on the interaction supports that are adopted to pass information among services. 
At design time, a main task of software engineers is therefore to express the assumptions that shape these policies and that will drive the construction of a correct service coordination.
\emph{Orchestration} and \emph{choreography}  are  the  standard solutions  to coordinate distributed services.
In an orchestrated approach, services
coordinate with each other by interacting with a distinguished service, the \emph{orchestrator}, which at run-time regulates how the computation evolves.
In a choreographed approach, the distributed services autonomously
execute and interact with each other without a central coordinator.
Here, we concentrate on orchestration, whereas we injected some aspects of our proposal within the choreographed approach in~\cite{basile2014,basile2015}.
    
We argue that the design of correct service coordination policies is naturally supported by relying on the notion of \emph{service contract} 
which specifies what a service is going to guarantee and offer (hereafter an \emph{offer}) and what in turn it expects and requires (hereafter a \emph{request}). 
The coordination policy  has therefore to define the duties and responsibilities 
for each of the different services involved in the coordination through the \emph{overall contract agreement}.
Obviously, this arrangement is based on the contracts of the involved services, 
and ensures that all requests are properly served when all the duties are properly kept.
The coordinator then organises the service coordination policy and proposes 
the resulting overall contract agreement to all the parties.
This process is called \emph{contract composition}. 

The main contribution of this paper is twofold. 
First, we propose a rigorous formal technique for describing and composing contracts, suitable to be automated. 
Second, we develop techniques capable of determining when a  contract composition is correct and leads to the design of a correct service orchestration.
More in detail, we introduce an automata-based model for contracts called \emph{contract automata}, that are a special kind of finite state automata, endowed with two operations for composing them.
A contract automaton may represent a single service or a composition of several services, hereafter called \emph{principals}.
The traces accepted by a contract automaton show the possible interactions among the principals, by recording which offers and requests are performed, and by which  principals in the composition. 
This provides the basis to define criteria that guarantee a composed service to well behave with respect to the overall service contract.

We equip our model with formal notions in language-theoretic terms aiming at characterising when contracts are honoured within a service composition. 
We first consider properties of a single trace.
We say that a trace is in \emph{agreement} when all the requests made are synchronously matched, i.e. satisfied by corresponding offers. 
The second property, \emph{weak agreement}, is more liberal, in that requests can be asynchronously matched, and an offer can be delivered even before a corresponding request, and vice-versa.
Then we say that a contract automaton is \emph{safe} (\emph{weakly safe}, respectively) when \emph{all} its traces are in agreement (weak agreement, respectively).

The notions of safety presented above may appear too strict since they require that all the words belonging to the language recognised by a contract automaton must satisfy agreement or weak agreement. 
We thus introduce a more flexible notion that characterises when a service composition may be successful, i.e. at least one among all the possible traces enjoys one of the properties above. 
We say that a contract automaton \emph{admits} (weak) agreement when such a trace exists. 

When a contract automaton admits (weak) agreement, but it is not (weakly) safe, we define those principals in a contract that are (weakly) \emph{liable}, i.e.\ those responsible for leading a contract composition into a failure.
Note that the orchestration of contracts imposes further constraints on each principal: some of the interactions dictated by its service contract may break the overall composition and thus the orchestrator will ban them.

For checking when a contract automaton enjoys the properties sketched above, we propose two formal verification techniques that have been also implemented~\cite{cat}.%
\footnote{Available at \url{https://github.com/davidebasile/workspace}}
The first one amounts to build the so-called controllers in Control Theory~\cite{Cassandras2006}. 
We show that controllers are powerful enough to synthesise a correct orchestrator enforcing agreement and to detect the liable principals.
In order to check weak agreement and detect weak liability we resort to Linear Programming techniques borrowed from Operational Research~\cite{Hemmecke10}, namely optimisation of network flows. 
The intuitive idea is that service coordination is rendered as an optimal  flow itinerary of offers and requests in a network, automatically constructed from the contract automaton.

Finally, we establish correspondence results between (weak) agreement and provability of formulae in two fragments of different intuitionistic logics, that have been used for modelling contracts. 
The first one, Propositional Contract Logic~\cite{BartolettiZ10}, has a special connective to deal with circularity between offers and requests, arising when a principal requires, say $a$, before offering $b$ to another principal who in turn first requires $b$ and then offers $a$; note that weak agreement holds for this kind of circularity.
The second fragment, Intuitionist Linear Logic with Mix~\cite{benton1995mixed} is a linear logic capable of modelling the exchange of resources with the possibility of recording debts, that arise when the request of a principal is satisfied and  not yet paid back.

\subsection*{Plan of the paper.}
In Section~\ref{sect:model} we introduce contract automata and two operators of composition.
Section~\ref{sect:agreement} discusses the properties of agreement and safety. 
The techniques for checking and enforcing them are also presented here, along with the notion of liability.
Weak agreement and weak liability are defined in Section~\ref{sect:weak-agreement}, along with a technique to check them.
 In Section~\ref{sect:logic} we present correspondence results with fragments of Propositional Contract Logic  and Intuitionistic Linear Logic with Mix.
A case study is proposed in Section~\ref{sect:casestudy}.
Finally, related work is in Section~\ref{sect:conclusion} and the concluding remarks are in Section~\ref{sect:concludingremarks}.
All the proofs of our results, and a few auxiliary definitions can be found in the appendix.
Portions of Sections~\ref{sect:model}, \ref{sect:agreement}, and~\ref{sect:weak-agreement} appeared in a preliminary form in~\cite{BasileDF14}.

%
\section{The Model}
\label{sect:model}

This section formally introduces the notion of contract automata, that are finite state automata with a partitioned alphabet.
%
A contract automaton represents the behaviour of a set of principals (possibly a singleton)
capable of performing some \emph{actions}; more precisely, the actions of contract
automata allow them to \lq\lq make\rq\rq\ requests, \lq\lq advertise\rq\rq\ offers
or \lq\lq matching\rq\rq\ a pair of ``complementary'' request/offer.
The number of principals in a contract automaton is called \emph{rank}, and we use a vectorial representation to record the action performed by each principal in a transition of a contract automaton, as well as its state as the vector of the states of its principals.

Let $\Sigma = \Rset  \cup \Oset \cup \{ \blk \}$ be the alphabet of \emph{basic actions}, made of
\emph{requests} $\Rset=\{a,b,c,\ldots\}$ and  \emph{offers} $\Oset=\{\overline{a}, \overline{b}, \overline{c}, \ldots \}$  where $\Rset \cap \Oset = \emptyset$, 
and $\blk \not \in \Rset \cup \Oset$ is a distinguished element representing the \emph{idle} move. 
We define the involution $co(\hole): \Sigma \mapsto \Sigma$ such that $ co(\Rset)=\Oset, \  co(\Oset)=\Rset, \  co(\blk)=\blk $.

Let $\vec{v}=(a_1,\tiny{...},a_{n})$ be a vector of \emph{rank} $n\geq 1$, in symbols $r_v$, and let $\ithel{\vec{v}} i$ denote the i-th element with $1 \leq i \leq r_v$.
We write $\vec{v}_1\vec{v}_2\ldots\vec{v}_m$ for the concatenation of $m$ vectors $\vec{v}_i$,   while $|\vec v| = n$ is the rank (length) of $\vec v$ and
 ${\vec v}^n$ is the vector obtained by $n$  concatenations of $\vec v$. 
%

The alphabet of a contract automaton consists of vectors, each element of which intuitively records the activity, i.e.\ the occurrence of a basic action of a single principal in the contract.
In a vector $\vec{v}$ there is either a single offer or a single request, or a single pair of request-offer that matches, i.e.\ there exists
exactly  $i, j$ such that   $\ithel{\vec{v}} i$ is an offer and  $\ithel{\vec{v}} j$ is the complementary request or vice-versa; all the other elements of the vector contain the symbol $\blk$, meaning that the corresponding principals stay idle. In the following let $\blk^{m}$  denote a vector of rank $m$, all elements of which are $\blk$.
Formally:
\begin{defi}[Actions]~\label{def:actions}
Given a vector $\vec{a} \in \Sigma^n$, if 
\begin{itemize}
\item $\vec{a} = \blk^{n_1} \alpha \blk^{n_2}, n_1,n_2 \geq 0$, then $\vec{a}$ is a \emph{request (action)  on }$\alpha$ if $\alpha \in \Rset$, and is an \emph{offer (action)  on }$\alpha$  if $\alpha \in \Oset$
\item $\vec{a}=\blk^{n_1}\alpha\blk^{n_2}co(\alpha)\blk^{n_3}, n_1,n_2,n_3 \geq 0$, then $\vec{a}$ is a \emph{match (action) on }$\alpha$, where $\alpha \in \Rset \cup \Oset$.
\end{itemize}
\noindent 

Two actions  $ \vec a$ and $\vec b$ are \emph{complementary}, in symbols $ \vec a \matchrel \vec b$ if and only if 
the following conditions hold:  
 (i)  $\exists \alpha \in \Rset \cup \Oset \qst \vec a$ is either a request or an offer on $\alpha$;
   (ii) $\vec a \text{ is an
      offer on } \alpha \implies \vec b \text{ is a request on } co(\alpha)$ and
 (iii) $\vec a \text{ is a
      request on } \alpha \implies \vec b \text{ is an offer on } co(\alpha)$.  
%

  %
\end{defi}

We now extract from an action the request or offer made by a principal, and the matching of a request and an offer, and then we lift this procedure to a sequence of actions, i.e.\ to a trace of a contract automaton that intuitively corresponds to an execution of a service composition.

\begin{defi}[Observable]
Let $w=\vec{a_1}\ldots \vec{a_n}$ be a sequence of actions, and let $\epsilon$ be the empty one, then its \emph{observable} is given by the partial function $Obs(w) \in (\Rset \cup \Oset \cup \{ \tau\})^\ast$ where:
\begin{itemize}
\item[]
$Obs(\epsilon)= \epsilon $

\item[]
$
Obs(\vec{a}\,w') =
 \left\{ 
\begin{array}{ll} 
 \ithel{\vec{a}} i\, Obs(w')& \mbox{if\ \ } \vec{a}\text{ is an offer/request and }\ithel{\vec{a}} i \neq \blk \\ 
 \tau\, Obs(w')  & \text{if\ \ }\vec{a}\text{ is a match}
\end{array} 
\right.
$
\end{itemize}
\end{defi} 


We now define contract automata, the actions and the states of which are actually vectors of basic actions and of states of principals, respectively.

\begin{defi}[Contract Automata]
\label{def:contract}
Assume as given a finite set of states $\mathfrak{Q}=\{q_1,q_2, \ldots \}$. 
Then a \emph{contract automaton} $\mathcal{A}$, CA for short, of rank $n$ is a tuple $\langle Q, \vec{q_0}, A^{r}, A^{o}, T, F \rangle$, where
\begin{itemize}
\item $Q=Q_1 \times \ldots \times Q_n \subseteq \mathfrak{Q}^n$
\item  $\vec{q_0} \in Q$ is the initial state
\item $A^{r}\subseteq \Rset, A^{o} \subseteq \Oset$ are finite sets (of requests and offers, respectively)
\item $F \subseteq Q$ is the set of final states 
\item  $T \subseteq Q \times A  \times Q$ is the set of transitions, where $A \subseteq ( A^{r} \cup A^{o} \cup \{\blk\})^n$ and if \\ $ (\vec{q},\vec{a},\vec{q'}) \in T$ then both the following conditions hold:
\begin{itemize}
\item $\vec{a}$ is either a request or an offer or a match
\item $\forall i \in 1\ldots n$. if $\ithel{\vec{a}} i=\blk$ then it must be $\ithel{\vec{q}} i=\ithel{\vec{q'}} i$
\end{itemize}
\end{itemize}
A \emph{principal} contract automaton (or simply \emph{principal}) has rank 1 and it is such that $A^r \cap co(A^o)= \emptyset$.\\
A step  $( w,\vec{q}) \rightarrow (w',\vec{q'})$ occurs if and only if $w=\vec{a}w', w' \in A^*$ and $(\vec{q},\vec{a},\vec{q'}) \in T$.
\\
The language of $\mathcal{A}$ is $\mathscr{L}(\mathcal{A})=\{w  \mid (w,\vec{q_0}) \rightarrow^* (\epsilon, \vec{q}), \vec{q} \in F\}$ where $\rightarrow^*$ is the reflexive, transitive closure of the transition relation $\rightarrow$.

\end{defi}

Note that for principals we have the restriction $A^r \cap co(A^o)= \emptyset$. Indeed, a principal
who offers what he requires makes little sense.

\begin{exa}
Figure~\ref{fig:automa_rank} shows three contract automata.
The automaton $\mathcal{A}_1$ may be understood as producing a certain number of resources through one or more offers $\overline{res}$ and it terminates with the request of receiving a signal $sig$. 
The contract  $\mathcal{A}_2$ starts by sending the signal $\overline{sig}$ and then it collects the resources produced by $\mathcal{A}_1$. 
The contract $\mathcal{A}_3$ represents the contract automaton where $\mathcal{A}_1$ and $\mathcal{A}_2$ interact as discussed below. 
Both $\mathcal{A}_1$ and $\mathcal{A}_2$ have rank 1 while $\mathcal{A}_3$ has rank 2. 
\end{exa}

Contract automata can be composed, by making the cartesian product of their states and of the labels of the joined transitions, with the additional possibility of labels recording matching request-offer.
This is the case for the action $(sig,\overline{sig})$ of the contract automaton $\mathcal{A}_3$ in Figure~\ref{fig:automa_rank}.


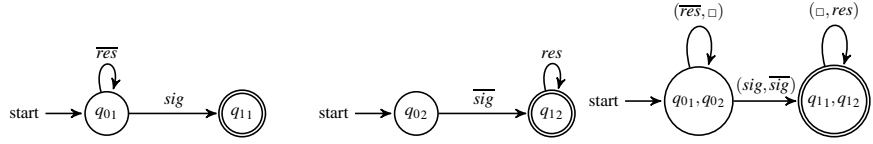
\begin{figure}[tb]
\center
\begin{tikzpicture}[->,>=stealth',shorten >=1pt,auto,node distance=3.0cm,
                    semithick, every node/.style={scale=0.6}]
  \tikzstyle{every state}=[fill=white,draw=black,text=black]

  \node[initial,state] (A)                    {${q_0}_1$};
  \node[state,accepting]         (B) [right of=A] {${q_1}_1$};

  \path (A)			edge             node{$sig$} (B)
 				     edge [loop above]             node {$\overline{res}$} (A);
\end{tikzpicture}\hspace{15pt}
\begin{tikzpicture}[->,>=stealth',shorten >=1pt,auto,node distance=3.0cm,
                    semithick, every node/.style={scale=0.6}]
  \tikzstyle{every state}=[fill=white,draw=black,text=black]

  \node[initial,state] (A)                    {${q_0}_2$};
  \node[state,accepting]         (B) [right of=A] {${q_1}_2$};

  \path (A)			edge             node{$\overline{sig}$} (B)
 		 (B)		     edge [loop above]             node {$res$} (A);
\end{tikzpicture}
\begin{tikzpicture}[->,>=stealth',shorten >=1pt,auto,node distance=3.0cm,
                    semithick, every node/.style={scale=0.6}]
  \tikzstyle{every state}=[fill=white,draw=black,text=black]

  \node[initial,state] (A)                    {${q_0}_1,{q_0}_2$};
  \node[state,accepting]         (B) [right of=A] {${q_1}_1,{q_1}_2$};

  \path (A)			edge             node{$(sig,\overline{sig})$} (B)
 				     edge [loop above]             node {$(\overline{res},\blk)$} (A)
		(B) edge [loop above]              node {$(\blk,res)$} (B);
\end{tikzpicture}
\caption{Three contract automata: from left $\mathcal{A}_1,\mathcal{A}_2$, and $\mathcal{A}_3$ (composition of 
$\mathcal{A}_1$ and $\mathcal{A}_2$)}
\label{fig:automa_rank}
\end{figure}

Below, we introduce two different operators for composing contract automata. Both products interleave all the transitions of their operands. 
We only force a synchronisation to happen when two contract automata are ready on their respective request/offer action. These operators represent two different policies of orchestration.
The first operator is called simply \emph{product} and it considers the case when a service $S$ joins a group of services already clustered as a single orchestrated service $S'$.
In the product of $S$ and $S'$, the first can only accept the still available offers (requests, respectively) of $S'$ and vice-versa.
In other words, $S$ cannot interact with the principals of the orchestration $S'$, but only with it as a whole component.
This is obtained in Definition~\ref{def:prod} through the relation $\bowtie$ (see Definition~\ref{def:actions}), which is only defined for actions that are not matches.
This is not the case with the second operation of composition, called \emph{a-product}: it puts instead all the principals of $S$ at the same level of those of $S'$.
Any matching request-offer of either contracts can be split, and the offers and requests, that become available again, can be re-combined with complementary actions of $S$, and vice-versa.
The a-product turns out to satisfactorily model coordination policies in dynamically changing environments, because the
a-product supports a form of \emph{dynamic orchestration}, that adjusts the workflow of messages when new principals join the contract.

We now introduce our first operation of composition; recall that we implicitly assume the alphabet of a contract automaton of rank $m$ to be $A \subseteq ( A^{r} \cup A^{o} \cup \{\blk\})^m$.
Note that the first case of the definition of $T$ below is for the matching of actions of two component automata, while the other considers the action of a single component.

\begin{defi}[Product]
\label{def:prod}
Let  $\mathcal{A}_i=\langle Q_i,\vec{q_0}_i,A^{r}_i, A^{o}_i, T_i, F_i \rangle, i \in 1 \ldots n$ be contract automata of rank $r_i$. The \emph{product} $\bigotimes_{i \in 1 \ldots n} \mathcal{A}_i$ is the contract automaton $\langle Q, \vec{q_0}, A^{r}, A^{o}, T, F \rangle$ of rank $m= \sum_{i \in 1 \ldots n} r_i$, where:
\begin{itemize}
\item $Q = Q_1 \times ... \times Q_n, \quad \text{where } \vec{q_0}= \vec{q_0}_1  \ldots  \vec{q_0}_n$
\item $A^{r} =\bigcup_{i \in 1 \cdots n} A^{r}_i, \quad A^{o}=\bigcup_{i \in 1 \cdots n} A^{o}_i$ 
\item $F=\{\vec{q}_1 \ldots \vec{q}_n\mid \vec{q}_1 \ldots \vec{q}_n \in Q, \vec{q}_i \in F_i, i \in 1 \ldots n \}$
 \item $T$ is the least subset of $Q \times A \times
     Q$ s.t. $(\vec{q},\vec c, \vec{q}') \in T$ iff,
    when $\vec q = \vec q_1 \ldots \vec q_n \in Q$,
    \begin{itemize}
    \item either there are $1 \leq i < j \leq n$ s.t.
      $(\vec{q}_i,\vec a_i,\vec{q}'_i) \in T_i$, $(\vec q_j, \vec
      a_j,\vec{q}'_j) \in T_j$, $\vec a_i \matchrel \vec a_j$ and \\
      $\left\{\begin{array}{l}
          \vec c = \blk^u \vec a_i \blk^v \vec a_j \blk^z $ with $
          u = r_1 + \ldots + r_{i-1}, \ v = r_{i+1} + \ldots + r_{j-1},
           |\vec{c}|=m
          \\ and \\
          \vec{q}' = \vec q_1 \ldots \vec q_{i-1} \,\ \vec{q}'_i\,\ \vec q_{i+1}
          \ldots \ \vec q_{j-1} \,\ \vec{q}'_j \,\ \vec q_{j+1} \ldots \vec q_n
        \end{array}\right.
      $
    \item or  there is  $1 \leq i \leq n$  s.t. 
	$(\vec{q}_i,\vec a_i,\vec{q}'_i) \in T_i$ and \\ $\vec c = \blk^u \vec a_i\blk^v$ with $u = r_1 + \ldots +
      r_{i-1}$, $v = r_{i+1} + \ldots + r_n$, and \\
       $\vec{q}' = \vec q_1
      \ldots \vec q_{i-1}\,\; \vec{q}'_i\,\ \vec q_{i+1} \ldots \vec q_n$ and  \\ 
	$\forall  j \neq i, 1 \leq j \leq n, (\vec q_j, \vec a_j,\vec{q}'_j) \in T_j$ it does
      not hold that $\vec a_i \matchrel \vec a_j$. 
    \end{itemize} 
\end{itemize}
\end{defi}\medskip

\noindent There is a simple way of retrieving the principals involved in a composition of contract automata obtained through the product introduced above: just introduce projections $ \prod^i$ as done below.
For example, for the contract automata in Figure~\ref{fig:automa_rank},  we have 
$\mathcal{A}_1 = \prod^1 (\mathcal{A}_3)$ and $\mathcal{A}_2 = \prod^2 (\mathcal{A}_3)$.

\begin{defi}[Projection]
\label{def:proj}
Let $ \mathcal{A} =\langle Q, \vec{q_0}, A^{r}, A^{o}, T, F \rangle$ be a contract automaton of rank $n$, then the
\emph{projection} on the i-th principal is 
 $\prod^i( \mathcal{A})=\langle \prod^i(Q), \ithel{\vec{q_0}} i, \prod^i(A^{r}), \prod^i(A^{o}), \prod^i(T), \prod^i(F) \rangle$ where $i \in 1 \ldots n$ and:
\bigskip
\\
$ 
\prod^i(Q) = \{\ithel{\vec{q}} i \mid \vec{q} \in Q\} 
\quad
\prod^i(F)=\{\ithel{\vec{q}} i \mid \vec{q} \in F\} 
\quad
\prod^i(A^{r})=\{ a \mid a \in A^{r}, (q,a,q') \in \prod^i(T)\}
\medskip
\\
\prod^i(T)=\{(\ithel{\vec{q}} i, \ithel{\vec{a}} i,\ithel{\vec{q'}} i) \mid (\vec{q},\vec{a},\vec{q'}) \in T \wedge 
\ithel{\vec{a}} i \neq \blk\}  
\quad
\prod^i(A^{o})=\{ \overline{a} \mid \overline{a} \in A^{o}, (q,\overline{a},q') \in \prod^i(T)\}
$

\end{defi}
%

The following proposition states that decomposition is the inverse of product, and its proof is immediate.

\begin{prop}[Product Decomposition]\label{pro:product_decomposition}
Let $\mathcal{A}_1, \ldots, \mathcal{A}_n$ be a set of principal contract automata, then $\prod^i(\bigotimes_{j \in 1 \ldots n} \mathcal{A}_j) = \mathcal{A}_i$.
\end{prop}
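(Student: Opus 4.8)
The plan is to unfold the two definitions (Definition~\ref{def:prod} for $\bigotimes$ and Definition~\ref{def:proj} for $\prod^i$) and check that each of the six components of the tuple $\prod^i(\bigotimes_{j \in 1 \ldots n} \mathcal{A}_j)$ coincides with the corresponding component of $\mathcal{A}_i$. Since each $\mathcal{A}_j$ is a principal, it has rank $1$, so the product $\bigotimes_{j \in 1 \ldots n} \mathcal{A}_j$ has rank $m = n$ and the offsets in Definition~\ref{def:prod} simplify: writing $\vec{q} = \vec{q}_1 \ldots \vec{q}_n$, each $\vec{q}_j$ is a single state and each action vector $\vec{c}$ has exactly one non-$\blk$ entry (single request/offer) or exactly two (a match). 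I would first record this simplification explicitly, as it makes the bookkeeping trivial.

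\begin{proof}
Since every $\mathcal{A}_j = \langle Q_j, \vec{q_0}_j, A^r_j, A^o_j, T_j, F_j\rangle$ is a principal, it has rank $1$, hence $\mathcal{B} \mmdef \bigotimes_{j \in 1 \ldots n} \mathcal{A}_j$ has rank $m = n$ and each state of $\mathcal{B}$ is an $n$-tuple $\vec{q} = \vec{q}_1 \ldots \vec{q}_n$ with $\vec{q}_j \in Q_j$. Fix $i \in 1 \ldots n$.

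\emph{States, initial and final states.}
By Definition~\ref{def:prod}, $Q_{\mathcal{B}} = Q_1 \times \ldots \times Q_n$, so $\prod^i(Q_{\mathcal{B}}) = \{\ithel{\vec{q}} i \mid \vec{q} \in Q_{\mathcal{B}}\} = Q_i$ (here one uses that each $Q_j$ is non-empty, which holds as $\vec{q_0}_j \in Q_j$). Likewise $\ithel{\vec{q_0}_{\mathcal{B}}} i = \vec{q_0}_i$ by definition of $\vec{q_0}_{\mathcal{B}}$, and $\prod^i(F_{\mathcal{B}}) = \{\ithel{\vec{q}} i \mid \vec{q} \in F_{\mathcal{B}}\} = F_i$ since $F_{\mathcal{B}} = F_1 \times \ldots \times F_n$ (again using that each $F_j$ is a subset of the non-empty $Q_j$ and that the other $F_k$ are non-empty; if some $F_k = \emptyset$ both sides are empty and the identity still holds, or one may simply restrict to the intended case as the paper does).

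\emph{Transitions.}
For the inclusion $\prod^i(T_{\mathcal{B}}) \subseteq T_i$: a transition of $\prod^i(T_{\mathcal{B}})$ comes from some $(\vec{q}, \vec{c}, \vec{q}') \in T_{\mathcal{B}}$ with $\ithel{\vec{c}} i \neq \blk$. In the first (match) case of Definition~\ref{def:prod}, $\vec{c} = \blk^u \vec{a}_{i'} \blk^v \vec{a}_{j'} \blk^z$ for some $i' < j'$ with $(\vec{q}_{i'}, \vec{a}_{i'}, \vec{q}'_{i'}) \in T_{i'}$ and $(\vec{q}_{j'}, \vec{a}_{j'}, \vec{q}'_{j'}) \in T_{j'}$; since the $\mathcal{A}_k$ have rank $1$ the offsets give that position $i'$ (resp.\ $j'$) carries the single non-$\blk$ symbol of $\vec{a}_{i'}$ (resp.\ $\vec{a}_{j'}$), so $\ithel{\vec{c}} i \neq \blk$ forces $i \in \{i', j'\}$, and in either case $(\ithel{\vec{q}} i, \ithel{\vec{c}} i, \ithel{\vec{q}'} i) = (\vec{q}_i, \vec{a}_i, \vec{q}'_i) \in T_i$. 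In the second (single-component) case $\vec{c} = \blk^u \vec{a}_{i'} \blk^v$ with $(\vec{q}_{i'}, \vec{a}_{i'}, \vec{q}'_{i'}) \in T_{i'}$, and $\ithel{\vec{c}} i \neq \blk$ forces $i = i'$, whence the projected triple lies in $T_i$. Conversely, for $T_i \subseteq \prod^i(T_{\mathcal{B}})$: given $(\vec{q}_i, \vec{a}_i, \vec{q}'_i) \in T_i$, fix any states $\vec{q}_k \in Q_k$ for $k \neq i$ (possible as $Q_k \ni \vec{q_0}_k$) and take $\vec{q} = \vec{q}_1 \ldots \vec{q}_n$ with $\vec{q}' $ obtained by replacing $\vec{q}_i$ by $\vec{q}'_i$; then the second clause of Definition~\ref{def:prod} applies (if no $\vec{a}_k$ in the chosen tuple is complementary to $\vec{a}_i$, which we may arrange, e.g.\ by using the reflexive transition available at each $\vec q_k$ — or, more simply, by noting that $T_{\mathcal{B}}$ is the \emph{least} such set and the second clause fires whenever no match is forced, and a match involving $i$ would itself witness the claim), giving $(\vec{q}, \vec{c}, \vec{q}') \in T_{\mathcal{B}}$ with $\ithel{\vec{c}} i = \ithel{\vec{a}_i}{} \neq \blk$, hence $(\vec{q}_i, \vec{a}_i, \vec{q}'_i) \in \prod^i(T_{\mathcal{B}})$.

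\emph{Request and offer alphabets.}
By Definition~\ref{def:proj}, $\prod^i(A^r_{\mathcal{B}}) = \{a \mid a \in A^r_{\mathcal{B}}, (q,a,q') \in \prod^i(T_{\mathcal{B}})\}$. Since $A^r_{\mathcal{B}} = \bigcup_j A^r_j$ and, by the transition identity just established together with the shape of transitions of the principal $\mathcal{A}_i$, the labels occurring in $\prod^i(T_{\mathcal{B}}) = T_i$ are exactly those of $T_i$, intersecting with $\bigcup_j A^r_j$ recovers $A^r_i$ (one uses here that the request-labels appearing in $T_i$ already lie in $A^r_i \subseteq \bigcup_j A^r_j$; if the paper's $A^r_i$ is taken to be precisely the set of request-labels used in $T_i$ this is immediate, otherwise one inspects the definition of $\prod^i(A^r)$, which filters exactly by occurrence in $\prod^i(T)$). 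The same argument applies verbatim to $A^o$. This exhausts the six components, so $\prod^i(\mathcal{B}) = \mathcal{A}_i$.
\end{proof}

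The only subtlety — and the step I expect to need the most care — is the converse transition inclusion $T_i \subseteq \prod^i(T_{\mathcal{B}})$, because $T_{\mathcal{B}}$ is defined as a least fixed point with a side condition (the second clause fires only when no complementary partner is available), so one must argue that a transition of a single principal is never ``swallowed'' by being forced into a match; in the examples (Figure~\ref{fig:automa_rank}) the presence of idle self-loops makes this transparent, and in general the least-set formulation guarantees that whenever a single-component move is not part of any match it is included, while if it were part of a match the match transition still projects onto it. The remaining components (states, initial state, final states, alphabets) are pure unfolding of Definitions~\ref{def:prod} and~\ref{def:proj} and reduce to elementary set identities, which is why the paper calls the proof immediate.
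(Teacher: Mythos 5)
Your proof is correct and is precisely the componentwise unfolding of Definitions~\ref{def:prod} and~\ref{def:proj} that the paper has in mind when it declares the proof ``immediate'' and omits it entirely; the one genuinely non-trivial point, the converse inclusion $T_i \subseteq \prod^i(T_{\mathcal B})$, is handled correctly by your observation that for any choice of companion states either the second (interleaving) clause fires or a match transition does, and in both cases the $i$-th projection of the resulting label and states recovers the original triple. The only wobble is the aside about ``reflexive transitions,'' which are not guaranteed to exist, but your fallback dichotomy makes that remark unnecessary, and your caveat about alphabet symbols unused in $T_i$ is a legitimate (if pedantic) observation about the literal statement rather than a gap in your argument.
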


Our second operation of composition first extracts from its operands the principals they are composed of, and then reassembles them.

\begin{defi}[a-Product]\label{def:aproduct}
Let  $ \mathcal{A}_1,\mathcal{A}_2$ be two contract automata of rank $n$ and $m$, respectively, and let $I=\{\prod^i(\mathcal{A}_1) \mid 0 < i \leq n\} \cup \{ \prod^j(\mathcal{A}_2) \mid 0 < j \leq m \}$.  
Then the \emph{a-product} of $\mathcal{A}_1$ and $\mathcal{A}_2$ is $ \mathcal{A}_1 \boxtimes \mathcal{A}_{2} = \bigotimes_{\mathcal{A}_i \in I} \mathcal{A}_i$.

\end{defi}
Note that if $\mathcal{A}, \mathcal{A}'$ are principal contract automata, then $\mathcal{A} \otimes \mathcal{A}' =  \mathcal{A} \boxtimes \mathcal{A}'$. From now onwards we assume that every contract automaton $\mathcal{A}$ of rank $r_\mathcal{A} > 1$ is composed by principal contract automata using the operations of product and a-product.
E.g.\ in Figure~\ref{fig:automa_rank}, we have that 
$\mathcal{A}_3 = \mathcal{A}_1 \otimes \mathcal{A}_2 = \mathcal{A}_1 \boxtimes \mathcal{A}_{2}$. Finally, both compositions are commutative, up to the expected rearrangement of the vectors of actions, and $\boxtimes$ is also associative, while $\otimes$ is not, as shown by the following example.

\begin{exa}\label{ex:non-assoc}
In Figure~\ref{fig:alicebob} Mary (the automaton in the central position) offers a toy that both Bill (at left) and John (at right) request. 
In the product $(Bill \otimes Mary) \otimes John$ the toy is assigned to Bill who first enters into the composition with Mary, no matter if John performs the same move. 
Instead, in the product $Bill \otimes (Mary \otimes John)$ the toy is assigned to John. 
In the last row we have the a-product of the three automata that represents a dynamic re-orchestration: no matter of who is first composed with Mary, the toy will be non-deterministically assigned to either principal.
\end{exa}

%
%
%
\begin{figure}[tb]
\center
\begin{tikzpicture}[->,>=stealth',shorten >=1pt,auto,node distance=2.0cm,
                    semithick, every node/.style={scale=0.5}]
  \tikzstyle{every state}=[fill=white,draw=black,text=black]

  \node[initial,state] (A)                    {${q_0}_1$};
  \node[state,accepting]         (B) [right of=A] {${q_1}_1$};

  \path (A)			edge             node{${toy}$} (B);

\end{tikzpicture}
\begin{tikzpicture}[->,>=stealth',shorten >=1pt,auto,node distance=2.0cm,
                    semithick, every node/.style={scale=0.5}]
  \tikzstyle{every state}=[fill=white,draw=black,text=black]

  \node[initial,state] (A)                    {${q_0}_2$};
  \node[state,accepting]         (B) [right of=A] {${q_1}_2$};

  \path (A)			edge             node{$\overline{toy}$} (B);
\end{tikzpicture}
\begin{tikzpicture}[->,>=stealth',shorten >=1pt,auto,node distance=2.0cm,
                    semithick, every node/.style={scale=0.5}]
  \tikzstyle{every state}=[fill=white,draw=black,text=black]

  \node[initial,state] (A)                    {${q_0}_3$};
  \node[state,accepting]         (B) [right of=A] {${q_1}_3$};

  \path (A)			edge             node{$toy$} (B);
\end{tikzpicture} \\ \vspace{5pt}
\begin{tikzpicture}[->,>=stealth',shorten >=1pt,auto,node distance=4.0cm,
                    semithick, every node/.style={scale=0.4}]
  \tikzstyle{every state}=[fill=white,draw=black,text=black]

  \node[initial,state] (A)                               {${q_0}_1,{q_0}_2,{q_0}_3$};
	\node[state] (B)[right of=A]                    {${q_1}_1,{q_1}_2,{q_0}_3$};
	\node[state] (C)[below of=A]                   {${q_0}_1,{q_0}_2,{q_1}_3$};
  \node[state,accepting] (D) [right of=C]        {${q_1}_1,{q_1}_2,{q_1}_3$};

  \path (A)			edge             node{$({toy},\overline{toy},\blk)$} (B)
		   	   			edge             node{$(\blk,\blk,toy)$} (C)
	      (B)             edge             node{$(\blk,\blk,toy)$}(D)
           (C)             edge             node{$(toy,\overline{toy},\blk)$}(D);
\end{tikzpicture} 
\begin{tikzpicture}[->,>=stealth',shorten >=1pt,auto,node distance=4.0cm,
                    semithick, every node/.style={scale=0.4}]
  \tikzstyle{every state}=[fill=white,draw=black,text=black]

  \node[initial,state] (A)                               {${q_0}_1,{q_0}_2,{q_0}_3$};
	\node[state] (B)[right of=A]                    {${q_0}_1,{q_1}_2,{q_1}_3$};
	\node[state] (C)[below of=A]                   {${q_1}_1,{q_0}_2,{q_0}_3$};
  \node[state,accepting] (D) [right of=C]        {${q_1}_1,{q_1}_2,{q_1}_3$};

  \path (A)			edge             node{$(\blk,\overline{toy},toy)$} (B)
		   	   			edge             node{$(toy,\blk,\blk)$} (C)
	      (B)             edge             node{$(toy,\blk,\blk)$}(D)
           (C)             edge             node{$(\blk, \overline{toy},toy)$}(D);
\end{tikzpicture} 
\begin{tikzpicture}[->,>=stealth',shorten >=1pt,auto,node distance=4.0cm,
                    semithick, every node/.style={scale=0.4}]
  \tikzstyle{every state}=[fill=white,draw=black,text=black]

  \node[initial,state] (A)                               {${q_0}_1,{q_0}_2,{q_0}_3$};
	\node[state] (B)[right of=A]                    {${q_1}_1,{q_1}_2,{q_0}_3$};
	\node[state] (C)[below of=A]                   {${q_0}_1,{q_1}_2,{q_1}_3$};
  \node[state,accepting] (D) [right of=C]        {${q_1}_1,{q_1}_2,{q_1}_3$};

  \path (A)			edge             node{$({toy},\overline{toy},\blk)$} (B)
		   	   			edge             node{$(\blk, \overline{toy},toy)$} (C)
	      (B)             edge             node{$(\blk,\blk,toy)$}(D)
           (C)             edge             node{$(toy, \blk,\blk)$}(D);
\end{tikzpicture}
\caption{From left to right and top-down: the principal contract automata of Bill, Mary and John, the contract automata $(Bill \otimes Mary) \otimes John$, $Bill \otimes (Mary \otimes John)$ and $Bill \boxtimes Mary \boxtimes John$.}
\label{fig:alicebob}
\end{figure}
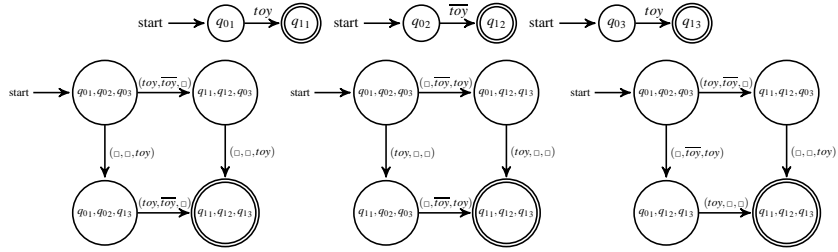
\begin{restatable}[]{prop}{proassociative}
\label{pro:associative}
The following properties hold:

-- $\exists \mathcal{A}_1, \mathcal{A}_2, \mathcal{A}_3.(\mathcal{A}_1 \otimes \mathcal{A}_2) \otimes  \mathcal{A}_3 \neq \mathcal{A}_1 \otimes (\mathcal{A}_2 \otimes  \mathcal{A}_3)$

-- $\forall \mathcal{A}_1, \mathcal{A}_2, \mathcal{A}_3. (\mathcal{A}_1 \boxtimes \mathcal{A}_2) \boxtimes \mathcal{A}_3 = \mathcal{A}_1 \boxtimes (\mathcal{A}_2 \boxtimes  \mathcal{A}_3)
$
\end{restatable}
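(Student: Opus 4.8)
The two items require entirely different arguments, so I would dispatch them one at a time. For Item~1 (non\nobreakdash-associativity of $\otimes$) nothing new is needed: Example~\ref{ex:non-assoc} and Figure~\ref{fig:alicebob} already provide a witness, taking $\mathcal{A}_1,\mathcal{A}_2,\mathcal{A}_3$ to be the principals $Bill,Mary,John$. I would simply make the discrepancy explicit at the level of transition relations: from the common initial state $({q_0}_1,{q_0}_2,{q_0}_3)$, the automaton $(Bill\otimes Mary)\otimes John$ has an outgoing transition labelled $(toy,\overline{toy},\blk)$ and none labelled $(\blk,\overline{toy},toy)$, whereas $Bill\otimes(Mary\otimes John)$ has exactly the opposite pair of outgoing transitions there (the offer $\overline{toy}$ of $Mary$ can only be consumed by whichever service is already grouped with $Mary$). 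Since the two transition relations differ, the two automata are not equal, which proves the first item.

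For Item~2 (associativity of $\boxtimes$), the key observation is that, by Definition~\ref{def:aproduct}, $\boxtimes$ first flattens both operands completely into their constituent principals and then applies the $n$-ary product $\bigotimes$ of Definition~\ref{def:prod} to the resulting \emph{set}. Writing $\mathsf{princ}(\mathcal{A})=\{\prod^i(\mathcal{A})\mid 0<i\le r_{\mathcal{A}}\}$ for that set, I would first record, as a preliminary lemma proved by a straightforward induction on the way $\mathcal{A}$ is assembled (as licensed by the standing assumption) and on Proposition~\ref{pro:product_decomposition}, two facts: (a) the $n$-ary product of Definition~\ref{def:prod} is invariant, up to the expected uniform rearrangement of the vectors of states and of actions (and of $A^r$, $A^o$, $F$), under permutations of its operand family, so that $\bigotimes_{\mathcal{B}\in I}\mathcal{B}$ is well defined for a set $I$ of principals; and (b) projection inverts composition even for composite operands, so that $\mathsf{princ}(\mathcal{A})$ is exactly the set of principals out of which $\mathcal{A}$ was built, and in particular consists of genuine principals. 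From (b) and Definition~\ref{def:aproduct} the crucial identity follows: $\mathsf{princ}(\mathcal{A}_1\boxtimes\mathcal{A}_2)=\mathsf{princ}(\mathcal{A}_1)\cup\mathsf{princ}(\mathcal{A}_2)$, because $\mathcal{A}_1\boxtimes\mathcal{A}_2=\bigotimes_{\mathcal{B}\in I}\mathcal{B}$ with $I=\mathsf{princ}(\mathcal{A}_1)\cup\mathsf{princ}(\mathcal{A}_2)$ a set of principals, and the projections of such a product range precisely over $I$ by Proposition~\ref{pro:product_decomposition}.

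With the lemma in hand the conclusion is immediate. Unfolding Definition~\ref{def:aproduct} once for the outer $\boxtimes$ and then using the crucial identity, $(\mathcal{A}_1\boxtimes\mathcal{A}_2)\boxtimes\mathcal{A}_3=\bigotimes_{\mathcal{B}\in J}\mathcal{B}$ with $J=(\mathsf{princ}(\mathcal{A}_1)\cup\mathsf{princ}(\mathcal{A}_2))\cup\mathsf{princ}(\mathcal{A}_3)$, and symmetrically $\mathcal{A}_1\boxtimes(\mathcal{A}_2\boxtimes\mathcal{A}_3)=\bigotimes_{\mathcal{B}\in J'}\mathcal{B}$ with $J'=\mathsf{princ}(\mathcal{A}_1)\cup(\mathsf{princ}(\mathcal{A}_2)\cup\mathsf{princ}(\mathcal{A}_3))$. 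Since set union is associative (and idempotent, so principals shared between operands cause no trouble) we have $J=J'$, and by part (a) of the preliminary lemma the two $n$-ary products over this common set coincide up to rearrangement, which is exactly the asserted equality.

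The main obstacle is not the final algebra --- phrased this way, associativity of $\boxtimes$ really reduces to associativity of $\cup$ --- but the preliminary lemma. One has to pin down precisely what ``$\bigotimes$ over a set'' and ``equal up to rearrangement'' mean, check that permuting the operands of the $n$-ary product permutes states, labels and final states consistently, and verify that the Product Decomposition of Proposition~\ref{pro:product_decomposition} survives arbitrary bracketings built from $\otimes$ and $\boxtimes$ (so that $\prod^i$ always returns the genuine $i$-th principal). These are routine structural inductions, but they carry all the bookkeeping that the slick argument above leaves implicit, and they are where I would expect to spend most of the effort.
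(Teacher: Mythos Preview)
Your proposal is correct and follows essentially the same approach as the paper: Item~1 is dispatched by the $Bill$/$Mary$/$John$ counterexample of Example~\ref{ex:non-assoc}, and Item~2 is reduced to associativity of set union by unfolding Definition~\ref{def:aproduct} and observing that both bracketings amount to the $n$-ary product over the same set of underlying principals. The paper's proof is a one-liner that writes $(\mathcal{A}_1\boxtimes\mathcal{A}_2)\boxtimes\mathcal{A}_3=\bigotimes_{\mathcal{A}_i\in I}\mathcal{A}_i=\mathcal{A}_1\boxtimes(\mathcal{A}_2\boxtimes\mathcal{A}_3)$ and leaves entirely implicit the bookkeeping you single out (well-definedness of $\bigotimes$ over a set, and that projection recovers the original principals through nested compositions); your version simply makes that bookkeeping explicit.
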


\section{Enforcing Agreement}
\label{sect:agreement}
It is common to say that some contracts are in agreement when all the requests they make  have been fulfilled by corresponding offers~\cite{Cast2009ACM,Cast2009CONCUR,Bordeaux2005, 
LaneveP2015,
Barbanera2015,
BDLL15,
deNicolaHennessy,
Bernardi2014,
Bartoletti2015, 
BartSci2013}.
In terms of contract automata, this is rendered in two different ways, the first of which is introduced below and resembles the notion of compliance introduced in~\cite{Cast2009ACM,Cast2009CONCUR}.
We say that two or more contract automata are in \emph{agreement} when the final states of their product are reachable from the initial state by traces only made of matches and offer actions. 
Our goal is to enforce the behaviour of principals so that they only follow the traces of the automaton which lead to agreement. 
Additionally, it is easy to track every action performed by each principal, because we use vectors of actions as the elements of the alphabet of contract automata.
It is equally easy finding who is liable in a bad interaction, i.e.\ the principals who perform a transition 
leaving a state from which agreement is possible, reaching a state where instead agreement is no longer possible.

We now introduce the notion of \emph{agreement} as a property of the language recognised by a contract automaton.

\begin{defi}[Agreement]
A trace accepted by a contract automaton is in \emph{agreement} if it belongs to the set 
\[\mathfrak{A}=\{w  \in ( {\Sigma^{n}})^* \mid Obs(w) \in (\Oset \cup \{\tau\})^*,n>1\}\]
\end{defi}

\noindent
Note that, if an action observable in $w$ is a request, i.e.\ it belongs to $\Rset$, then $w$ is not in agreement. 
 Intuitively, a trace is in agreement if it only contains offer and match actions, i.e.\ if no requests are left unsatisfied.

\begin{exa}
The automaton $\mathcal{A}_3$ in Figure~\ref{fig:automa_rank} has a trace in agreement:
$Obs((\overline{res},\blk)(sig,\overline{sig})) = \overline{res}\,\tau \in \mathfrak{A}$, and one not in agreement:  $Obs( (sig,\overline{sig}) (\blk, res)) = \tau\, res \not \in \mathfrak{A}$.
\end{exa}

A contract automaton is safe when all the traces of its language are in agreement, and admits agreement when at least one 
of its traces is in agreement.
Formally:

\begin{defi}[Safety]
A contract automaton $\mathcal{A}$  is \emph{safe} if $\mathscr{L}(\mathcal{A})\subseteq \mathfrak{A}$, otherwise it is \emph{unsafe}. 

Additionally, if $\mathscr{L}(\mathcal{A}) \cap \mathfrak{A} \neq \emptyset$ then $\mathcal{A}$ \emph{admits agreement}.
\end{defi}

\begin{exa}
The contract automaton $\mathcal{A}_3$ of Figure~\ref{fig:automa_rank} is unsafe, but it admits agreement since $\mathscr{L}(\mathcal{A}_3) \cap \mathfrak{A} =   (\overline{res},\blk)^*(sig,\overline{sig})$. 
Consider now the contract automata $Bill$ and $Mary$ in Figure~\ref{fig:alicebob};
their product $Bill \otimes Mary$ is safe because $\mathscr{L}(Bill \otimes Mary)= (toy,\overline{toy}) \subset \mathfrak{A}$.
\end{exa}
Note that the set $\mathfrak{A}$ can be seen as a safety property in the default-accept approach~\cite{schneider2000enforceable}, where the set of bad prefixes of $\mathfrak A$ contains those traces ending with a trailing request,  i.e. $\{w\vec{a} \mid w \in \mathfrak{A}, Obs(\vec{a}) \in \Rset  \}$.
One could then consider a definition of product that disallows the occurrence of transitions labelled by requests only.
However, this choice would not prevent a product of contracts to reach a deadlock.
In addition, compositionality would have been compromised, as shown in the following example.

\begin{exa}\label{ex:selling}
In what follows, we feel free to present contract automata through a sort of extended regular expressions. 
Consider a simple selling scenario involving two parties $Ann$  and $Bart$. 

Bart starts by notifying Ann that he is ready to start the negotiation, and waits from Ann to select a pen or a book.
In case  Ann selects the pen, he may decide
to withdraw and restart the negotiation again, or to accept the payment. 
As soon as  Ann selects the book, then Bart cannot
withdraw any longer, and waits for the payment.
The contract of $Bart$ is:
\[
Bart = (\overline{init}.pen.\overline{cancel})^*.
  (\overline{init}.book.pay + \overline{init}.pen.pay)
\]
The contract of Ann is dual to Bart's. 
Ann waits to receive a notification from Bart when ready to negotiate.
Then Ann decides what to buy.
If she chooses the pen, she may proceed with the payment unless a withdrawal from Bart is received.
In this case, Ann can repeatedly try to get the pen, until she succeeds and pays for it, or
buys the book but omits to pay it (violating the contract $Ann \otimes Bart$ resulting from the orchestration, see below).

The contract of $Ann$ is:
\[
Ann =(init.\overline{pen}.cancel)^*.(init.\overline{pen}.\overline{pay} + 
                                                            init.\overline{book})
\]
The contract $\mathcal A = Ann \otimes Bart$ is in  Figure~\ref{fig:controller} top. 
%
%
Assume now to change the product $\otimes$ so to disallow transitions labelled by requests.
The composition of $Ann$ and $Bart$ is in Figure~\ref{fig:controller}, bottom right part, and contains the malformed trace in which $Bart$ does not reach a final state: 
\[
(init,\overline{init})(\overline{book},book)
\]
%
In addition, if a third principal $Carol=\overline{pay}$ were involved, willing to pay for everybody, the following trace in agreement would not be accepted
\[
(init,\overline{init},\blk)(\overline{book},book,\blk)(\blk,pay,\overline{pay})
\]
because Bart's request was discarded by the wrongly amended composition operator.
So, compositionality would be lost.

\end{exa}

To avoid the two unpleasant situations of deadlock and lack of compositionality, we introduce below a technique for driving a safe composition of contracts, in the style of the Supervisory Control for Discrete Event Systems~\cite{Cassandras2006}.

A discrete event system is a finite state automaton, where \emph{accepting} states represent the successful termination of a task, while \emph{forbidden} states should never be traversed in ``good'' computations.
Generally, the purpose of supervisory control theory is to synthesise a controller that enforces good computations.
%
To do so, this theory distinguishes between
\emph{controllable} events (those the controller can disable) and \emph{uncontrollable} events (those always enabled), besides partitioning events into \emph{observable} and \emph{unobservable} (obviously uncontrollable).
If all events are observable then a most  permissive controller exists that never blocks a good computation~\cite{Cassandras2006}.

The purpose of contracts is to declare all the activities of a principal in terms of requests and offers. Therefore all the actions 
of a (composed) contract are controllable and observable.
Clearly, the behaviours that we want to enforce upon a given contract automaton $\mathcal{A}$ are exactly the traces in agreement, and so  we assume that a request leads to a forbidden state.
A most permissive controller exists for contract automata and is defined below.

\begin{defi}[Controller]
Let $\mathcal{A}$ and $\mathcal{K}$ be contract automata, we call $\mathcal{K}$ \emph{controller} of $\mathcal{A}$ if and only if
$\mathscr{L}(\mathcal{K}) \subseteq \mathfrak{A} \cap \mathscr{L}(\mathcal{A})$.\\
A controller $\mathcal{K}$ of $\mathcal{A}$ is the \emph{most permissive controller (mpc)} if and only if  for all $\mathcal{K}'$ controller of $\mathcal{A}$ it is $\mathscr{L}(\mathcal{K}') \subseteq \mathscr{L}(\mathcal{K})$.
\end{defi}

Since the most permissive controller eliminates the traces not in agreement, the following holds.

\begin{restatable}[]{prop}{propmpca}\label{prop:mpc_A}
Let $\mathcal{K}$ be the mpc of the contract automaton $\mathcal{A}$, then $\mathscr{L}(\mathcal{K})=\mathfrak{A} \cap \mathscr{L}(\mathcal{A})$.
 \end{restatable}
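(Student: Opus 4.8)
The plan is to establish the set equality $\mathscr{L}(\mathcal{K}) = \mathfrak{A} \cap \mathscr{L}(\mathcal{A})$ by proving the two inclusions separately. The inclusion $\mathscr{L}(\mathcal{K}) \subseteq \mathfrak{A} \cap \mathscr{L}(\mathcal{A})$ is immediate: it holds by the very definition of a controller, and in particular for the most permissive one. So the real content is the reverse inclusion $\mathfrak{A} \cap \mathscr{L}(\mathcal{A}) \subseteq \mathscr{L}(\mathcal{K})$.

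For the reverse inclusion I would argue by exhibiting a concrete controller whose language is exactly $\mathfrak{A} \cap \mathscr{L}(\mathcal{A})$, and then invoke maximality of the mpc. Concretely, first I would observe that $\mathfrak{A} \cap \mathscr{L}(\mathcal{A})$ is the language of a contract automaton $\mathcal{A}'$ obtained from $\mathcal{A}$ by a standard pruning construction: remove every transition labelled by a request action, and then restrict to the states that are both reachable from $\vec{q_0}$ and co-reachable to a final state (equivalently, intersect $\mathcal{A}$ with the automaton recognising $\mathfrak{A}$ and trim). One must check that the result is still a well-formed contract automaton in the sense of Definition~\ref{def:contract} — the alphabet is a subset of the original one, the idle-move condition on transitions is inherited, and the rank is unchanged — and that its language is precisely the set of traces of $\mathcal{A}$ whose observable lies in $(\Oset \cup \{\tau\})^*$, i.e.\ $\mathfrak{A} \cap \mathscr{L}(\mathcal{A})$. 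Since by construction $\mathscr{L}(\mathcal{A}') = \mathfrak{A} \cap \mathscr{L}(\mathcal{A}) \subseteq \mathfrak{A} \cap \mathscr{L}(\mathcal{A})$, the automaton $\mathcal{A}'$ is a controller of $\mathcal{A}$.

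Now let $\mathcal{K}$ be the mpc of $\mathcal{A}$. On one hand, since $\mathcal{A}'$ is a controller, maximality gives $\mathscr{L}(\mathcal{A}') \subseteq \mathscr{L}(\mathcal{K})$, hence $\mathfrak{A} \cap \mathscr{L}(\mathcal{A}) \subseteq \mathscr{L}(\mathcal{K})$. On the other hand, $\mathscr{L}(\mathcal{K}) \subseteq \mathfrak{A} \cap \mathscr{L}(\mathcal{A})$ holds because $\mathcal{K}$ is a controller. Combining the two inclusions yields the claimed equality. (Implicitly this also shows the mpc is well defined and unique up to language equivalence, since $\mathcal{A}'$ realises the maximal candidate language.)

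The main obstacle I anticipate is not the logical skeleton above, which is essentially a pigeonhole between the definitions, but the bookkeeping needed to verify that the pruned-and-trimmed automaton $\mathcal{A}'$ is genuinely a contract automaton and that its accepted language equals $\mathfrak{A} \cap \mathscr{L}(\mathcal{A})$ exactly — in particular checking that deleting request transitions does not accidentally alter which match/offer traces survive, and that trimming to co-reachable states does not drop any trace that is in $\mathfrak{A} \cap \mathscr{L}(\mathcal{A})$. This is the step where the characterisation of $\mathfrak{A}$ via $Obs$ (an observable containing any element of $\Rset$ disqualifies the trace) must be used carefully, since it is the shape of each transition label, not merely the reachable states, that determines membership in $\mathfrak{A}$.
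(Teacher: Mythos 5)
Your proposal is correct and follows essentially the same route as the paper: the inclusion $\mathscr{L}(\mathcal{K})\subseteq\mathfrak{A}\cap\mathscr{L}(\mathcal{A})$ is read off the definition of controller, and the reverse inclusion comes from exhibiting a controller whose language is exactly $\mathfrak{A}\cap\mathscr{L}(\mathcal{A})$ and invoking maximality (the paper phrases this as a one-line contradiction, appealing to the standard supervisory-control existence result since all actions are controllable and observable and both languages are regular). The explicit prune-and-trim witness you describe is precisely the construction the paper gives separately as Definition~\ref{def:mpc} and verifies in Proposition~\ref{pro:controller}, so your argument just folds that later material into the proof of this proposition.
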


In order to effectively build the most permissive controller, we introduce below the notion of hanged state, i.e.\ a state from which no final state can be reached.

\begin{defi}[Hanged state]
Let  $\mathcal{A}=\langle Q, \vec{q_0}, A^{r}, A^{o}, T, F \rangle$ be a contract automaton, then $\vec{q} \in Q$ is \emph{hanged}, and belongs to the set $Hanged(\mathcal{A})$, if for all
$\vec q_f \in F, \nexists w.(w,\vec{q}) \rightarrow^*(\epsilon,\vec{q_f})$.
\end{defi}

\begin{defi}[Mpc construction]\label{def:mpc}
Let  $\mathcal{A}=\langle Q, \vec{q_0}, A^{r}, A^{o}, T, F \rangle$ be a contract automaton,\linebreak $\mathcal{K}_1=\langle Q, \vec{q_0}, A^{r}, A^{o},  T \setminus (\{t \in T \mid t$ is a request transition $\}, F \rangle$ and define
\[
\mathcal{K_A}=\langle Q \setminus Hanged(\mathcal{K}_1), \vec{q_0}, A^{r}, A^{o}, T_{\mathcal{K}_1} \setminus \{(\vec{q},a,\vec{q'}) \mid \{\vec{q},\vec{q'}\}\cap Hanged(\mathcal{K}_1) \neq \emptyset \}, F \rangle
\]

\end{defi}
\begin{restatable}[Mpc]{prop}{procontroller}
\label{pro:controller}
The controller $\mathcal K_A$ of Definition~\ref{def:mpc} is the most permissive controller of the contract automaton $\mathcal A$.
\end{restatable}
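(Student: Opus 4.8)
The plan is to prove two inclusions: $\mathscr{L}(\mathcal{K_A}) \subseteq \mathfrak{A} \cap \mathscr{L}(\mathcal{A})$ (so that $\mathcal{K_A}$ is a controller), and that for every controller $\mathcal{K}'$ of $\mathcal{A}$ we have $\mathscr{L}(\mathcal{K}') \subseteq \mathscr{L}(\mathcal{K_A})$ (so that it is the most permissive one). By Proposition~\ref{prop:mpc_A} the second part is equivalent to showing $\mathfrak{A} \cap \mathscr{L}(\mathcal{A}) \subseteq \mathscr{L}(\mathcal{K_A})$, so the whole statement reduces to the single set equality $\mathscr{L}(\mathcal{K_A}) = \mathfrak{A} \cap \mathscr{L}(\mathcal{A})$, which I would establish directly by double inclusion on accepted traces.

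For the inclusion $\mathscr{L}(\mathcal{K_A}) \subseteq \mathfrak{A} \cap \mathscr{L}(\mathcal{A})$: first note $\mathcal{K}_1$ is obtained from $\mathcal{A}$ only by removing transitions and $\mathcal{K_A}$ only by further removing transitions and (unreachable-from/to-final) states, so $\mathscr{L}(\mathcal{K_A}) \subseteq \mathscr{L}(\mathcal{K}_1) \subseteq \mathscr{L}(\mathcal{A})$. Then I would argue $\mathscr{L}(\mathcal{K}_1) \subseteq \mathfrak{A}$: take any $w \in \mathscr{L}(\mathcal{K}_1)$; by Definition~\ref{def:contract} $w = \vec{a_1}\ldots\vec{a_k}$ with each $(\vec{q_{i-1}},\vec{a_i},\vec{q_i}) \in T_{\mathcal{K}_1}$, and since $\mathcal{K}_1$ has no request transitions, each $\vec{a_i}$ is an offer action or a match action; hence by the definition of $Obs$, $Obs(w) \in (\Oset \cup \{\tau\})^*$, and since ranks are preserved by the construction, $w \in \mathfrak{A}$ provided the rank is $>1$ (which holds for any composed automaton; for rank~$1$ the only traces in $\mathscr{L}(\mathcal{K_A})$ would already be offer-only, and this degenerate case can be handled or excluded by the standing assumption that composed automata have rank $>1$). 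Removing states in passing to $\mathcal{K_A}$ only shrinks the language further, so $\mathscr{L}(\mathcal{K_A}) \subseteq \mathfrak{A} \cap \mathscr{L}(\mathcal{A})$.

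For the reverse inclusion $\mathfrak{A} \cap \mathscr{L}(\mathcal{A}) \subseteq \mathscr{L}(\mathcal{K_A})$: take $w \in \mathfrak{A} \cap \mathscr{L}(\mathcal{A})$, so there is an accepting run $(\vec{q_0},\ldots,\vec{q_k})$ of $\mathcal{A}$ on $w$ with $\vec{q_k} \in F$. Since $w \in \mathfrak{A}$, $Obs(w) \in (\Oset \cup \{\tau\})^*$ contains no request symbol, which forces every action $\vec{a_i}$ along this run to be an offer or a match (if some $\vec{a_i}$ were a request on $\alpha \in \Rset$, then $Obs$ would emit $\alpha \in \Rset$, contradiction). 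Hence none of these transitions is a request transition, so they all survive in $T_{\mathcal{K}_1}$, and the run is a run of $\mathcal{K}_1$. Moreover every state $\vec{q_i}$ on this run can reach the final state $\vec{q_k}$ within $\mathcal{K}_1$ along the suffix of the run, so no $\vec{q_i}$ is in $Hanged(\mathcal{K}_1)$; therefore all these states and transitions survive in $\mathcal{K_A}$, the run is an accepting run of $\mathcal{K_A}$, and $w \in \mathscr{L}(\mathcal{K_A})$.

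The main obstacle, and the step deserving the most care, is the $Hanged$-state argument in the reverse inclusion: I must be sure that a state lying on an accepting path of $\mathcal{A}$ whose every edge is offer/match is \emph{not} hanged in $\mathcal{K}_1$ — this is true precisely because the relevant accepting continuation uses only non-request transitions and hence still exists in $\mathcal{K}_1$, but one has to phrase it carefully since $Hanged$ is defined relative to $\mathcal{K}_1$, not $\mathcal{A}$. A secondary subtlety is that removing hanged states could in principle make some previously-good state newly unreachable or newly hanged; however since we delete \emph{all} hanged states of $\mathcal{K}_1$ in one shot and simultaneously delete exactly the transitions touching them, the restriction of any accepting non-request run of $\mathcal{K}_1$ stays intact in $\mathcal{K_A}$, so no iteration is needed. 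I would also remark that maximality (the mpc property) follows because any controller $\mathcal{K}'$ satisfies $\mathscr{L}(\mathcal{K}') \subseteq \mathfrak{A} \cap \mathscr{L}(\mathcal{A}) = \mathscr{L}(\mathcal{K_A})$ by the equality just proved together with Proposition~\ref{prop:mpc_A}.
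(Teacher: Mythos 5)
Your proof is correct and follows essentially the same route as the paper: both establish the set equality $\mathscr{L}(\mathcal{K_A}) = \mathfrak{A} \cap \mathscr{L}(\mathcal{A})$ by observing that an accepting path of a trace in agreement uses only offer/match transitions (so it survives the removal of requests) and visits no hanged states (since each of its states reaches the final state along the remaining suffix), with maximality then immediate from the definition of controller. The only difference is cosmetic — you argue the reverse inclusion directly where the paper argues by contradiction — and your explicit care that $Hanged$ is taken with respect to $\mathcal{K}_1$ is, if anything, slightly more precise than the paper's wording.
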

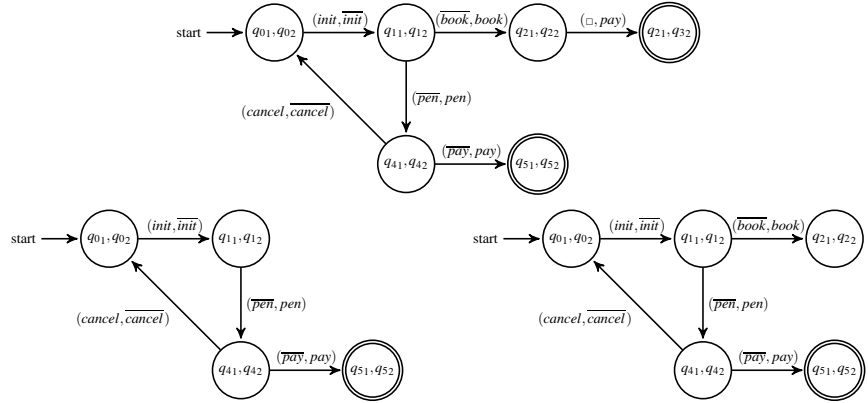
\begin{figure}[tb]
\center
\begin{tikzpicture}[->,>=stealth',shorten >=1pt,auto,node distance=3.5cm,
                    semithick, every node/.style={scale=0.5}]
  \tikzstyle{every state}=[fill=white,draw=black,text=black]

  \node[initial,state] (A)                    {${q_0}_1,{q_0}_2$};
  \node[state]         (B) [right of=A] {${q_1}_1,{q_1}_2$};
  \node[state]	     (C) [right of=B] {${q_2}_1,{q_2}_2$};
  \node[state]		(D)  [below of=B]{${q_4}_1,{q_4}_2$};
  \node[state,accepting]		(F)   [right of=D]{${q_5}_1,{q_5}_2$}; 
  \node[state,accepting]		(E)	  [right  of=C]{${q_2}_1,{q_3}_2$};

  \path (A)			edge             node{$(init,\overline{init})$} (B)
		(B) 			edge             node {$(\overline{book},book)$} (C)
						edge			node {$(\overline{pen},pen)$}(D)
		(C)				edge			node {$(\blk,pay)$}(E)
		(D)				edge			node{$(cancel,\overline{cancel})$}(A)
						edge			node{$(\overline{pay},pay)$}(F);
\end{tikzpicture}

\medskip

\begin{tikzpicture}[->,>=stealth',shorten >=1pt,auto,node distance=3.5cm,
                    semithick, every node/.style={scale=0.5}]
  \tikzstyle{every state}=[fill=white,draw=black,text=black]

  \node[initial,state] (A)                    {${q_0}_1,{q_0}_2$};
  \node[state]         (B) [right of=A] {${q_1}_1,{q_1}_2$};
  \node[state]		(D)  [below of=B]{${q_4}_1,{q_4}_2$};
  \node[state,accepting]		(F)   [right of=D]{${q_5}_1,{q_5}_2$};		

  \path (A)			edge             node{$(init,\overline{init})$} (B)
		(B) 			edge             node {$(\overline{pen},pen)$} (D)
		(D)				edge			node{$(cancel,\overline{cancel})$}(A)
						edge			node{$(\overline{pay},pay)$}(F);
\end{tikzpicture}
\qquad
\begin{tikzpicture}[->,>=stealth',shorten >=1pt,auto,node distance=3.5cm,
                    semithick, every node/.style={scale=0.5}]
  \tikzstyle{every state}=[fill=white,draw=black,text=black]

  \node[initial,state] (A)                    {${q_0}_1,{q_0}_2$};
  \node[state]         (B) [right of=A] {${q_1}_1,{q_1}_2$};
  \node[state]	     (C) [right of=B] {${q_2}_1,{q_2}_2$};
  \node[state]		(D)  [below of=B]{${q_4}_1,{q_4}_2$};
  \node[state,accepting]		(F)   [right of=D]{${q_5}_1,{q_5}_2$};

  \path (A)			edge             node{$(init,\overline{init})$} (B)
		(B) 			edge             node {$(\overline{book},book)$} (C)
						edge			node {$(\overline{pen},pen)$}(D)
		(D)				edge			node{$(cancel,\overline{cancel})$}(A)
						edge			node{$(\overline{pay},pay)$}(F);
\end{tikzpicture}
%
%
\caption{The contract automata of Example~\ref{ex:selling}: top the contract automaton $\mathcal{A}$; bottom left its most permissive controller $\mathcal K_\mathcal{A}$, bottom right an automaton obtained with an inaccurate filtering composition.}
\label{fig:controller}
\end{figure}

\begin{exa}

Consider again Example~\ref{ex:selling}.  For obtaining the most permissive controller
 we first compute the auxiliary set $\mathcal{K}_1$ that does not contain the transition 
$(({q_2}_1,{q_2}_2),(\blk,pay), ({q_2}_1,{q_3}_2))$ because it represents a request 
from Bart which is not fulfilled by Ann.
As a consequence, some states are hanged:
\[
Hanged(\mathcal{K}_1) =\{({q_2}_1,{q_2}_2)\}
\]
By removing them, we eventually obtain  $\mathcal K_\mathcal{A}$, the most permissive controller 
of $\mathcal{A}$ depicted in Figure~\ref{fig:controller}, bottom left part.
%
\end{exa}



The following proposition rephrases the notions of safe, unsafe and admits agreement on automata
in terms of their most permissive controllers.
\begin{restatable}[]{prop}{propksafety}\label{prop:Ksafety}
Let $\mathcal{A}$ be a contract automaton and let $\mathcal K_{\mathcal{A}}$ be its mpc, the following hold:
\begin{itemize}
\item if $\mathscr{L}(\mathcal K_{\mathcal{A}})=\mathscr{L}(\mathcal{A})$ then $\mathcal{A}$ is safe, otherwise if $\mathscr{L}(\mathcal K_{\mathcal{A}})\subset \mathscr{L}(\mathcal{A})$ then $\mathcal{A}$ is unsafe; 
\item  if $\mathscr{L}(\mathcal K_{\mathcal{A}})\neq\emptyset$, then $\mathcal{A}$ admits agreement.
\end{itemize}
\end{restatable}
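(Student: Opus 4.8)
The plan is to obtain both items as immediate corollaries of Proposition~\ref{prop:mpc_A}, which already pins down the language of the most permissive controller: $\mathscr{L}(\mathcal{K}_{\mathcal{A}}) = \mathfrak{A} \cap \mathscr{L}(\mathcal{A})$. A preliminary observation makes the case analysis of the first item tidy: from this identity the inclusion $\mathscr{L}(\mathcal{K}_{\mathcal{A}}) \subseteq \mathscr{L}(\mathcal{A})$ holds unconditionally, so the two hypotheses $\mathscr{L}(\mathcal{K}_{\mathcal{A}}) = \mathscr{L}(\mathcal{A})$ and $\mathscr{L}(\mathcal{K}_{\mathcal{A}}) \subset \mathscr{L}(\mathcal{A})$ are exhaustive and mutually exclusive; hence it is enough to prove a single equivalence and read off the ``otherwise'' branch as its contrapositive.

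For the first item I would argue as follows. By Proposition~\ref{prop:mpc_A}, $\mathscr{L}(\mathcal{K}_{\mathcal{A}}) = \mathscr{L}(\mathcal{A})$ iff $\mathscr{L}(\mathcal{A}) \subseteq \mathfrak{A}$, since $\mathfrak{A} \cap \mathscr{L}(\mathcal{A}) = \mathscr{L}(\mathcal{A})$ exactly when $\mathscr{L}(\mathcal{A}) \subseteq \mathfrak{A}$; and $\mathscr{L}(\mathcal{A}) \subseteq \mathfrak{A}$ is by definition the statement that $\mathcal{A}$ is safe. Conversely, if the inclusion is strict, i.e. $\mathscr{L}(\mathcal{K}_{\mathcal{A}}) = \mathfrak{A} \cap \mathscr{L}(\mathcal{A}) \subsetneq \mathscr{L}(\mathcal{A})$, then there is an accepted trace $w \in \mathscr{L}(\mathcal{A})$ with $w \notin \mathfrak{A}$, so $\mathscr{L}(\mathcal{A}) \not\subseteq \mathfrak{A}$ and $\mathcal{A}$ is unsafe by definition. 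For the second item, again by Proposition~\ref{prop:mpc_A} we have $\mathscr{L}(\mathcal{K}_{\mathcal{A}}) \neq \emptyset$ iff $\mathfrak{A} \cap \mathscr{L}(\mathcal{A}) \neq \emptyset$, which is precisely the defining condition for $\mathcal{A}$ to admit agreement.

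There is essentially no obstacle here: the proposition is a reformulation of the definitions of \emph{safe}, \emph{unsafe} and \emph{admits agreement} in terms of the already-computed language $\mathfrak{A} \cap \mathscr{L}(\mathcal{A})$, and the only thing to keep in mind is that Proposition~\ref{prop:mpc_A} is available, which in turn rests on Proposition~\ref{pro:controller} identifying $\mathcal{K}_{\mathcal{A}}$ of Definition~\ref{def:mpc} as the mpc. If one wished to avoid citing Proposition~\ref{prop:mpc_A} and argue directly, the inclusion $\mathscr{L}(\mathcal{K}_{\mathcal{A}}) \subseteq \mathfrak{A} \cap \mathscr{L}(\mathcal{A})$ is immediate from the definition of controller, while the reverse inclusion follows from maximality of the mpc (Proposition~\ref{pro:controller}), since $\mathfrak{A} \cap \mathscr{L}(\mathcal{A})$ is itself the language of a controller of $\mathcal{A}$; but invoking the earlier proposition is the cleanest route.
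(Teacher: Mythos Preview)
Your proposal is correct and follows exactly the route the paper intends: the paper does not spell out a separate proof for this proposition (it is not restated in the appendix), treating it as an immediate consequence of Proposition~\ref{prop:mpc_A}, which is precisely what you do. Your observation that $\mathscr{L}(\mathcal{K}_{\mathcal{A}})\subseteq\mathscr{L}(\mathcal{A})$ makes the two cases of the first item exhaustive, together with the direct unpacking of $\mathscr{L}(\mathcal{K}_{\mathcal{A}})=\mathfrak{A}\cap\mathscr{L}(\mathcal{A})$ against the definitions of safe, unsafe, and admits agreement, is the intended argument.
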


We introduce now an original notion of \emph{liability}, that characterises those principals potentially responsible of the divergence from the behaviour in agreement. 
The liable principals are those who perform the first transition in a run, that is not possible in the most permissive controller. 
As noticed above, after this step is done, a successful state cannot be reached any longer, and so the principals who performed it will be blamed.
Note in passing that hanged states play a crucial role here: just removing the request transitions from $\mathcal{A}$ would result in a contract automaton language equivalent to the mpc, but detecting liable principals would be much more intricate.

\begin{defi}[Liability]
\label{def:culpability}
Let ${\mathcal A}$ be a contract automaton and $\mathcal K_\mathcal{A}$ be its mpc of Definition~\ref{def:mpc}; 
let \hbox{$(v\vec{a}w, \vec{q_0}) \rightarrow^*(\vec{a}w, \vec{q})$} be a run of both automata and let $\vec q$ be such that $(\vec{a}w, \vec{q}) \rightarrow (w,\vec{q'})$ is possible in $\mathcal A$ but not in $\mathcal K_{\mathcal{A}}$.
The principals  $\Pi^i(\mathcal{A})$ such that $\ithel{\vec{a}} i  \neq \blk, i \in 1 \ldots r_{\mathcal{A}}$ are \emph{liable for $\vec{a}$} and belong to $Liable(\mathcal{A},v\vec{a}w)$.
Then, the set of \emph{liable} principals in $\mathcal{A}$ is 
$Liable(\mathcal{A})=\{i \mid \exists w.i \in Liable(\mathcal{A},w)\}$.
\end{defi}


\begin{exa}
In Figure \ref{fig:controller}, bottom left, we have $Liable(\mathcal{A}) = \{1,2\}$, hence both Ann and Bart are possibly liable, because the match transition with label $(\overline{book},book)$ can be performed, that leads to a violation of the agreement.
\end{exa}

The following proposition is immediate.

\begin{restatable}{prop}{prosafeliable}\label{pro:safe_liable}
A contract automaton $\mathcal{A}$ is safe if and only if $Liable(\mathcal{A})=\emptyset$.
\end{restatable}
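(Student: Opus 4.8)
The plan is to prove both directions of the biconditional by connecting the definition of liability to the relationship between $\mathscr{L}(\mathcal K_{\mathcal A})$ and $\mathscr{L}(\mathcal A)$, exploiting Proposition~\ref{prop:Ksafety} and Proposition~\ref{prop:mpc_A}.

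\medskip
\noindent\emph{Proof.}
($\Leftarrow$) Suppose $\mathcal A$ is not safe; we show $Liable(\mathcal A) \neq \emptyset$. By Proposition~\ref{prop:Ksafety}, $\mathscr{L}(\mathcal K_{\mathcal A}) \subsetneq \mathscr{L}(\mathcal A)$. Pick $w \in \mathscr{L}(\mathcal A) \setminus \mathscr{L}(\mathcal K_{\mathcal A})$, witnessed by an accepting run $(w,\vec{q_0}) \rightarrow^* (\epsilon, \vec q_f)$ in $\mathcal A$ with $\vec q_f \in F$. Since both automata share the same initial state and $\mathcal K_{\mathcal A}$ is obtained from $\mathcal A$ only by deleting request transitions and hanged states (Definition~\ref{def:mpc}), consider the longest prefix $v$ of $w$ such that the corresponding partial run $(w,\vec{q_0}) \rightarrow^* (\vec a u, \vec q)$ with $w = v\vec a u$ is also a run of $\mathcal K_{\mathcal A}$. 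Such a $v$ exists (possibly empty), and by maximality the next step $(\vec a u, \vec q) \rightarrow (u, \vec{q'})$ is possible in $\mathcal A$ but not in $\mathcal K_{\mathcal A}$. Hence by Definition~\ref{def:culpability} every principal $i$ with $\ithel{\vec a} i \neq \blk$ lies in $Liable(\mathcal A, v\vec a u)$, and since $\vec a$ is an action it has at least one non-$\blk$ component, so $Liable(\mathcal A) \neq \emptyset$.

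\medskip
\noindent
($\Rightarrow$) Conversely, suppose $Liable(\mathcal A) \neq \emptyset$. Then by Definition~\ref{def:culpability} there is a word $v\vec a w$ and a run $(v\vec a w, \vec{q_0}) \rightarrow^* (\vec a w, \vec q)$ of both $\mathcal A$ and $\mathcal K_{\mathcal A}$ such that $(\vec a w, \vec q) \rightarrow (w, \vec{q'})$ is possible in $\mathcal A$ but not in $\mathcal K_{\mathcal A}$. The step is present in $\mathcal A$ but absent from $\mathcal K_{\mathcal A}$; by Definition~\ref{def:mpc} this can only happen because $\vec a$ is a request action, or because $\vec q$ or $\vec{q'}$ is hanged in $\mathcal K_1$. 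In the first case, the transition on $\vec a$ has a request as its observable, so any trace of $\mathcal A$ that traverses it and reaches a final state has an observable containing a symbol of $\Rset$, hence is not in $\mathfrak A$; such a trace exists since $\vec q$ is reachable and, being reachable in $\mathcal K_{\mathcal A}$, is not hanged in $\mathcal K_1$, and $\vec{q'}$ reaches a final state in $\mathcal A$ (otherwise the step would already be excluded for a different reason, handled next). In the second case, $\vec q$ reachable in $\mathcal K_{\mathcal A}$ means $\vec q \notin Hanged(\mathcal K_1)$, so it must be $\vec{q'} \in Hanged(\mathcal K_1)$; but then in $\mathcal A$ — whose transitions include the deleted request transitions — from $\vec{q'}$ one can still reach a final state only by eventually firing a request transition, again yielding an accepted trace with a request in its observable. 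Either way $\mathscr{L}(\mathcal A) \not\subseteq \mathfrak A$, so $\mathcal A$ is unsafe. \qed

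\medskip
\noindent
The main obstacle I anticipate is the careful bookkeeping in the $(\Rightarrow)$ direction: distinguishing the two reasons a transition can disappear in passing from $\mathcal A$ to $\mathcal K_{\mathcal A}$ (removal of request transitions versus removal of hanged states), and arguing in the ``hanged'' case that reachability of $\vec{q'}$ to a final state \emph{in $\mathcal A$} forces the existence of an accepted trace whose observable contains a request — using that $\mathcal K_1$ differs from $\mathcal A$ only by request transitions, so any path from a $\mathcal K_1$-hanged state to $F$ in $\mathcal A$ must use such a transition. The $(\Leftarrow)$ direction is essentially a ``first point of divergence'' argument and should be routine once Proposition~\ref{prop:Ksafety} is invoked.
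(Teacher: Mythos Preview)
The paper gives no argument for this proposition (it is simply declared ``immediate''), so there is nothing to compare against at the level of strategy. Your first-divergence argument for the ($\Leftarrow$) direction is fine.

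In the ($\Rightarrow$) direction, however, there is a genuine gap. In your second case you assert that ``from $\vec{q'}$ one can still reach a final state'' in $\mathcal A$, and in the first case you justify the existence of a bad accepted trace by saying $\vec{q'}$ reaches a final state ``otherwise the step would already be excluded for a different reason''. Neither is warranted by the definitions: a request transition is removed in $\mathcal K_1$ regardless of whether its target can reach $F$, and nothing said so far forces $\vec{q'}$ to be co-reachable in $\mathcal A$. Without the (standard) assumption that $\mathcal A$ is \emph{trim}, the proposition is in fact false. For instance, take the product of a principal with transitions $p_0 \xrightarrow{\overline a} p_1$ (final) and $p_0 \xrightarrow{b} p_2$ (dead end) with a trivial second principal: the only accepted word is the offer $(\overline a,\blk)$, so the automaton is safe, yet the request step $(\vec{q_0},(b,\blk),\vec{q_2})$ is present in $\mathcal A$, absent from $\mathcal K_{\mathcal A}$, and makes principal~$1$ liable. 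The paper states the trim assumption explicitly only later (at the start of Section~\ref{sect:weak-agreement}), but it is tacitly needed here; once you assume every state of $\mathcal A$ reaches $F$, your case analysis goes through cleanly --- any accepted trace through $\vec q$ and $\vec{q'}$ must contain a request, either because $\vec a$ itself is one or because $\vec{q'}\in Hanged(\mathcal K_1)$ forces any $\mathcal A$-path from $\vec{q'}$ to $F$ to use a request transition.
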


%

Note that the set $Liable(\mathcal{A})$ can be rewritten as follows
\[
\{i \mid  (\vec{q},\vec{a},\vec{q'}) \in T_{\mathcal{A}}, \ithel{\vec{a}} i  \neq \blk, \vec{q} \in Q_{\mathcal K_{\mathcal{A}}}, \vec{q'} \not \in  Q_{\mathcal K_{\mathcal{A}}}\}
\]
so making its calculation straightforward, as well as checking the safety of $\mathcal{A}$.


Some properties of $\otimes$ and $\boxtimes$ follow, that enable us to predict under which conditions a composition is safe without actually computing it.

We first introduce the notions of collaborative and competitive contracts. 
Intuitively, two contracts are \emph{collaborative} if some requests of one meet the offers of the other, and are \emph{competitive} if both can satisfy the same request.
An example follows.

\begin{exa}\label{ex:liable}
Consider the contract automata $Bill, Mary, John$  in Figure~\ref{fig:alicebob}. 
In Figure~\ref{fig:bobeve} the contract automaton $Bill \otimes John$ is displayed.
The two contract automata $Mary$ and $Bill \otimes John$ 
are collaborative and not competitive, indeed the offer $\overline {toy}$ of $Mary$ is matched
in $Bill \otimes John$, and no other principals interfere with this offer.
Moreover, let $\mathcal{A}_1= \overline{apple}+cake \otimes apple + \overline{cake}$ and $\mathcal{A}_2=\overline{apple}$. 
The pair $\mathcal{A}_1,\mathcal{A}_2$ is competitive since $\mathcal{A}_2$ interferes with $\mathcal{A}_1$ on the $\overline{apple}$ offer.
\end{exa}

\begin{defi}[Competitive, Collaborative]
The pair of CA $\mathcal{A}_1=\langle Q_1, \vec {q_0}_1, A^{r}_1, A^{o}_1, T_1, F_1 \rangle$ and 
$\mathcal{A}_2=\langle Q_2,\vec  {q_0}_2, A^{r}_2, A^{o}_2, T_2, F_2 \rangle$ are
\begin{itemize} 
\item 
\emph{competitive} if $A^o_1 \cap A^{o}_2 \cap co(A^r_1 \cup A^r_2) \neq \emptyset$

\item
\emph{collaborative} if $(A^o_1 \cap co(A^r_2)) \cup (co(A^r_1) \cap A^o_2) \neq \emptyset$.
\end{itemize}
\end{defi}

Note that \emph{competitive} and \emph{collaborative} are not mutually exclusive, as stated in the first and second item of Theorem~\ref{the:composition} below. 
Moreover if two contract automata are \emph{non-competitive} then all their match actions are preserved in their composition, indeed we have $\mathcal{A}_1 \boxtimes \mathcal{A}_2 = \mathcal{A}_1 \otimes \mathcal{A}_2$.

\begin{figure}[tb]
\center
\begin{tikzpicture}[->,>=stealth',shorten >=1pt,auto,node distance=4.0cm,
                    semithick, every node/.style={scale=0.4}]
  \tikzstyle{every state}=[fill=white,draw=black,text=black]

  \node[initial,state] (A)                               {${q_0}_2,{q_0}_3$};
	\node[state] (B)[right of=A]                    {${q_1}_2,{q_0}_3$};
	\node[state] (C)[below of=A]                   {${q_0}_2,{q_1}_3$};
  \node[state,accepting] (D) [right of=C]        {${q_1}_2,{q_1}_3$};

  \path (A)			edge             node{$(toy,\blk)$} (B)
		   	   			edge             node{$(\blk,toy)$} (C)
	      (B)             edge             node{$(\blk,toy)$}(D)
           (C)             edge             node{$(toy,\blk)$}(D);
\end{tikzpicture}
\caption{The contract automaton $Bill \otimes John$ of Example~\ref{ex:liable}}.
\label{fig:bobeve}
\end{figure}
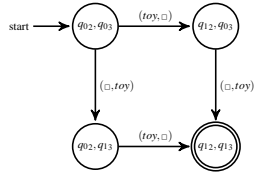
The next theorem says that the composition of safe and non-competitive contracts prevents all principals from harmful interactions, unlike the case of safe competitive contracts.
In other words, when $\mathcal{A}_1$ and $\mathcal{A}_2$ are safe, no principals will be found liable in $\mathcal{A}_1 \otimes \mathcal{A}_2$ 
(i.e.\ $Liable(\mathcal{A}_1 \otimes \mathcal{A}_2) = \emptyset$), and the same happens for 
$\mathcal{A}_1 \boxtimes \mathcal{A}_2$ if the two are also non-competitive 
(i.e.\ $Liable(\mathcal{A}_1 \boxtimes \mathcal{A}_2) = \emptyset$).

\begin{restatable}[]{thm}{thecomposition}
\label{the:composition} 
If two contract automata $\mathcal{A}_1$ and $\mathcal{A}_2$ are
\begin{enumerate}
\item  \label{lem:composition} competitive then they are collaborative,
\item \label{lem:collcomp} collaborative and safe, then they are competitive,
\item  \label{lem:safe} safe then $\mathcal{A}_1 \otimes \mathcal{A}_2\text{ is safe}, \mathcal{A}_1 \boxtimes \mathcal{A}_2\text{ admits agreement}$,
\item  \label{lem:unsafe} non-collaborative, and one or both unsafe, then $\mathcal{A}_1 \otimes \mathcal{A}_2,\mathcal{A}_1 \boxtimes  \mathcal{A}_2$ are unsafe,
\item \label{lem:noncompetitive_safe} safe and non-competitive, then $\mathcal{A}_1 \boxtimes \mathcal{A}_2$ is safe.
\end{enumerate}

\end{restatable}

Note that in item  \ref{lem:safe} of Theorem~\ref{the:composition} it can be that $\mathcal A_1 \boxtimes \mathcal A_2$  is not
\emph{safe}. Moreover consider the contract automata $\mathcal{A}_1$  and $\mathcal{A}_2 $ of Example~\ref{ex:liable}.
We have that  $\mathcal{A}_1\boxtimes\mathcal{A}_2 $ is unsafe because the trace $(\blk, apple, \overline{apple})(cake,\blk,\blk)$ belongs to $\mathscr{L}(\mathcal{A}_1\boxtimes\mathcal{A}_2)$.

%

\section{Weak Agreement}
\label{sect:weak-agreement}
%
As said in the introduction, we will now consider a more liberal notion of agreement, where an offer can be asynchronously fulfilled by a matching request, even though either of them occur beforehand.
In other words, some actions can be taken on credit, assuming that in the future the obligations will be honoured. 
According to this notion, called here \emph{weak agreement}, computations well behave when all the requests are matched by offers, in spite of lack of synchronous agreement, in the sense of Section~\ref{sect:agreement}.
This may lead to a circularity, as shown by the example below, because, e.g.\ one principal first requires something from the other and then is willing to fulfil the request of the other principal, who in turn behaves in the same way.
This is a common scenario in contract composition, and variants of weak agreement have been studied using many different formal techniques, among which Process Algebras, Petri Nets, non-classical Logics, Event Structures~\cite{BartolettiZ09,BartPOST2013,BartFSEN2013,BartSci2013}.

\begin{exa}
Suppose Alice and Bob want to share a bike and an airplane, but neither trusts the other.
Before providing their offers they first ask for the complementary requests. 
As regular expressions:  $Alice =bike.\overline{airplane}$ and 
$Bob = airplane.\overline{bike}$. 
The language of their composition  is:
\[\mathscr{L}(Alice \otimes Bob)=\{ (\blk, airplane)(bike, \overline{bike})(\overline{airplane}, \blk),
(bike, \blk)(\overline{airplane}, airplane)(\blk, \overline{bike}) \}\,.\] 
In both possible traces the contracts fail in exchanging the bike or the airplane synchronously, hence $\mathscr{L}(Alice \otimes Bob) \cap \mathfrak{A} = \emptyset$ and the composition does not admit agreement. 
\label{ex:biketoy}
\end{exa}
The circularity in the requests/offers is solved by weakening the notion of agreement, allowing a request to be performed on credit and making sure that in the future a complementary offer will occur, giving rise to a trace in weak agreement.
We now formally define weak agreement.

\begin{defi}[Weak Agreement]
A trace accepted by a contract automaton of rank $n > 1$ is in \emph{weak agreement} if it belongs to
$\mathfrak{W}= \{w \in  ({\Sigma^{n}})^* \mid w=\vec a_1 \ldots \vec a_m, \exists \text{ a function }$ 
\mbox{$f: [1..m] \rightarrow [1..m]$}  total and injective on the (indexes of the) request actions of $w$, and such that $f(i)=j$ only if $\vec a_i \matchrel \vec a_j\}$.
\end{defi}

Needless to say, a trace in agreement is also in weak agreement, so $\mathfrak{A}$ is a proper subset of $\mathfrak{W}$, as shown below.

\begin{exa}
Consider $\mathcal{A}_3$ in Figure \ref{fig:automa_rank}, whose trace $(\overline{res},\blk)(sig,\overline{sig})(\blk,res)$ is in $\mathfrak{W}$ but not in $\mathfrak{A}$ 
(all $f$ such that $f(3)=1$ certify the membership) , while $(\overline{res},\blk)(sig,\overline{sig})(\blk,res)(\blk,res) \not \in \mathfrak{W}$. 
\end{exa}


\begin{defi}[Weak Safety]
Let $\mathcal{A}$  be a contract automaton. Then
\begin{itemize}
\item   if $\mathscr{L}(\mathcal{A})\subseteq \mathfrak{W}$ then $\mathcal A$ is \emph{weakly safe}, otherwise is  \emph{weakly unsafe};
\item if $\mathscr{L}(\mathcal{A}) \cap \mathfrak{W} \neq \emptyset$ then $\mathcal{A}$ \emph{admits weak agreement}.
\end{itemize}
\end{defi}

\begin{exa}
In Example~\ref{ex:biketoy} we have $\mathscr{L}(Alice \otimes Bob) \subset \mathfrak{W}$, hence the composition of $Alice$ and $Bob$ is weakly safe. Indeed every $f$ such that $f(1)=3$ certifies the membership for both traces.
\end{exa}

The following theorem states the conditions under which weak agreement is preserved by our operations of contract composition.

\begin{restatable}{thm}{lemweaksafe}
\label{lem:weak-safe}\label{lem:weak-unsafe}\label{lem:noncompetitive_weak-safe}
Let $\mathcal{A}_1, \mathcal{A}_2$ be  two contract automata, then if $\mathcal{A}_1, \mathcal{A}_2$ are
\begin{enumerate}
\item  weakly safe then $\mathcal{A}_1 \otimes \mathcal{A}_2$ is weakly safe, $\mathcal{A}_1 \boxtimes \mathcal{A}_2$ admits weak agreement
\item   non-collaborative and one or both unsafe, then $\mathcal{A}_1 \otimes \mathcal{A}_2,\mathcal{A}_1 \boxtimes \mathcal{A}_2$ are weakly unsafe 
\item  safe and non-competitive, then  $\mathcal{A}_1 \boxtimes \mathcal{A}_2$ is weakly safe.
\end{enumerate}
\end{restatable}

The example below shows that weak agreement is not a context-free notion, in language theoretical sense; rather we will prove it context-sensitive.
Therefore, we cannot define a most permissive controller for weak agreement in terms of contract automata, because they are finite state automata.

\begin{exa}
\label{exa:wagreement}
Let $\mathcal{A}_4$, $\mathcal{A}_5$ and $\mathcal{A}_4\otimes\mathcal{A}_5$ be the contract automata  in Figure \ref{fig:automa_noncontextfree}, then we have that $L=\mathfrak{W} \cap \mathscr{L}(\mathcal{A}_4\otimes\mathcal{A}_5)\neq \emptyset$ is not context-free. 
Consider the following regular language 
\[L'=\{(\overline{a},\blk)^*(\overline{b},\blk)^*(sig,\overline{sig})(\blk,a)^*(\blk,b)^* \}\]
We have that 
\[
L \cap L' = \{ (\overline a, \blk)^{n_1} (\overline b, \blk)^{m_1} (sig, \overline{sig}) (\blk,a)^{n_2}(\blk,b)^{m_2} 
\mid n_1\geq n_2 \geq 0, m_1 \geq m_2 \geq 0 \}
\]
is not context-free (by pumping lemma), and since $L'$ is regular, $L$ is not context-free.
\end{exa}
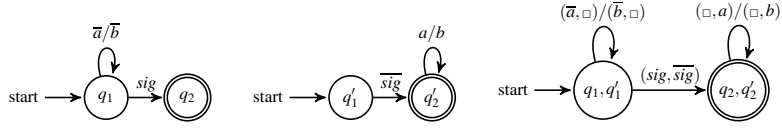
\begin{figure}[tb]
\center
\begin{tikzpicture}[->,>=stealth',shorten >=1pt,auto,node distance=1.8cm,
                    semithick, every node/.style={scale=0.6}]
  \tikzstyle{every state}=[fill=white,draw=black,text=black]

  \node[initial,state] (A)                    {$q_1$};
  \node[state,accepting]         (B) [right of=A] {$q_2$};

  \path (A) edge [loop above]              node {$\overline{a}/\overline{b}$} (A)
			edge             node{$sig$} (B);
\end{tikzpicture} \quad
\begin{tikzpicture}[->,>=stealth',shorten >=1pt,auto,node distance=1.8cm,
                    semithick, every node/.style={scale=0.6}]
  \tikzstyle{every state}=[fill=white,draw=black,text=black]

  \node[initial,state] (A)                    {$q_1'$};
  \node[state,accepting]         (B) [right of=A] {$q_2'$};

  \path (A)			edge             node{$\overline{sig}$} (B)
		(B) edge [loop above]              node {$a/b$} (B);
\end{tikzpicture} \quad
\begin{tikzpicture}[->,>=stealth',shorten >=1pt,auto,node distance=3.0cm,
                    semithick, every node/.style={scale=0.6}]
  \tikzstyle{every state}=[fill=white,draw=black,text=black]

  \node[initial,state] (A)                    {$q_1,q_1'$};
  \node[state,accepting]         (B) [right of=A] {$q_2,q_2'$};

  \path (A)			edge             node{$(sig,\overline{sig})$} (B)
 				     edge [loop above]             node {$(\overline{a},\blk)/(\overline{b},\blk)$} (A)
		(B) edge [loop above]              node {$(\blk,a)/(\blk,b)$} (B);
\end{tikzpicture}
\caption{From left to right the contract automata of Example~\ref{exa:wagreement}: $\mathcal{A}_4,\mathcal{A}_5$, and $\mathcal{A}_4 \otimes \mathcal{A}_5$.}
\label{fig:automa_noncontextfree}
\end{figure}
\begin{restatable}{thm}{thecontextsensitive}
\label{the:contextsensitive}
$\mathfrak{W}$ is a context-sensitive language, but not context-free.
Word decision can be done in $O(n^2)$ time and $O(n)$ space.
\end{restatable}
In general, it is undecidable checking whether a regular language $L$ is included in a
 context-sensitive one, as well as checking emptiness of the intersection of a regular
 language with a context-sensitive one. 
However in our case these two problems are decidable: we will introduce an effective procedure to
check whether a contract automaton $\mathcal{A}$ is weakly safe, or whether it admits weak agreement.
The technique we propose amounts to find optimal solutions to network flow problems~\cite{Hemmecke10}, and will be used  also for detecting  weak liability.

As an additional comment, note that the membership problem is polynomial in time for mildly context-sensitive languages~\cite{Joshi90theconvergence}, but it is PSPACE-complete  for arbitrary ones.
In the first case, checking membership can be done in polynomial time through
 \emph{two way deterministic pushdown automata}~\cite{GrayInf67},
that  have a read-only input tape readable backwards and forwards. 
It turns out that $\mathfrak{W}$ is mildly context-sensitive, and checking whether $w \in \mathfrak{W}$ can be intuitively  done by repeating what follows for all the actions occurring in $w$.
Select an action $\alpha$; scroll the input; and push all the requests on $\alpha$ on the stack;
scroll again the input and pop a request, if any, when a corresponding offer is found.
If at the end the stack is empty the trace $w$ is in $\mathfrak{W}$.

%

Before presenting our decision procedure we fix some useful notation.
Assume as given a contract automaton $\mathcal{A}$, with a single final state $\vec{q}_f\neq \vec{q}_0$.
If this is not the case, one simply adds artificial dummy transitions from all the original final states to the new single final state. 
Clearly, if the modified contract automaton admits weak agreement, also the original one does --- and the two will have the same liable principals.
We assume that all states are reachable from $\vec q_0$ and so is $\vec q_f$ from each of them.
In addition, we enumerate the requests of $\mathcal{A}$, i.e.\  
$A^{r} = \{a^i \mid i \in I_l = \{1,2, \ldots, l\}\}$, as well as its transitions  $T=\{t_1, \ldots, t_n \}$.
Also, let $FS(\vec{q})= \{(\vec{q},\vec{a},\vec{q}')\mid (\vec{q},\vec{a},\vec{q}')\in T\}$
be the \emph{forward star} of a state $\vec{q}$, and let $\ BS(\vec{q})= \{(\vec{q}',\vec{a},\vec{q}) \mid (\vec{q}',\vec{a},\vec{q}) \in T\}$  be its \emph{backward star}. 
For each transition $t_i$ we introduce the \emph{flow variables} $x_{t_i} \in \mathds{N}$, and 
$z_{t_i}^{\vec{q}}\in \mathds{R}$ where $\vec{q}\in Q, \vec q \neq \vec q_0$.

We are ready to define the set $F_{\vec{s},\vec{d}}$ of \emph{flow constraints}, an element of which $\vec{x}=(x_{t_1},\ldots , x_{t_n}) \in F_{\vec{s},\vec{d}}$ defines traces from the source state $\vec{s}$ to the target state $\vec{d}$.
The intuition is that each variable $x_{t_i}$ represents how many times the transition $t_i$ is traversed in the traces defined by $\vec x$.
Hereafter, we will abbreviate $F_{\vec{q_0},\vec{q_f}}$ as $F_x$, and we identify a transition through its source and target states.

An example follows.

\begin{exa}\label{ex:flow}
Figure \ref{fig:weaksafe} (top right) shows a simple service of booking, which is the composition of 
a client and a hotel contracts.

The contract of the client requires to book a room ($r$), including breakfast ($b$) and a transport service, by car ($c$) or taxi ($t$); finally it sends a signal of termination ($\overline e$).
The contract of the client is then:
\[
C=r.b.(c+t).\overline e
\]
The hotel offers a room, breakfast and taxi. Its contract is:
\[
H=\overline r.\overline t.\overline b.e
\]
Four traces accepted by the automaton  $H \otimes C$  are: 
\[
w_1 = (\overline r, r )( \blk , b)(\overline t, t)(\overline b, \blk)(e, \overline e)
\]
\[
w_2 = (\overline r, r )( \blk , b)(\blk,c)(\overline t,\blk)(\overline b,\blk)(e, \overline e)
\]
\[
w_3 = (\overline r, r )(\overline t, \blk )(\overline b, b)(\blk, t)(e, \overline e)
\]
\[
w_4 = (\overline r, r )(\overline t, \blk )(\overline b, b)(\blk, c)(e, \overline e)
\]
We now detail the flows associated with each trace giving the set of variables with value 1, all the others having value 0, because there are no loops.
The associated flows are: 
\[
w_1:
\{x_{\vec q_0 , \vec q_1},
x_{\vec q_1 , \vec q_2},
x_{\vec q_2 , \vec q_3},
x_{\vec q_3 , \vec q_4},
x_{\vec q_4 , \vec q_{10}}\}
\]
\[
w_2: 
\{x_{\vec q_0 , \vec q_1},
x_{\vec q_1 , \vec q_2},
x_{\vec q_2 , \vec q_5},
x_{\vec q_5, \vec q_6},
x_{\vec q_6, \vec q_9},
x_{\vec q_9 , \vec q_{10}} \}
\]
\[
w_3: 
\{x_{\vec q_0 , \vec q_1},
x_{\vec q_1 , \vec q_7},
x_{\vec q_7 , \vec q_8},
x_{\vec q_8 , \vec q_4},
x_{\vec q_4 , \vec q_{10}} \}
\]
\[ 
w_4: 
\{x_{\vec q_0 , \vec q_1},
x_{\vec q_1 , \vec q_7},
x_{\vec q_7 , \vec q_8},
x_{\vec q_8 , \vec q_9},
x_{\vec q_9 , \vec q_{10}} \}
\]
\end{exa}
\begin{figure}[tb]
\center
%
%
%
%
\begin{tikzpicture}[->,>=stealth',shorten >=1pt,auto,node distance=1.8cm,
                    semithick, every node/.style={scale=0.7}]
  \tikzstyle{every state}=[fill=white,draw=black,text=black]

  \node[initial,state] (A)            {$\vec{q_0}$};
  \node[state] (B) [right of=A]							{$\vec q_1$};
  \node[state] (C) [right of=B]							{$\vec q_2$};
  \node[state, accepting] (D) [right of=C]							{$\vec q_3$};
  \node[state] (E) [above of=B]							{$\vec q_4$};
  \node[state] (F) [right of=E]							{$\vec q_5$};
 
  \path (A)		edge		node{$(b, \overline b)$} (B)
   		(A)		edge		node{$( \overline a , a)$} (E)
		(B)		edge		node[right]{$(b, \overline b)$} (E)
		(B)		edge		node{$( \overline a, a)$} (C)
		(C)		edge		node{$( \blk, c)$} (D)
		(E)		edge[loop above]		node{$( \overline c, \blk)$} (E)
		(E)		edge		node{$(b, \overline b)$} (F)
		(F)		edge		node{$(\blk,a )$} (D);
		
\end{tikzpicture} 
\begin{tikzpicture}[->,>=stealth',shorten >=1pt,auto,node distance=1.8cm,
                    semithick, every node/.style={scale=0.7}]
  \tikzstyle{every state}=[fill=white,draw=black,text=black]

  \node[initial,state] (A)              {$\vec{q_0}$};
  \node[state] (B)  [below of=A]                  {$\vec{q_1}$};
  \node[state] (C)  [right of=A]                  {$\vec{q_2}$};
  \node[state] (D)  [right of=C]      			 {$\vec{q_3}$};
  \node[state] (E)  [right of=D]					{$\vec{q_4}$};
	\node[state] (F)  [below of=C]					{$\vec{q_5}$};
	\node[state] (G)  [right of=F]		{$\vec{q_6}$};
	\node[state] (H)  [below of=F]					{$\vec{q_7}$};
\node[state](L)[right of=G]			{$\vec{q_9}$};
	\node[state] (I)  [below of=L]			{$\vec{q_8}$};
\node[state,accepting] (M) [right of=E]		{$\vec q_{10}$};

  \path (A)			edge             node[left]{$(\overline{r},r)$} (B)
  		(B)			edge             node[above]{$(\blk,b)$} (C)
  		(B)			edge        	    node[left]{$(\overline{t},\blk)$} (H)
  		(C)			edge             node{$(\overline{t},t)$} (D)
  		(C)			edge             node[right]{$(\blk,c)$} (F)
  		(D)			edge             node{$(\overline{b},\blk)$} (E)
		(E)			edge		node{$(e,\overline e)$}(M)
  		(F)			edge             node{$(\overline{t},\blk)$} (G)
		(G)			edge             node{$(\overline{b},\blk)$} (L)
  		(H)			edge             node{$(\overline{b},b)$} (I)
 		(I)			edge             node{$(\blk,c)$} (L)
 		(I)			edge[bend right] node[right]{$(\blk,t)$} (E)
		(L)			edge		node[above]{$(e,\overline e)$}(M);
		
\end{tikzpicture} 
%
%
\medskip
\begin{tikzpicture}[->,>=stealth',shorten >=1pt,auto,node distance=1.8cm,
                    semithick, every node/.style={scale=0.7}]
  \tikzstyle{every state}=[fill=white,draw=black,text=black]

  \node[initial,state] (A)              {${q_0}$};
  \node[state] (B)  [right of=A]                  {${q_1}$};
  \node[state,accepting] (C)  [right of=B]                  {${q_2}$};
  \path (A)			edge             node{$a$} (B)
  		(B)			edge             node{$b$} (C)
  		(B)			edge[bend left]        	    node{$c$} (A)
  		(C)			edge[bend right]             node[above]{$d$} (A);
\end{tikzpicture} 
\caption{Top left: the product of two contract automata of Examples~\ref{ex:flow} and~\ref{ex:flow2path}; top right the booking service of Example~\ref{ex:flow}; bottom: the principal contract automaton whose flow constraints generate many traces, as discussed at the end of Example~\ref{ex:flow}. }
\label{fig:weaksafe}
\label{fig:weakagreement}
\end{figure}
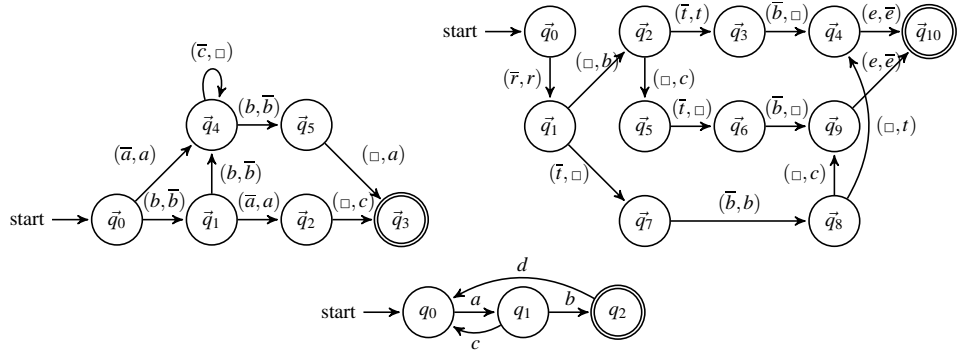
\noindent Note that a flow $\vec x$ may represent many traces that have the same balance of requests/offers for each action occurring therein.
For example, in the contract automaton at the bottom of Figure~\ref{fig:weaksafe}, the same flow 
$x_{ q_0, q_1}=3, x_{ q_1,  q_2}=2, x_{ q_2,  q_0}= x_{ q_1,  q_0}=1$
represents both $w_1 = acabdab$ and $w_2=abdacab$.
%

The following auxiliary definition introduces a notation for flow constraints.
It is beneficial in the statements of Theorems~\ref{the:flow_weak_safe},~\ref{the:flow_weak_agreement} and~\ref{the:weakly_liable_flow1} below.
\begin{defi}
Given a source state $\vec s$ and a destination state $\vec d$, the set of \emph{flow constraints} $F_{\vec{s},\vec{d}}$  from $\vec s$ to $\vec d$  is defined as:
%
%
%
\begin{align}
F_{\vec{s},\vec{d}} = \{& (x_{t_1},\ldots , x_{t_n}) \mid  \forall \vec{q}:
(\sum_{t_i \in BS(\vec{q})} x_{t_i} - \sum_{t_i \in FS(\vec{q})} x_{t_i}) = \left\{ 
\begin{array}{ll} 
-1 &  \mbox{if } \vec{q} = \vec{s}\\ 
0  & \mbox{if } \vec{q} \neq \vec{s}, \vec{d} \\
1 & \mbox{if } \vec{q} = \vec{d}
\end{array} 
\right. \  \notag \\ 
& \forall \vec{q}\neq \vec{s}, t_i. \quad 0 \leq z_{t_i}^{\vec{q}} \leq x_{t_i}, \notag \\
& \forall \vec{q}\neq \vec{s}, \, \forall \vec{q'}:\  
(\sum_{t_i \in BS(\vec{q'})} z_{t_i}^{\vec{q}} - \sum_{t_i \in FS(\vec{q'})} z_{t_i}^{\vec{q}}\ ) =  \left\{ 
\begin{array}{ll} 
- p^{\vec q}  &  \mbox{if } \vec{q'}=\vec{s}\\ 
0  & \mbox{if } \vec{q'} \neq \vec{s}, \vec{q} \\
p^{\vec q}  & \mbox{if } \vec{q'} = \vec{q}
\end{array}
\right. 
  \notag \\
 &  \qquad \text{where } \quad p^{\vec q} = \left\{
\begin{array}{ll} 
1  &  \mbox{if }\sum_{t_i \in FS(\vec{q})} x_{t_i} > 0\\ 
0  &   otherwise \\
\end{array}
\right. \qquad\qquad \} \notag
\end{align}
\end{defi}

In the definition above, the variables $z_{t_i}^{\vec{q}}$ represent $|Q|-1$ auxiliary flows and make sure that a flow $\vec x$ represents valid runs only, i.e. they guarantee that there are no disconnected cycles with a positive flow.
A more detailed discussion is in Example \ref{ex:flow2path} below.
Note that the values of $z_{t_i}^{\vec{q}}$ are \emph{not} integers, and so we are defining Mixed Integer Linear Programming problems that have efficient solutions~\cite{Hemmecke10}.
 
We eventually define a set of variables $a^i_{t_j}$ for each action and each transition, that take the value -1 for requests, 1 for offers, and 0 otherwise; they help counting the difference between offers and requests of an action in a flow (recall that $I_l$ contains the indexes of the requests).
%
\[\forall t_j =(\vec{q},\vec{a},\vec{q}')\in T, \forall i \in I_l: \quad a^i_{t_j} = \left\{ 
\begin{array}{ll} 
1 &  \mbox{if } Obs(\vec{a})= \overline{a^i}\\ 
-1  & \mbox{if } Obs(\vec{a})=a^i \\
0 &  otherwise 
\end{array} 
\right. \]

\begin{exa}\label{ex:flow2path}

Figure \ref{fig:weaksafe} (top left) depicts the contract $A \otimes B$, where 
\[
A=\overline a. \overline {c}^*.b + b.(b.\overline{c}^*.b+\overline a)
\qquad \qquad
B = a.\overline b.a + \overline b.(\overline b.\overline b.a+ a.c)
\]
To check whether there exists a run recognising a trace $w$ with less or equal requests than offers (for each action) we solve $\sum_{t_j} a^i_{t_j} x_{t_j} \geq 0$, for $\vec{x} \in F_x$.


We illustrate how the auxiliary variables $z_{t_i}^{\vec{q}}$ ensure that the considered solutions represent valid runs. 
Consider 
the following assignment to $\vec{x}$:
$x_{\vec q_ 0, \vec q_1}= x_{\vec q_1 , \vec q_2}=x_{\vec q_2 , \vec q_3}=1, x_{\vec q_4 , \vec q_4}\geq 1$, and null everywhere else.
It does not represent valid runs, because the transition $(\vec q_4, (\overline c,\blk), \vec q_4)$ cannot be fired in a run that only takes transitions with non-null values in $\vec{x}$. 
However, the constraints on the flow $\vec{x}$ are satisfied (e.g.\ we have $\sum_{t_j \in FS(\vec{q_4})} x_{t_j} =\sum_{t_j \in BS(\vec{q_4})} x_{t_j}$).
Now the constraints on the auxiliary $z_{t_i}^{\vec{q}}$ play their role, checking if a node is reachable from the initial state on a run defined by $\vec x$.
The assignment above is not valid since for $z^{\vec q_4}$ we have :
\[
0 \leq z^{\vec q_4}_{(\vec q_0, \vec q_4)} \leq x_{(\vec q_0, \vec q_4)}=0
\]
\[
0 \leq z^{\vec q_4}_{(\vec q_1, \vec q_4)} \leq x_{(\vec q_1, \vec q_4)}=0
\]
\[
0 \leq z^{\vec q_4}_{(\vec q_4, \vec q_5)} \leq x_{(\vec q_4, \vec q_5)}=0
\]
Hence 
$\sum_{t_j \in BS(\vec{q_4})} z_{t_j}^{\vec{q_4}}= z^{\vec q_4}_{(\vec q_4,\vec q_4)}, 
\sum_{t_j \in FS(\vec{q_4})} z_{t_j}^{\vec{q_4}}= z^{\vec q_4}_{(\vec q_4,\vec q_4)}$ 
and we have: 
\[
\sum_{t_j \in BS(\vec{q_4})} z_{t_j}^{\vec{q_4}} -  \sum_{t_j \in FS(\vec{q_4})} z_{t_j}^{\vec{q_4}}= 0 \neq 1  = p^{\vec q_4}
\]

Finally, note in passing that there are no valid flows $\vec x \in F_x$ for this problem.

More importantly, note that the auxiliary variables $z_{t_i}^{\vec{q}}$ are not required to have integer values, which is immaterial for checking that those solutions represent valid runs, but makes finding them much easier.

\end{exa}

The main results of this section follow.


\begin{restatable}{thm}{theflowweaksafe}\label{the:flow_weak_safe}\ 
Let $\vec{v}$ be a binary vector.
Then a contract automaton $\mathcal{A}$ is \emph{weakly safe} if and only if $\textsf{min } \gamma \geq0$ where:
\[ \sum_{i \in I_l} v_i\sum_{t_j \in T} a^i_{t_j} x_{t_j} \leq \gamma  \quad \sum_{i \in I_l} v_i=1 \quad \forall i \in I_l.\ v_i \in \{ 0,1\} \quad (x_{t_1} \ldots x_{t_n} ) \in F_x 
\quad \gamma \in \mathds{R}
\]
\end{restatable}

The minimum value of $\gamma$ selects the trace and the action $a$ for which the difference between the number of offers and requests is the minimal achievable from $\mathcal{A}$. 
If this difference is non-negative, there will always be enough offers matching the requests, and so $\mathcal{A}$ will never generate a trace not in $\mathfrak{W}$.
In other words, $\mathcal{A}$ is \emph{weakly safe}, otherwise it is not.

\begin{exa}\label{ex:weaksafe}
Consider again Example~\ref{ex:flow} and let $a^1 = r$, $a^2 = b$, $a^3 = t$, $a^4 = c$, $a^5=e$. 
\\
If $v_1 = 1$, for each flow $\vec{x} \in F_x$, we have that 
$\sum_{t_j} a^1_{t_j} x_{t_j} = 0$ (for $i \neq 1$, we have $v_i = 0$). 
This means that the request of a room is always satisfied.
Similarly for breakfast and the termination signal $e$.
If $v_3 = 1$, for the flow representing the traces $w_1,w_3$ we have 
$\sum_{t_j} a^3_{t_j} x_{t_j} = 0$,
while for the flow representing the traces $w_2,w_4$ the result is 1.
The requests are satisfied also in this case.
Instead, when $v_4 = 1$, for the flow representing the traces $w_1,w_4$ we have 
$\sum_{t_j} a^4_{t_j} x_{t_j} = 0$, but for the flow representing $w_2,w_3$, the result is $-1$.
Hence $\textsf{min } \gamma = -1$, and the contract automaton $H \otimes C$ is not \emph{weakly safe}, indeed we have $w_2,w_3 \not \in \mathfrak{W}$.
\end{exa}

In a similar way, we can check if a contract automaton offers a trace in weak agreement.

\begin{restatable}{thm}{theflowweakagreement}\label{the:flow_weak_agreement}
The contract automaton $\mathcal{A}$ admits weak agreement if and only if $\textsf{max } \gamma \geq 0$ where
\[
\forall i \in I_l. \sum_{t_j \in T} a^i_{t_j}\; x_{t_j} \geq \gamma  \quad  (x_{t_1} \ldots x_{t_n} )  \in F_x  \quad
\gamma \in \mathds{R}
\]
\end{restatable}

The maximum value of $\gamma$ in Theorem \ref{the:flow_weak_agreement} selects the trace $w$ that maximises the least difference between offers and requests of an action in $w$. If this value is non-negative, then there exists a trace $w$ such that for all the actions in it, the number of requests is less or equal than the number of offers.
In this case, $\mathcal{A}$ admits weak agreement; otherwise it does not.

\begin{exa}\label{ex:weakagreement}
In Example \ref{ex:flow}, $\textsf{max }\gamma=-1$ for the flows representing the traces $w_2,w_3$ and \mbox{$\textsf{max }\gamma=0$} for those of the traces $w_1,w_4$, that will be part of the solution and are indeed in weak agreement.
Consequently, $H \otimes C$ admits weak agreement.
%
%
\end{exa}

We now define the \emph{weakly liable} principals: those who perform the first transition $t$ of a run such that after $t$ it is not possible any more to obtain a trace in $\mathfrak{W}$, i.e.\  leading to traces $w \in \mathscr{L}(\mathcal A) \setminus \mathfrak W$ that cannot be extended to $ww' \in \mathscr{L}(\mathcal{A})\cap \mathfrak W$.

\begin{defi}
\label{def:weaklyliable}
Let $\mathcal{A}$ be a contract automaton and let $w=w_1\vec{a}w_2$ such that $w \in \mathscr{L}(\mathcal{A})\setminus \mathfrak W$, $\forall w'.ww' \not\in \mathscr{L}(\mathcal{A})\cap \mathfrak W, \forall w_3.w_1\vec{a}w_3 \not \in \mathscr{L}(\mathcal{A})\cap \mathfrak W$ and $\exists w_4.w_1w_4 \in \mathscr{L}(\mathcal{A})\cap \mathfrak W$.

The principals $\Pi^i(\mathcal{A})$ such that $\vec{a}_{(i)} \neq \blk$ are \emph{weakly liable} and form the set  
$W\!Liable(\mathcal{A},w_1\vec{a})$.

Let $W\!Liable(\mathcal{A})=\{i \mid \exists w$ such that $i \in W\!Liable(\mathcal{A},w)\}$ be the set of all \emph{potentially weakly liable} principals in $\mathcal{A}$.
\end{defi}

For computing the set $W\!Liable(\mathcal A)$ we optimise a network flow problem for a transition $\overline t$ to check if there exists a trace $w$ in which $\overline t$ reveals some weakly liable principals. 
By solving this problem for all transitions we obtain the set $W\!Liable(\mathcal A)$.


\begin{restatable}{thm}{theweaklyliableflow}\label{the:weakly_liable_flow1}
%
The principal $\Pi^i(\mathcal{A})$ of a contract automaton $\mathcal A$ is \emph{weakly liable}
if and only if 
there exists a transition $\overline{t}=(\vec{q_s},\vec{a},\vec{q_d}) \in T_{\mathcal A}$, $\vec{a}_{(i)} \neq \blk$ such that $\gamma_{\overline t} <0$, where
\[
\gamma_{\overline t}=\textsf{min }\{ g(\vec{x}) \mid \vec{x} \in F_{\vec{q_0},\vec{q_s}}, 
 \  \vec{y} \in F_{\vec{q_s},\vec{q_f}},
 \ \forall i \in I_l. \sum_{t_j \in T} a^i_{t_j}(x_{t_j} + y_{t_j})\geq 0  \} \notag \\
\]
\[
g(\vec{x})=\textsf{max } \{ \gamma \mid  \vec{u} \in F_{\vec{q_d},\vec{q_f}} ,  
\ \forall i \in I_l. \sum_{t_j \in T} a^i_{t_j}(x_{t_j} + u_{t_j}) + a^i_{\overline{t}} \geq \gamma, \gamma \in \mathds{R} \} \notag \\
\]
\end{restatable}

\begin{figure}[tb]
\center
\begin{tikzpicture}[->,>=stealth',shorten >=1pt,auto,node distance=2.3cm,
                    semithick, every node/.style={scale=0.6}]
  \tikzstyle{every state}=[fill=white,draw=black,text=black]
  \node[initial,state] 				 (A)                    {$\vec{q_0}$};
 \node[state] 				         (B) [right=0.7cm and 4cm of A]   {$\vec{q_s}$};
  
  \node[state] 				         (E) [below right=0.5cm of B]   {$\vec{q_d}$};
  \node[state,accepting]			 (F) [right=0.7cm and 4cm of B] {$\vec{q_f}$};

  \path (A)  edge [decorate, decoration={snake}]       				   node{$\vec x$} (B)
		  (B)  	edge [decorate, decoration={snake}]            	     	   node {$\vec y$} (F)
				edge         				       node {$\overline t$} (E)
		 (E) edge[decorate, decoration={snake}]  node {$\vec u$} (F);
\end{tikzpicture}
\caption{The three flows computed by Theorem~\ref{the:weakly_liable_flow1}}
\label{fig:weakliable}
\end{figure}
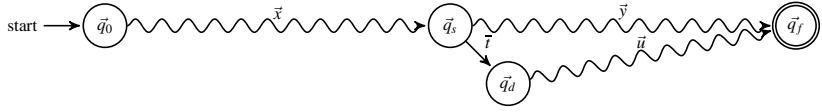
Figure~\ref{fig:weakliable} might help to understand how the flows $\vec x,\vec y$ (and $\vec u$) and the transition $\overline t$ are composed to obtain a path from the initial to the final state.
Intuitively, the flow defined above can be seen as split into three parts: the flow $\vec x$ from $\vec{q_0}$ to $\vec{q}_s$, the flow $\vec y$ from $\vec{q_s}$ to $\vec{q_f}$, and the flow $\vec u$ from $\vec{q_d}$ to $\vec{q_f}$, computed through the function $g$. 

This function takes as input the flow $\vec x$ and selects a flow $\vec u$ such that, by concatenating $\vec x$ and $\vec u$ through $\overline t$, we obtain a trace $w$ where the least difference between offers and requests is maximised for an action in $w$. 
Using the same argument of Theorem~\ref{the:flow_weak_agreement}, if the value computed is negative, then there  exists no flow $\vec u$ that composed with $\vec x$ selects traces in weak agreement.

Finally $\gamma_{\overline t}$ yields the minimal result of $g(\vec x)$, provided that there exists a flow $\vec y$, that combined with $\vec x$ represents only traces in weak agreement. 
If $\gamma_{\overline t} < 0$ then the transition $\overline{t}$ identifies some \emph{weakly liable} principals. 
Indeed the flow $\vec x$ represents the traces $w$ such that 
(1) $\exists w_1$, represented by $\vec y$, with $ww_1 \in \mathscr{L}(\mathcal A) \cap \mathfrak W$ and 
(2) $\forall w_2$, represented by $\vec u$, with $w\vec{a}w_2 \in \mathscr{L}(\mathcal A) \setminus \mathfrak W$.
Note that if a flow $\vec x$ reveals some weakly liable principals, the minimisation carried on by $\gamma_{\overline t}$ guarantees that the relevant transition $\overline t$ is found. 
Finding the weakly liable principals is a hard task, and belongs to the family of bilevel problems~\cite{bilevel}. 
Basically, these problems contain two optimisation problems, one embedded in the other,
and finding optimal solutions to them is still a hot research topic.

\begin{exa}\label{ex:weakliable}
In Figure \ref{fig:weaksafe} (top right), the transitions $(\vec{q_2},(\blk,c),\vec{q_5})$ and $(\vec q_8, (\blk,c), \vec q_9)$ reveal the second principal (i.e. $C$) \emph{weakly liable}. 
Indeed the trace $(\overline r,r)(\blk,b)$ ending in $\vec q_2$ can be extended to one in weak agreement, while $(\overline r,r)(\blk,b)(\blk,c)$ cannot. 
Also the trace $(\overline r,r)(\overline t,\blk)(\overline b,b)$ can be extended to one in weak agreement while $(\overline r,r)(\overline t,\blk)(\overline b,b)(\blk,c)$ cannot.

For the transition $(\vec{q_2},(\blk,c),\vec{q_5})$ we have the trace $(\overline r,r)(\blk,b)$ for the flow $\vec x$ and $(\overline t,t)(\overline b,\blk)(e,\overline e)$ for the flow $\vec y$, and we have $\forall i \in I_l. \sum_{t_j \in T} a^i_{t_j}(x_{t_j} + y_{t_j})\geq 0$.  
Note that if we select as flow $\vec y$ the trace $(\blk,c)(\overline t,\blk)(\overline b,\blk)(e,\overline e)$ then the constraints  $\forall i \in I_l. \sum_{t_j \in T} a^i_{t_j}(x_{t_j} + y_{t_j})\geq 0$ are not satisfied for the action $a^4=c$ (recall Example~\ref{ex:weaksafe}).
For the flow $\vec u$ the only possible trace is  $(\overline t,\blk)(\overline b,\blk)(e,\overline e)$, and $\textsf{max } \gamma=-1=\gamma_{(\vec{q_2},(\blk,c),\vec{q_5})}$ since $\sum_{{t_j} \in T} a^4_{t_j} (x_{t_j} + u_{t_j}) +(-1)=-1$.

For the transition $(\vec q_8, (\blk,c), \vec q_9)$ the flow $\vec x$ selects the trace $(\overline r, r)(\overline t,\blk)(\overline b,b)$, the flow $\vec y$ selects the trace $(\blk,t)(e,\overline e)$, since the other possible trace, that is $(\blk,c)(e,\overline e)$, does not respect the constraints for the action $a^4$ (i.e. $c$). Finally, for the flow $\vec u$ we have the trace $(e,\overline e)$, and as the previous case $\textsf{max } \gamma=-1 = \gamma_{(\vec q_8, (\blk,c), \vec q_9)}$.
\end{exa}

\section{Automata and Logics for Contracts}
\label{sect:logic}
%
%

Recently, the problem of expressing contracts and of controlling that the principals in a composition fulfil their duties has been studied in Intuitionistic Logic, where a clause is interpreted as a principal in a contract, in turn rendered as the conjunction of several clauses.
Actually, the literature only considers fragments of Horn logics because they have an immediate interpretation in terms of contracts.
More in detail, these Horn  fragments avoid contradiction clauses, as well as formulae with a single Horn clause. 
These two cases are not relevant because their interpretation as contracts makes little sense, e.g. a contract requires at least two parties.
It turns out that these theories can be interpreted as contract automata, without much effort. 

The first logic we consider is Propositional Contract Logic (PCL)~\cite{BartolettiZ10} able to deal with circular obligations.
Its distinguishing feature is a new implication, called \emph{contractual implication}, that permits to assume as true the conclusions even before the premises have been proved, provided that they will be in the future.
Roughly, a contract is rendered as a Horn clause, and a composition is a conjunction of contracts.
When a composition is provable, then all the contracts are fulfilled, i.e.\ all the requests (represented as premises of implications) are entailed.

In the next sub-sections, we translate a fragment of the Horn formulae of Propositional Contract Logic into contract automata, and we prove that a formula is provable if and only if the corresponding contract automaton admits agreement.  

We then study the connection between contract automata and the Intuitionistic Linear
Logic with Mix ($ILL^{mix}$)\cite{benton1995mixed}. 
This logic is used for modelling exchange of resources between partners with the possibility of recording debts (requests satisfied by a principal offer but not yet paid
back by honouring one of its requests),
and has been recently given a model in terms of Petri Nets~\cite{DebitNets}.
In this logic one can represent the depletion of resources, in our case of offers, that also here can be put forward before a request occurs.
Again, we translate a fragment of Horn formulae as contract automata, and we prove that a theorem there corresponds to an automaton that admits agreement. 

Our constructions have been inspired by analogous ones~\cite{DebitNets}; ours however offer a more flexible form of compositionality. Indeed, for checking if two separate formulas are provable, it suffices to check if the composition of the two corresponding automata is still in agreement. 
If the two automata are separately shown to be safe, then their composition is in agreement
due to Theorem~\ref{the:composition}.
With Debit Petri Nets~\cite{DebitNets} instead, one needs to recompute the whole translation for the composed formulas, while here we propose a modular approach.


\subsection{Propositional Contract Logic}
\label{sect:1NPCL}

The usual example for showing the need of circular obligations is Example~\ref{ex:biketoy}.
In the Horn fragment of PCL we use, called H-PCL, the contracts of Alice and Bob make use of the new contractual implication 
$F \twoheadrightarrow F'$, whose intuition is that the formula $F'$ is deducible, provided that later on in the proof also $F$ will be deduced.

According to this intuition and elaborating over Example~\ref{ex:biketoy}, Alice's contract (\emph{I offer you my aeroplane provided that in the future you will lend me your bike}) and Bob's (\emph{I offer you my bike provided that in the future you will lend me your aeroplane}) are rendered as
$
bike \twoheadrightarrow airplane$, 
$
airplane \twoheadrightarrow bike 
$, respectively.
Their composition is obtained by joining the two, and one represents that both Alice and Bob are proved to obtain the toy they request by 
\[
((bike \twoheadrightarrow airplane) \wedge (airplane \twoheadrightarrow bike)) \vdash (bike \wedge airplane)
\]
In words, the composition of the two contracts entails \emph{all} the requests (\emph{bike} by Alice and \emph{airplane} by Bob).
%
%
We now formally introduce the fragment of H-PCL~\cite{BartFSEN2013,BartCPZ15} that has a neat interpretation in contract automata, under the assumption that a principal cannot offer and require the same.

\begin{defi}[H-PCL]\label{def:hornpcl}
Assume a denumerable set of atomic formulae $Atoms=\{a,b,c, \ldots\}$ indexed by $i \in I, j \in J$ where $I$ and $J$ are finite set of indexes; then the \emph{H-PCL formulae} $p, p', \dots$ and the clauses $\alpha, \alpha_i, ... $ are generated by the following BNF grammar
\smallskip

\begin{tabular}{ll}
$ p  ::= \bigwedge_{i \in I}\alpha_i \qquad $  & 
$\alpha ::=\bigwedge_{j \in J}a_j  \hspace{3pt} |\hspace{3pt} ( \bigwedge_{j \in J}a_j) \rightarrow b  \hspace{3pt} |\hspace{3pt} ( \bigwedge_{j \in J}a_j) \twoheadrightarrow b $ \\
 & $\text{where } |I| \geq 2, |J| \geq 1, i \neq j \text{ implies } a_i \neq a_j , \text{ and } \forall j\in J. \, a_j \neq b  $\\
\end{tabular}
\smallskip 

\noindent
Also, let $\lambda(p)$ be the conjunction of all atoms in $p$. 
\end{defi}

\begin{figure}[tb]
\[
\irule{\Gamma \vdash q}{ \Gamma \vdash p \twoheadrightarrow q} Zero 
\quad\qquad\qquad
\irule{\Gamma, p \twoheadrightarrow q,r \vdash p \quad \Gamma, p \twoheadrightarrow q,p \vdash q}{\Gamma, p \twoheadrightarrow q \vdash r}  Fix
\]
\[
\irule{\Gamma, p \twoheadrightarrow q,a \vdash p \quad  \Gamma, p \twoheadrightarrow q,q \vdash b}{\Gamma, p \twoheadrightarrow q \vdash a \twoheadrightarrow b}  PrePost
\]
\caption{The three rules of PCL for the contractual implication.}
\label{fig:gentzen}
\end{figure}
In Figure~\ref{fig:gentzen} we recall the three rules of the sequent calculus for the contractual implication~\cite{BartolettiZ10,BartolettiZ09}; the others are the standard ones of the Intuitionistic Logic and are in the appendix,
 Figure~\ref{fig:fullgentzen}.

As anticipated, in H-PCL  all requests of principals are satisfied  if and only if the conjunction $p$ of the contracts of all principals entails all the atoms mentioned.

\begin{defi}
The formula $p$ represents a composition whose principals respect all their obligations if and only if
$p \vdash \lambda(p)$.
\end{defi} 


Below, we define the translation from H-PCL formulae to contract automata.
A simple inspection of the rules below suffices to verify that the obtained automata are deterministic.

\begin{defi}[From H-PCL to CA]
\label{def:translationPCL}
A H-PCL formula, with sets of indexes $I$ and $J$ as in Definition~\ref{def:hornpcl}, is translated into a contract automaton by  the following rules, where 
$\mathcal{P} = \{ q \cup \{ \ast \} \mid q \in 2^J \}$: 
\smallskip

\noindent
$\llbracket \bigwedge_{i \in I}\alpha_i \rrbracket = \Large{\boxtimes}_{i \in I} \llbracket \alpha_i \rrbracket$\\ 

\noindent
$\llbracket \bigwedge_{j \in J} a_j \rrbracket = \langle \{ \{\ast \} \},  \{\ast \}, \emptyset, 
\{ \overline{a_j} \mid j \in J\}, \{ ( \{\ast \}, \overline{a_j}, \{\ast \}) \mid  \overline{a_j} \in A^o\} ,  \{ \{\ast \} \} \rangle$ 

\noindent
\begin{flalign}
\llbracket ( \bigwedge_{j \in J}a_j) \rightarrow b  \rrbracket =  & 
\langle   \mathcal{P} ,
	 J \cup \{ \ast \},\{ a_j \mid j \in J\},\{\overline{b}\}, && \notag \\
& \{  (J' \cup \{j\}, a_j, J') \mid J' \cup \{j\} \in   \mathcal{P}, j \in J\} \cup \{ (\{\ast \},\overline{b},\{\ast \})\} , \{  \{\ast \} \}  \rangle  && \notag
\end{flalign}

\noindent
\begin{flalign}
\llbracket (\bigwedge_{j \in J}a_j) \twoheadrightarrow b \rrbracket =  &
\langle \mathcal{P} , J \cup \{ \ast \},\{ a_j \mid j \in J\},\{\overline{b}\}, && \notag \\
 & \{ ( J' \cup \{j\}, a_j, J') \mid  J' \cup \{j\} \in  \mathcal{P}, j \in J \} \cup  
\{ (q, \overline{b}, q) \mid q  \in   \mathcal{P}\}, \{  \{  \ast \} \}   \rangle && \notag 
\end{flalign}
\end{defi}

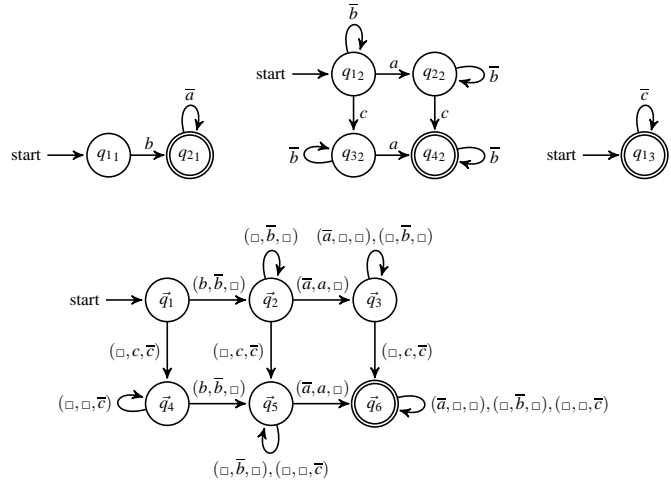
\begin{figure}[tb]
\center
\begin{tikzpicture}[->,>=stealth',shorten >=1pt,auto,node distance=1.8cm,
                    semithick, every node/.style={scale=0.6}]
  \tikzstyle{every state}=[fill=white,draw=black,text=black]

  \node[initial,state] (A)                    {${q_1}_1$};
  \node[state,accepting]         (B) [right of=A] {${q_2}_1$};

  \path (A) edge             					node{$b$} (B)
		  (B) edge [loop above]              node {$\overline{a}$} (B);
\end{tikzpicture} \quad
\begin{tikzpicture}[->,>=stealth',shorten >=1pt,auto,node distance=1.8cm,
                    semithick, every node/.style={scale=0.6}]
  \tikzstyle{every state}=[fill=white,draw=black,text=black]

  \node[initial,state] 				 (A)                  {${q_1}_2$};
  \node[state] 				         (B) [right of=A] {${q_2}_2$};
  \node[state]						 (C) [below of=A] {${q_3}_2$};
  \node[state,accepting]         (D) [right of=C] {${q_4}_2$};

  \path (A) edge             					node{$a$} (B)
			   edge             					node{$c$} (C)
			   edge  [loop above] 			node{$\overline b$} (A)
		  (B) edge   [loop right]         node {$\overline{b}$} (B)
				edge         				    node {$c$} (D)
		  (C) edge                            node {$a$} (D)
				edge [loop left]              node {$\overline b$} (C)
		  (D) edge [loop right]             node {$\overline b$} (D);
\end{tikzpicture} \quad
\begin{tikzpicture}[->,>=stealth',shorten >=1pt,auto,node distance=1.8cm,
                    semithick, every node/.style={scale=0.6}]
  \tikzstyle{every state}=[fill=white,draw=black,text=black]

  \node[initial,state,accepting] (A)                    {${q_1}_3$};

  \path (A)edge [loop above]              node {$\overline{c}$} (A);
\end{tikzpicture}  \\ \vspace{.5cm}
\begin{tikzpicture}[->,>=stealth',shorten >=1pt,auto,node distance=2.3cm,
                    semithick, every node/.style={scale=0.6}]
  \tikzstyle{every state}=[fill=white,draw=black,text=black]

  \node[initial,state] 				 (A)                    {$\vec{q_1}$};
  \node[state] 				         (B) [right of=A]   {$\vec{q_2}$};
  \node[state]						 (C) [right of=B]   {$\vec{q_3}$};
  \node[state]         				 (D) [below of=A] {$\vec{q_4}$};	
  \node[state] 				         (E) [right of=D]   {$\vec{q_5}$};
  \node[state,accepting]			 (F) [right of=E] {$\vec{q_6}$};

  \path (A)  edge             				   node{$(b,\overline b,\blk)$} (B)
			    edge             				   node[left]{$(\blk,c,\overline{c})$} (D)
		  (B)  edge   [loop above]           node {$(\blk, \overline b, \blk)$} (B)
				edge         				       node {$(\overline{a},a,\blk)$} (C)
				edge         				       node[left] {$(\blk,c,\overline{c})$} (E)
		  (C)  edge                             node {$(\blk,c,\overline{c})$} (F)
				edge [loop above]               node {$(\overline a, \blk, \blk), (\blk, \overline b, \blk)$} (C)
		  (D)  edge [loop left]             node {$(\blk, \blk, \overline c)$} (D)
				edge                             node{$(b,\overline b,\blk)$} (E)
		  (E)  edge[loop below]	               node {$ (\blk, \overline b, \blk), (\blk, \blk, \overline c)$} (E)
				edge              	     	   node {$(\overline{a},a,\blk)$} (F)
		   (F)  edge [loop right]             node {$(\overline a, \blk, \blk),(\blk, \overline b, \blk), (\blk, \blk, \overline c)$} (F);
\end{tikzpicture} \quad
\caption{The contract automata of Examples~\ref{ex:PCL} and~\ref{ex:hpcltranslation}, top from left to right:  $\llbracket Alice \rrbracket, \llbracket Bob \rrbracket, \llbracket Charlie \rrbracket$;
			  bottom: $\mathcal{K}_{\llbracket Alice \rrbracket \otimes \llbracket Bob \rrbracket \otimes \llbracket Charlie \rrbracket}$.
				}
\label{fig:pclautomata}
\end{figure}

As expected, a Horn formula is translated as the product of the automata raising from its components $\alpha_i$. 
In turn, a conjunction of atoms yields an automaton with a single state and loops driven by offers in bijection with the atoms.
Each state stores the number of requests that are waiting to fire, and $\{*\}$ stands for no requests.
A (standard) implication shuffles all the requests corresponding to the premises $a_j$ and then has the single offer corresponding to the conclusion $b$.
A contractual implication is similar, except that the offer ($\overline b$ in the definition) can occur at \emph{any} position in the shuffle, and from there onwards it will be always available. 
Note that there is no control on the number of times an offer can be taken, as H-PCL is not a linear logic.

\begin{exa}
Consider again Example~\ref{ex:biketoy}, and let us modify it to better illustrate some peculiarities of H-PCL.
Assume then that there are three kids: Alice, Bob and Charlie, who want to share some toys of theirs: a bike $b$, an aeroplane $a$ and a car $c$. 
The contract of Alice says 
``I will lend you my aeroplane provided that you lend me your bike''. 
The contract of Bob says 
``I will lend you my bike provided that in the future you will lend me your aeroplane and your car''. 
The contract of Charlie says ``I will lend you my car''.
The contract of Alice is expressed by the classical implication $b \rightarrow a$.
The contract of Bob is $(a \wedge c) \twoheadrightarrow b$, while the contract of Charlie is simply $c$.
The three contracts reach an agreement: the conjunction of the 
formulae representing the contracts entails all its atoms, that is 
$(b \rightarrow  a) \wedge ( (a \wedge c) \twoheadrightarrow b) \wedge c \vdash  a \wedge c \wedge b$.  

Figure~\ref{fig:pclautomata} shows the translation of $Alice \wedge Bob \wedge Charlie$, according to Definition~\ref{def:translationPCL}.
It is immediate verifying that the automaton is safe, since all its traces are in agreement.
\label{ex:PCL}
\end{exa}

The following proposition helps to understand the main result of this section.

\begin{restatable}{prop}{propuno}
\label{prop:1}
Given a H-PCL formula $p$ and the automaton $\llbracket p \rrbracket=\langle Q, q_0, A^{r}, A^{o}, T, F \rangle$:
\begin{enumerate}
\item $F = \{\vec q =\langle \{\ast \}, \ldots ,\{\ast \} \rangle\}$, and all $(\vec q, \vec a, \vec{q'})$ are such that 
$\vec {q'} = \vec q$ and ${\vec a}$ is an offer;

\item every state $\vec{q}=\langle J_1, \ldots, J_n \rangle$ has as many request or match outgoing transitions as the request actions prescribed by $\bigcup_{i \in 1 \ldots n} J_i$;
\item $\llbracket p \rrbracket$ is deterministic.
\end{enumerate}
\end{restatable}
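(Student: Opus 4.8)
The plan is to read the shape of $\llbracket p\rrbracket$ off Definition~\ref{def:translationPCL} by cases on the clause form, and then propagate it through the composition. First I would note that every clause automaton $\llbracket\alpha_i\rrbracket$ has rank $1$ and is a principal: for $\bigwedge_{j\in J}a_j$ it is the one-state automaton with an offer self-loop $\overline{a_j}$ for each $j\in J$ (and $A^r=\emptyset$), while for $(\bigwedge_{j\in J}a_j)\rightarrow b$ and $(\bigwedge_{j\in J}a_j)\twoheadrightarrow b$ its state space is $\mathcal P=\{q\cup\{\ast\}\mid q\in2^{J}\}$ and $A^r\cap co(A^o)=\emptyset$ because $b\neq a_j$ for all $j$ by Definition~\ref{def:hornpcl}. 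Hence, by the remarks after Definition~\ref{def:aproduct}, $\llbracket p\rrbracket=\boxtimes_{i\in I}\llbracket\alpha_i\rrbracket$ is the product $\bigotimes_{i\in I}\llbracket\alpha_i\rrbracket$ of these rank-$1$ automata, so a state of $\llbracket p\rrbracket$ is a vector $\langle J_1,\dots,J_n\rangle$ with each $J_i$ a state of $\llbracket\alpha_i\rrbracket$, and (Definition~\ref{def:prod}) its set of final states is the product of the components' sets of final states.

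The heart of the argument is an elementary ``shape lemma'' for one clause automaton, obtained by inspecting the transition sets in Definition~\ref{def:translationPCL}: (a) the unique final state is $\{\ast\}$; (b) from a state $J'\cup\{\ast\}$ with $\emptyset\neq J'\subseteq J$ the outgoing transitions are exactly one request transition removing $j$, namely $J'\cup\{\ast\}\xrightarrow{\,a_j\,}(J'\setminus\{j\})\cup\{\ast\}$, for each $j\in J'$, together with offer self-loops ($\overline{a_j}$ for every $j\in J$ in the atom case; the single $\overline b$ available only at $\{\ast\}$ for $\rightarrow$; the single $\overline b$ available at every state for $\twoheadrightarrow$) — in particular, from $\{\ast\}$ only offer self-loops leave; (c) the clause automaton is deterministic, since the request labels $a_j$ are pairwise distinct within a clause (Definition~\ref{def:hornpcl}) and, being requests, differ from the offer label $\overline b$, while in the atom case the self-loop labels are pairwise distinct. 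Items 1 and 3 then follow by propagation: $F=\{\langle\{\ast\},\dots,\{\ast\}\rangle\}$ is immediate from (a); any transition leaving $\langle\{\ast\},\dots,\{\ast\}\rangle$ is either a single-component move, which by (b) is an offer self-loop and hence has $\vec q'=\vec q$ with $\vec a$ an offer, or a match, which is ruled out by (b) since it would need some component to fire a request out of $\{\ast\}$; and a product of deterministic contract automata is again deterministic, because the label $\vec a$ of a transition already determines whether it is a single-component move (one non-$\blk$ entry, fixing the active component and its local action) or a match (two non-$\blk$ entries, fixing the two components and their local actions), whence local determinism fixes the target, and the side condition of Definition~\ref{def:prod} forbidding standalone moves that can be matched only removes transitions.

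Item 2 is where I expect the only real friction. Fixing $\vec q=\langle J_1,\dots,J_n\rangle$, by (b) the pending requests at $\vec q$ are precisely the atoms $a_j$ for $j\in J_i\setminus\{\ast\}$ as $i$ ranges over $1,\dots,n$ — exactly those collected in $\bigcup_{i\in1\ldots n}J_i$ — and by Definition~\ref{def:prod} every outgoing request-or-match transition of $\vec q$ arises by some component firing one such pending request, either standalone (when no component currently offers $co(a_j)$) or as a match with one that does. The task is to show this correspondence is one-to-one; this is where the well-formedness conditions of Definition~\ref{def:hornpcl} enter, ensuring that at any product state a given atom is enabled as an offer by at most one component, so that a matchable pending request has a single match partner and an unmatchable one its single standalone transition. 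Everything outside this counting step is routine unfolding of the definitions.
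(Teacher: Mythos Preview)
Your approach is essentially the paper's own, only spelled out in far more detail: the paper simply says item~1 ``follows immediately from Definition~\ref{def:translationPCL}'', item~3 ``follows immediately by the translation and by the condition in Definition~\ref{def:hornpcl}, that all the atoms are different'', and for item~2 it inspects the request transitions of the two implicational clause forms and then remarks that under $\boxtimes$ ``some requests may be matched with corresponding offers, but no new request can be originated.'' Your shape lemma (a)--(c) and the propagation through the product are exactly this argument made explicit, and your identification of the counting in item~2 as the only non-routine step---requiring that each pending request have at most one offer partner---is precisely what the paper's terse sentence glosses over. Note that the uniqueness you invoke there rests on the same global ``all atoms different'' reading of Definition~\ref{def:hornpcl} that the paper itself relies on for item~3, so you are not assuming anything stronger than the paper does.
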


As said above, when seen in terms of composed contracts, the formula $p \vdash \lambda(p)$ expresses that all the requests made by principals in $p$ must be fulfilled sooner or later.
We now show that the contract automaton $\llbracket p \rrbracket$ admits agreement if and only if $p \vdash \lambda(p)$ is provable.

\begin{restatable}{thm}{theNPCLagreement}\label{the:1NPCLagreement}
Given a H-PCL formula $p$ we have 
 $p \vdash \lambda(p) $ if and only if  $\llbracket p \rrbracket$ admits agreement.
\label{the:PCL}
\end{restatable}

We have constructively proved that a formula $p$ fulfils all its obligations if and only if the corresponding automaton $\llbracket p \rrbracket$ admits agreement.  
Interestingly, a contractual implication $a \twoheadrightarrow b$ corresponds to a contract automaton that is enabled to fire the conclusion $b$ at each state; while for the standard implication $c \rightarrow d$ the conclusion is available only after the premise $c$ has been satisfied.

\begin{exa}
Consider Example~\ref{ex:PCL}. The conjunction of all the formulas entails its atoms, indeed
the corresponding translation into contract automata displayed in Figure~\ref{fig:pclautomata}
admits agreement.
\end{exa}

Needless to say, the provability of $p \vdash \lambda(p) $ implies that $\llbracket p \rrbracket$ admits weak agreement.
However, the implication is in one direction only, as shown by the following example.

\begin{exa}
Consider the H-PCL formula $p=(b \rightarrow a) \wedge (a \rightarrow b)$. 
We have that $\llbracket p \rrbracket$ does not admit agreement and
$p \not \vdash \lambda(p)$. Nevertheless $\llbracket p \rrbracket$
  admits weak agreement. 
For example,  
$(b,-)(\overline a,a)(-,\overline b) \in \mathscr{L}(\llbracket p \rrbracket)$ is a trace in weak agreement.
\end{exa}

%

As a matter of fact, weak agreement implies provability when a formula $p$ contains no (standard) implications, as stated below.

%
%
%
%

\begin{restatable}{thm}{thepclweakautomata}\label{the:pclweakautomata}
Let $p$ be a H-PCL formula with no occurrence of standard implications $\rightarrow$, then \mbox{$p \vdash \lambda(p)$} if and only if $\llbracket p \rrbracket$ admits weak agreement.
\end{restatable}

This result helps to gain insights on the relation between the contractual implication $\twoheadrightarrow$ and the property of weak agreement. Indeed, checking weak agreement on a contract automaton $\llbracket p \rrbracket$ is equivalent to prove that the formula $p$ fulfils all its obligations (i.e. $p \vdash \lambda(p)$) \emph{only if}  $p$ contains no standard implication $\rightarrow$.

\subsection{Intuitional Linear Logic with Mix}
%
In this sub-section, we will interpret a fragment of the Intuitionistic Linear
Logic with Mix (\illmix)~ \cite{benton1995mixed} in terms of contract automata.
Originally, this logic has been used for modelling exchange of resources between partners with the possibility of recording debts, through the so-called \emph{negative atoms}.
Below, we slightly modify Example~\ref{ex:PCL} to better illustrate some features of \illmix.
%
%
%
%
\begin{exa}
Alice, Bob and Charlie want to share their bike, aeroplane and car, according to the same contracts declared in Example~\ref{ex:PCL}.
In \illmix\ the contract of Alice is expressed by the linear implication $b \multimap a$;
the contract of Bob is $a^\bot \otimes c^\bot \otimes b$ ($\otimes$ is the tensor product of Linear Logic); 
the contract of Charlie is the offer $c$.
The intuition is that a positive atom, e.g.\ $c$ in the contract of Charlie, represents a resource that can be used; similarly for the $b$ of Bob.
Instead, the negative atoms ($a^\bot$ and $c^\bot$ of Bob) represent missing resources that however can be taken on credit to be honoured later on.
The implication of Alice says that the resource $a$ is produced by consuming $b$, provided $b$ is available.
(There are some restrictions on the occurrences of negative atoms made precise below).
The composition (via tensor product) of the three contracts is successful, in that all resources are exchanged and all debts honoured. 
Indeed, it is possible to prove that all the negative atoms, i.e.\ all the requests, will be eventually satisfied.
In this case we have that all the resources are consumed, and that the following sequent is provable: $Alice\otimes Bob \otimes Charlie \vdash $.  
\label{ex:illmix}
\end{exa}

\begin{figure}[tb]
\center
\begin{tikzpicture}[->,>=stealth',shorten >=1pt,auto,node distance=1.8cm,
                    semithick, every node/.style={scale=0.6}]
  \tikzstyle{every state}=[fill=white,draw=black,text=black]

  \node[initial,state] (A)                    {${q_1}_1$};
  \node[state]        			 (B) [right of=A] {${q_2}_1$};
  \node[state,accepting]         (C) [right of=B] {${q_3}_1$};
  \path (A) edge             					node{$b$} (B)
		  (B) edge            node {$\overline{a}$} (C);
\end{tikzpicture} \quad
\begin{tikzpicture}[->,>=stealth',shorten >=1pt,auto,node distance=1.8cm,
                    semithick, every node/.style={scale=0.6}]
  \tikzstyle{every state}=[fill=white,draw=black,text=black]

  \node[initial,state] 				 (A)                  {${q_1}_2$};
  \node[state] 				         (B) [right of=A] {${q_2}_2$};
  \node[state]						 (C) [below of=A] {${q_3}_2$};
  \node[state]         (D) [right of=C] 			{${q_4}_2$};
   \node[state] 				 (E) [right of=B]                 {${q_5}_2$};
  \node[state,accepting] 				         (F) [right of=E] {${q_6}_2$};
  \node[state]						 (G) [above of=E] {${q_7}_2$};
  \node[state]         (H) [right of=G] {${q_8}_2$};

  \path (A) edge             					node{$a$} (B)
			   edge             					node{$c$} (C)
			   edge 			node{\rotatebox{45}{$\overline b$}} (G)
		  (B) edge           node[below]{\hspace{-55pt}\rotatebox{45}{$\overline b$}} (H)
				edge         				    node {$c$} (D)
		  (C) edge                            node [below]{$a$} (D)
				edge               node {\rotatebox{45}{$\overline b$}}(E)
		  (D) edge              node{\rotatebox{45}{$\overline b$}} (F)
		  (G) edge             					node{$a$} (H)
			   edge             					node{$c$} (E)
		  (H)	edge         				    node {$c$} (F)
		  (E) edge                            node {$a$} (F);
\end{tikzpicture} \\ \vspace{5mm}
\begin{tikzpicture}[->,>=stealth',shorten >=1pt,auto,node distance=1.8cm,
                    semithick, every node/.style={scale=0.6}]
  \tikzstyle{every state}=[fill=white,draw=black,text=black]

  \node[initial,state] (A)                    {${q_1}_3$};
  \node[state,accepting](B)[right of=A]					   {${q_2}_3$};

  \path (A)edge             node {$\overline{c}$} (B);
\end{tikzpicture}  \quad
\begin{tikzpicture}[->,>=stealth',shorten >=1pt,auto,node distance=2.3cm,
                    semithick, every node/.style={scale=0.6}]
  \tikzstyle{every state}=[fill=white,draw=black,text=black]

  \node[initial,state] 				 (A)                    {$\vec{q_1}$};
  \node[state] 				         (B) [right of=A]   {$\vec{q_2}$};
  \node[state]						 (C) [right of=B]   {$\vec{q_3}$};
  \node[state]         				 (D) [below of=A] {$\vec{q_4}$};	
  \node[state] 				         (E) [right of=D]   {$\vec{q_5}$};
  \node[state,accepting]			 (F) [right of=E] {$\vec{q_6}$};

  \path (A)  edge             				   node{$(b,\overline b,\blk)$} (B)
			    edge             				   node[left]{$(\blk,c,\overline{c})$} (D)
		  (B)  
				edge         				       node {$(\overline{a},a,\blk)$} (C)
				edge         				       node[left] {$(\blk,c,\overline{c})$} (E)
		  (C)  edge                             node {$(\blk,c,\overline{c})$} (F)
				
		  (D)  
				edge                             node{$(b,\overline b,\blk)$} (E)
		  (E) 
				edge              	     	   node {$(\overline{a},a,\blk)$} (F);
\end{tikzpicture} \quad
	\caption{The contract automata of Example~\ref{ex:illmix}. 
	Top from left to right: $\llbracket Alice \rrbracket, \llbracket Bob \rrbracket$. 
 Bottom from left to right: $\llbracket Charlie \rrbracket,\mathcal{K}_{\llbracket Alice  \rrbracket \boxtimes \llbracket Bob \rrbracket \boxtimes \llbracket  Charlie \rrbracket}$.
	}
\label{fig:exampleLinear}
\end{figure}
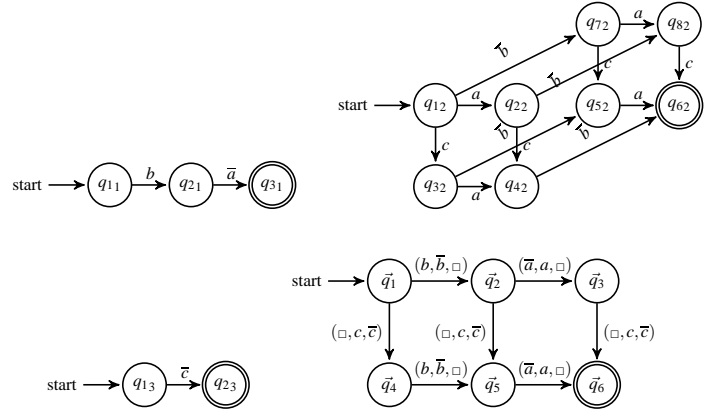

We now recall the basics of \illmix.
Let $\A,\Abot$ be respectively the set of \emph{positive} and \emph{negative atoms}, ranged over by 
$a,b,c,\ldots \in \A$ and by $a^\bot,b^\bot,c^\bot,\cdots \in \Abot$. Let $\La = \A \cup \Abot$ be the set
of \emph{literals}, and assume $Y \subseteq \A, X \subseteq \La$, where $X$ does not contain any atom $a$ and its negation $a^\bot$, according to Definition~\ref{def:contract} 
(recall that a principal automaton is such that $A^r \cap co(A^o) = \emptyset$).
A \emph{positive} tensor product is a tensor product of positive atoms. 

As said, we only consider a fragment of Horn $ILL^{mix}$ called H-\illmix, defined below.  
It only has tensor products and \emph{Horn implications}:
$\bigotimes_{b \in Y} b \multimap \bigotimes_{a \in X} a$. 
Note that the premises of the Horn implications
are always positive tensor products, and the conclusions are tensor products of literals, possibly negative.

Since the treatment for non-linear implications of \illmix \ is similar to that presented in Section~\ref{sect:1NPCL}, we feel free to only deal below with linear implications and tensor products of literals.

\begin{defi}[H-$ILL^{mix}$]\label{def:hornillmix}
The Horn formulae $p, p_i, ...$ and the clauses $\alpha, \alpha_i, ... $ of  \emph{H-$ILL^{mix}$} are defined by
\[
	p ::= \bigotimes_{i \in I} \ \alpha_i \qquad 
	\alpha ::= \bigotimes_{a \in X} a \ \vert \   \bigotimes_{b \in Y} b \multimap \bigotimes_{a \in X} a 
\]
where $|I|\geq 2$; $|X|,|Y| \geq 1$; $\{ a,  a^\bot\} \not \subseteq X$; and $b \in Y$ implies $b \not \in X$.
\end{defi}

The subset of the rules of the sequent calculus of \illmix\ relevant to our treatment is in Figure~\ref{fig:sequent_illmix}, where $A,B$ stand for a Horn formula $p$ or clause $\alpha$, while  $\gamma$ may also be empty (note that 
in rule $(Neg L)$, $A = a$ and so $A^\bot= a^\bot$);
$\Gamma$ and $\Gamma'$ stand for multi-sets containing  Horn formulae or clauses; and
$\Gamma, \Gamma'$ is the multi-set union of $\Gamma$ and $\Gamma'$, assuming $\Gamma, \emptyset =\Gamma$.
The complete set of rules for \illmix\ is in~\cite{benton1995mixed}, and can be found in the appendix.

\begin{figure}[tb]
\[
\irule{}{A \vdash A}{  \ \ Ax}
\qquad
\irule{\Gamma \vdash \ \ \Gamma' \vdash \gamma}{ \Gamma, \Gamma' \vdash \gamma}{ \ \ Mix}
\qquad 
\irule{\Gamma \vdash A}{\Gamma,A^\bot \vdash}{ \ \ Neg L }
\] \\
\[
\irule{\Gamma, A, B \vdash \gamma}{\Gamma, A \otimes B \vdash \gamma}{ \ \ \otimes L }
\qquad
\irule{\Gamma \vdash A \quad \Gamma' \vdash B}{\Gamma, \Gamma' \vdash A \otimes B}{ \ \ \otimes R }
\]			 \\
\[
\irule{\Gamma \vdash A \quad \Gamma',B \vdash \gamma}{ \Gamma, \Gamma', A \multimap B \vdash \gamma}{ \ \ \multimap  L }
\qquad
\irule{\Gamma, A \vdash B}{ \Gamma \vdash A \multimap B}{ \ \ \multimap R }
\]
\caption{A subset of the rules of the sequent calculus of $ILL^{mix}$.}
\label{fig:sequent_illmix}
\end{figure}

The following auxiliary definition of the concatenation of two automata helps to translate a H-\illmix\ formula.

\begin{defi}[Concatenation of CA]
Given two principal contract automata \\ $\mathcal A^1 = \langle  Q^1, q_0^1, A^{r^1}, A^{o^1}, T^1,F^1\rangle$ 
and $\mathcal A^2 = \langle Q^2, q_0^2, A^{r^2}, A^{o^2}, T^2,F^2\rangle$,
 their \emph{concatenation} is
\[
\begin{array}{ll}
\mathcal A^1 \cdot \mathcal A^2 = & \langle  Q^1 \cup  Q^2, q_0^1, A^{r1} \cup A^{r2}, A^{o1} \cup A^{o2}, \\
& (T^1  \setminus \{( q,  a,  q') \in T^1 \mid q' \in F^1  \} ) \cup T^2  \\
&  \qquad \quad \   \cup \{(q, a,  q_0^2) \mid (q, a, q') \in T^1, q' \in F^1  \} , F^2 \rangle
\end{array}
\]
\end{defi}

Concatenation is almost standard, with the proviso that we replace every transition of  $\mathcal A^1$ leading to a final state with a transition with the same label leading to the initial state of $\mathcal A^2$.
Note also that loops can be ignored, because the automata obtained by the translation in Definition~\ref{def:translation} below have no cycles.

Similarly to what has been done in the previous sub-section, a tensor product is rendered as all the possible orders in which the automaton can fire (the actions corresponding to) its literals.
If the literal is a positive atom, then it becomes an offer, while it originates a request if the atom is negative.
A linear implication is rendered as the concatenation of the automaton coming from the premise, and that of the conclusion, with the following proviso.
In the premise all the atoms are positive, but they are \emph{all} rendered as \emph{requests} (i.e. as negative atoms), and shuffled.
The states are in correspondence with the atoms still to be fired and $\{*\}$ stands for the (final) state 
where all atoms have been fired.

\begin{defi}[Translation of H-\illmix]
\label{def:translation} 
Given a set of atoms $X$, let $P = \{ q \cup \{ \ast \} \mid q \in 2^X \}$ with typical element $Z$.
The translation of a H-\illmix\ formula $p$  into a contract automata $\llbracket p \rrbracket$ 
is inductively defined by the following rules:\\ \\
$
\llbracket \bigotimes_{i \in I} \alpha_i \rrbracket =  \boxtimes_{i \in I} \llbracket \alpha_i \rrbracket 
$\\
\begin{flalign}
\llbracket \bigotimes_{a \in X} a  \rrbracket = &\langle  P,
	X \cup \{\ast\}, 
	 \{ a \mid a^\bot \in X \cap \Abot\},\{\overline a \mid a \in X \cap \A\}, && \notag  \\
 & \{  ( Z \cup \{ a^\bot\}, a, Z) \mid  Z \cup \{a^\bot\} \in P, a^\bot \in X\} \cup && \notag  \\
 & \{  ( Z \cup \{a\}, \overline a, Z) \mid  Z \cup \{a\} \in P, a \in X\}, && \notag  \\
 & \{  \{ \ast \} \} \rangle&&\notag
\end{flalign}
\\
$\llbracket \bigotimes_{b \in Y} b \multimap \bigotimes_{a \in X} a \rrbracket =
   \llbracket \bigotimes_{b \in Y} b^\bot  \rrbracket \cdot \llbracket \bigotimes_{a \in X} a  \rrbracket 
$\bigskip

Moreover, we homomorphically translate multi-sets of Horn formulae and clauses as follows:
\[
\llbracket p, \Gamma \rrbracket = \llbracket p \rrbracket \boxtimes \llbracket \Gamma \rrbracket \qquad \qquad
\llbracket \alpha, \Gamma \rrbracket = \llbracket \alpha \rrbracket \boxtimes \llbracket \Gamma \rrbracket \\
\]
\end{defi}

The automata obtained by translating the formulae representing the contracts of Alice, Bob and Charlie in Example~\ref{ex:illmix} are in Figure~\ref{fig:exampleLinear}.

\begin{defi}
A sequent $\Gamma \vdash Z$ is \emph{honoured} if and only if it is provable and $Z$ is a positive tensor product or empty.
\end{defi}

Intuitively, honoured sequents can be proved and additionally they have no negative atoms, i.e.\ no debts.
The main result of this section is that a sequent $\Gamma \vdash Z$ is honoured
if and only if the corresponding contract automaton $\llbracket \Gamma \rrbracket$ admits agreement.
An important outcome is the possibility of expressing each H-\illmix\ formula as a contract automaton $\mathcal A$, so to use our verification techniques. 
 It is then possible to compose several H-\illmix\ formulae through the composition operators of contract automata, exploiting   compositionality and the related results (for example Theorem~\ref{the:composition}) for efficiently checking the provability of formulae in H-\illmix. In the statement below and in the proofs in the appendix, we say that  $\llbracket \Gamma \rrbracket$ admits agreement on $Z$ whenever there exists a trace in $\mathscr{L} (\llbracket \Gamma \rrbracket)$ only made of match actions and offers in correspondence with the literals in $Z$.

\begin{restatable}{thm}{theautomaillmix}\label{the:automaillmix}
Given a multi-set of Horn formulae $\Gamma$, we have that
\[
\Gamma \vdash Z \text{ is an honoured sequent if and only if } 
\llbracket \Gamma \rrbracket \text{ admits agreement on } Z
\]
\end{restatable}

Through this result we have linked the problem of verifying the correctness of a composition of services 
to the generation of a deduction tree that proves a H-\illmix\ formula.
Moreover, we have shown that the possibility of recording debts in H-\illmix\  solves circularity issues arising from a composition of services.

\section{An example}
\label{sect:casestudy}

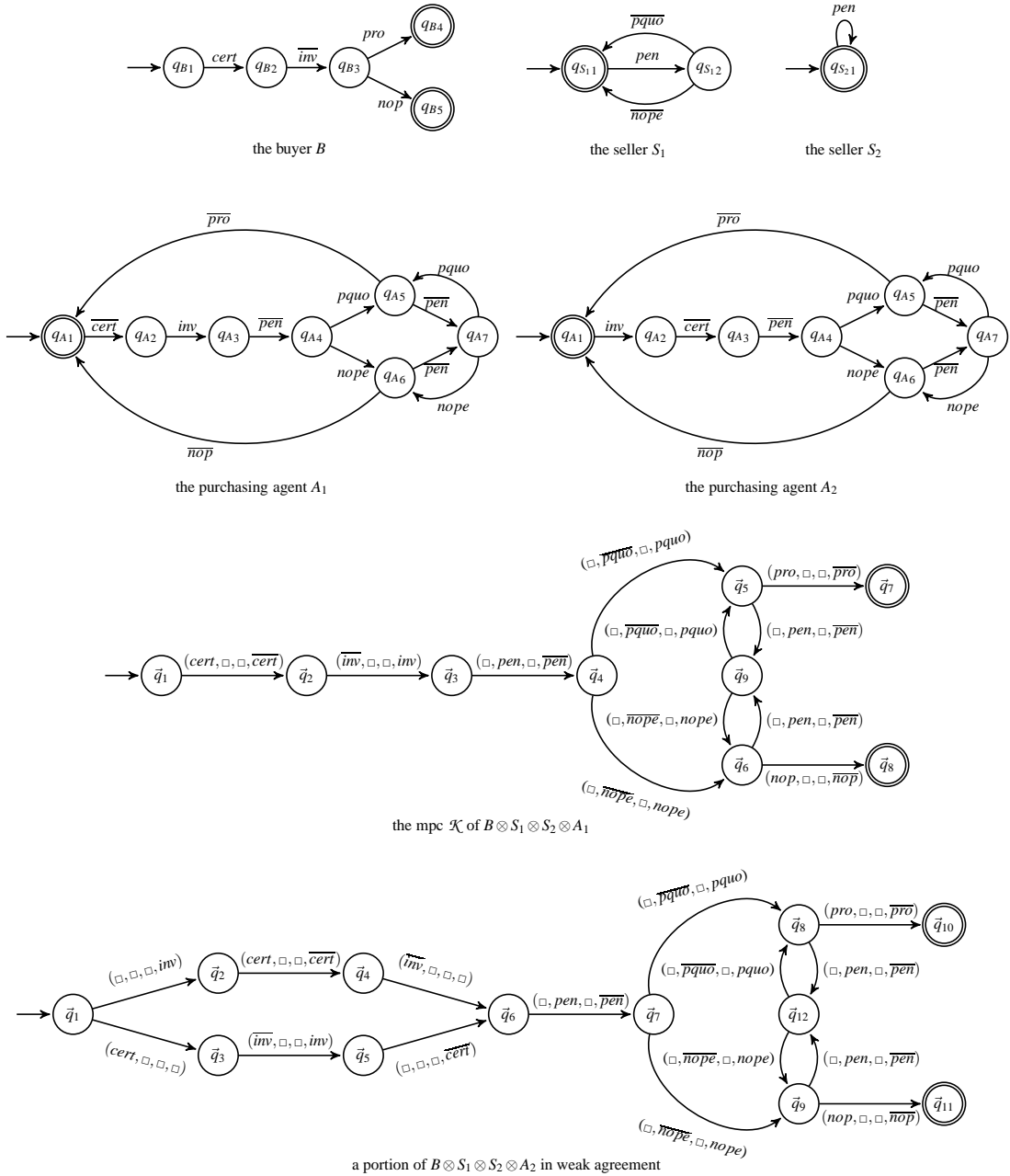
\begin{figure}[tb]
\center
\begin{tikzpicture}[->,>=stealth',shorten >=1pt,auto,node distance=2.0cm,
                    semithick, every node/.style={scale=0.6},initial text={}]
  \tikzstyle{every state}=[fill=white,draw=black,text=black]

  \node[initial,state] (A)                    {${q_B}_1$};
  \node[state] (B)[right of=A]                    {${q_B}_2$};
  \node[state] (C)[right of=B]                    {${q_B}_3$};
  \node[state,accepting]         (D) [right of=C, above=0.3cm] {${q_B}_4$};
  \node[state,accepting]         (E) [right of=C, below=0.3cm] {${q_B}_5$};

  \path (A)			edge             node{$cert$} (B)
	     (B)			edge             node{$\overline{inv}$} (C)
 	     (C)		     edge             node{$pro$} (D)
 	     (C)		     edge             node[below=0.1cm] {$nop$} (E);
\node [below=1cm,  scale=1] at (B)
        {
        \hspace{1cm} the buyer $B$
        };
\end{tikzpicture}\hspace{15pt}
\begin{tikzpicture}[->,>=stealth',shorten >=1pt,auto,node distance=3.0cm,
                    semithick, every node/.style={scale=0.6},initial text={}]
  \tikzstyle{every state}=[fill=white,draw=black,text=black]

  \node[initial,state,accepting] (A)                    {${q_{S_1}}_1$};
  \node[state] (B)[right of=A]                    {${q_{S_1}}_2$};

  \path (A)			edge             node{$pen$} (B)
 	     (B)		     edge   [bend left=45]          node[below]{$\overline{nope}$} (A)
 	     (B)		     edge   [bend right=45]      node[above] {$\overline{pquo}$} (A);
\node [below=1cm,  scale=1] at (A)
        {
        \hspace{2cm} the seller $S_1$
        };
\end{tikzpicture}\hspace{15pt}
\begin{tikzpicture}[->,>=stealth',shorten >=1pt,auto,node distance=2.0cm,
                    semithick, every node/.style={scale=0.6},initial text={}]
  \tikzstyle{every state}=[fill=white,draw=black,text=black]

  \node[initial,state,accepting] (A)                    {${q_{S_2}}_1$};

  \path (A)			edge[loop above]             node{$pen$} (A);
\node [below=1cm,  scale=1] at (A)
        {
        the seller $S_2$
        };
\end{tikzpicture}\\
\vspace{0.5cm}
%
%
\begin{tikzpicture}[->,>=stealth',shorten >=1pt,auto,node distance=2.0cm,
                    semithick, every node/.style={scale=0.6},initial text={}]
  \tikzstyle{every state}=[fill=white,draw=black,text=black]

  \node[initial,state,accepting] (A)                    {${q_A}_1$};
  \node[state] (B)[right of=A]                    {${q_A}_2$};
  \node[state] (C)[right of=B]                    {${q_A}_3$};
 \node[state] (D)   [right of=C]                			 {${q_A}_4$};
  \node[state] (E)[right of=D, above=0.3cm]                    {${q_A}_5$};
\node[state](G)[right of=D, below=0.3cm]				{${q_A}_6$};
\node[state](F)[right of=E, below=0.3cm]				{${q_A}_7$};

  \path (A)			edge             node{$\overline {cert}$} (B)
	     (B)			edge             node{$inv$} (C)
 	     (C)		     edge             node{$\overline{pen}$} (D)
	     (D)		     edge             node[above=0.1cm]{$pquo$} (E)
 	     (D)		     edge        		node[below=0.1cm] {$nope$} (G)
		(G)			edge   [bend left=50]          node{$\overline{nop}$} (A)		
		(E)			edge   [bend right=50]          node[above]{$\overline{pro}$} (A)
		(E)			edge          node[above]{$\overline{pen}$} (F)
	     (F)		     edge   [bend right=65]          node[above]{$pquo$} (E)
 	     (F)		     edge   [bend left=65]      node[below=0.1cm] {$nope$} (G)
		(G)			edge          node[below]{$\overline{pen}$} (F);
\node [below=2cm,  scale=1] at (C)
        {
        \hspace{1cm} the purchasing agent $A_1$
        };
\end{tikzpicture}
%
%
\begin{tikzpicture}[->,>=stealth',shorten >=1pt,auto,node distance=2.0cm,
                    semithick, every node/.style={scale=0.6},initial text={}]
  \tikzstyle{every state}=[fill=white,draw=black,text=black]

  \node[initial,state,accepting] (A)                    {${q_A}_1$};
  \node[state] (B)[right of=A]                    {${q_A}_2$};
  \node[state] (C)[right of=B]                    {${q_A}_3$};
 \node[state] (D)   [right of=C]                			 {${q_A}_4$};
  \node[state] (E)[right of=D, above=0.3cm]                    {${q_A}_5$};
\node[state](G)[right of=D, below=0.3cm]				{${q_A}_6$};
\node[state](F)[right of=E, below=0.3cm]				{${q_A}_7$};

  \path (A)			edge             node{$inv$} (B)
	     (B)			edge             node{$\overline{cert}$} (C)
 	     (C)		     edge             node{$\overline{pen}$} (D)
	     (D)		     edge             node[above=0.1cm]{$pquo$} (E)
 	     (D)		     edge        		node[below=0.1cm] {$nope$} (G)
		(G)			edge   [bend left=50]          node{$\overline{nop}$} (A)		
		(E)			edge   [bend right=50]          node[above]{$\overline{pro}$} (A)
		(E)			edge          node[above]{$\overline{pen}$} (F)
	     (F)		     edge   [bend right=65]          node[above]{$pquo$} (E)
 	     (F)		     edge   [bend left=65]      node[below=0.1cm] {$nope$} (G)
		(G)			edge          node[below]{$\overline{pen}$} (F);
\node [below=2cm,  scale=1] at (C)
        {
        \hspace{1cm} the purchasing agent $A_2$
        };
\end{tikzpicture}\\
\vspace{.3cm}
%
%
\begin{tikzpicture}[->,>=stealth',shorten >=1pt,auto,node distance=3.5cm,
                    semithick, every node/.style={scale=0.6}, initial text={}]
  \tikzstyle{every state}=[fill=white,draw=black,text=black]

  \node[initial,state] (A)                    {$\vec q_1$};
  \node[state] (B)[right of=A]                    {$\vec q_2$};
  \node[state] (C)[right of=B]                    {$\vec q_3$};
 \node[state] (D)   [right of=C]                			 {$\vec q_4$};
  \node[state] (E)[right of=D, above=1cm]                    {$\vec q_5$};
\node[state](G)[right of=D, below=1cm]				{$\vec q_6$};
\node[state,accepting](F)[right of=E]						{$\vec q_7$};
\node[state,accepting](H)[right of=G]						{$\vec q_8$};
\node[state](I)[right of=D]		{$\vec q_9$};

  \path (A)			edge             node{$(cert,\blk,\blk,\overline{cert})$} (B)
	     (B)			edge             node{$(\overline{inv}, \blk, \blk, inv)$} (C)
 	     (C)		     edge             node{$(\blk,pen,\blk,\overline{pen})$} (D)
	     (D)		     edge [bend left=70]            node[above]{\hspace{-1pt}\rotatebox{15} {($\blk,\overline{pquo},\blk,pquo$)}} (E)
 	     (D)		     edge   [bend right=70]     		node[below]{\hspace{-1pt}\rotatebox{-15}{($\blk, \overline{nope},\blk,nope$)}} (G)
		(G)			edge           node[below]{$(nop,\blk,\blk,\overline{nop})$} (H)		
		(E)			edge        node{$(pro,\blk,\blk,\overline{pro})$} (F)
		(E)			edge [bend left]      node{$(\blk,pen,\blk,\overline{pen})$} (I)
		(G)			edge[bend right]       node[right]{$(\blk,pen,\blk,\overline{pen})$} (I)
		(I)			edge [bend left]      node {($\blk,\overline{pquo},\blk,pquo$)} (E)
		(I)			edge [bend right]      node[left]{($\blk, \overline{nope},\blk,nope$)} (G);
\node [below=2cm,  scale=1] at (C)
        {
        \hspace{1.8cm} the mpc $\mathcal K$ of $B\otimes S_1 \otimes S_2 \otimes A_1$
        };
\end{tikzpicture}\\
\vspace{0.3cm}
%
%
\begin{tikzpicture}[->,>=stealth',shorten >=1pt,auto,node distance=3.5cm,
                    semithick, every node/.style={scale=0.6}, initial text={}]
  \tikzstyle{every state}=[fill=white,draw=black,text=black]

  \node[initial,state] (A)                    {$\vec q_1$};
  \node[state] (B)[right of=A,above=0.3cm]                    {$\vec q_2$};

\node[state](I)[right of=A,below=0.3cm]		{$\vec q_3$};
\node[state](L)[right of=B]						{$\vec q_{4}$};
\node[state](M)[right of=I]					{$\vec q_{5}$};
  \node[state] (C)[right of=L,below=0.3cm]                    {$\vec q_6$};
 \node[state] (D)   [right of=C]                			 {$\vec q_7$};
  \node[state] (E)[right of=D, above=1cm]                    {$\vec q_8$};
\node[state](G)[right of=D, below=1cm]				{$\vec q_9$};
\node[state,accepting](F)[right of=E]						{$\vec q_{10}$};
\node[state,accepting](H)[right of=G]						{$\vec q_{11}$};
\node[state](N)[right of=D]		{$\vec q_{12}$};

  \path (A)			edge             node[above]{\rotatebox{15}{$(\blk,\blk,\blk,inv)$}} (B)
		 (A)			edge             node[below]{\rotatebox{-15}{$(cert,\blk,\blk,\blk)$}} (I)
	     (B)			edge             node{$(cert,\blk,\blk,\overline{cert})$} (L)
 	 (L)	edge	node[above]{\rotatebox{-15}{$(\overline{inv}, \blk, \blk, \blk)$}}(C)
	(I)	edge	node{$(\overline{inv},\blk,\blk,inv)$}(M)
        (M) edge	node[below]{\rotatebox{15}{$(\blk, \blk, \blk, \overline{cert})$}}(C)
 	     (C)		     edge             node{$(\blk,pen,\blk,\overline{pen})$} (D)
	     (D)		     edge [bend left=70]            node[above]{\hspace{-1pt}\rotatebox{15} {($\blk,\overline{pquo},\blk,pquo$)}} (E)
 	     (D)		     edge   [bend right=70]     		node[below]{\hspace{-1pt}\rotatebox{-15}{($\blk, \overline{nope},\blk,nope$)}} (G)
		(G)			edge           node[below]{$(nop,\blk,\blk,\overline{nop})$} (H)		
		(E)			edge        node{$(pro,\blk,\blk,\overline{pro})$} (F)
		(E)			edge [bend left]      node{$(\blk,pen,\blk,\overline{pen})$} (N)
		(G)			edge[bend right]       node[right]{$(\blk,pen,\blk,\overline{pen})$} (N)
		(N)			edge [bend left]      node {($\blk,\overline{pquo},\blk,pquo$)} (E)
		(N)			edge [bend right]      node[left]{($\blk, \overline{nope},\blk,nope$)} (G);
\node [below=2.0cm, align=center, scale=1] at (C)
        {
          a portion of $B\otimes S_1 \otimes S_2 \otimes A_2$ in weak agreement
        }; 
\end{tikzpicture}\hspace{13pt}
\caption{The contract automata for the example}
\label{fig:casestudy}
\end{figure}

In this section we consider a well-known case study taken from~\cite{pel03}.
This is a purchasing system scenario, where a manufacturer (the buyer)
wants to build a product.
To configure it, the buyer lists in an inventory the needed components and contacts a purchasing agent.
The agent looks for suppliers of these components, and eventually sends back to the buyer its proposal, if any.
A supplier is assumed to signal whether it can fulfil a request or not; if neither may happen, the interactions between it and the purchasing agent are rolled back, so as to guarantee the transactional integrity of the overall process.
A description of the WSDL of the services, as well as 
the BPEL process from the purchasing agent's perspective are in~\cite{pel03}, where
the transactional integrity is maintained using the tags
 \texttt{<faultHandlers>} and \texttt{<scope>} of BPEL.
 
We slightly modify the original protocol, where the purchasing agent guarantees its identity to the buyer through a public-key certificate.
For brevity, here we assume to have two sellers $S_1$ and $S_2$, and two purchasing agents $A_1$ and $A_2$, that behave differently.
A service instance involves the buyer, an agent and both sellers.
The buyer $B$ requires the certificate of an agent (action $cert$), then it 
offers the inventory requirements ($\overline{inv}$). 
Finally, it terminates by receiving either a proposal ($pro$) or a negative message ($nop$), if no proposal can be formulated.
The seller $S_1$ waits for a request ($pen$) of a component from an agent. 
It then replies by offering
a quote for that part ($\overline{pquo}$), or a negative message ($\overline{nope}$) if it is unavailable, and restarts.
The second seller $S_2$ always accepts a request, but never replies.
The first agent $A_1$ offers its certificate ($\overline{cert}$), then requires the inventory list ($inv$).
It then sends a request to and waits for a reply from the sellers. 
The agent must communicate at least with one supplier before replying to the buyer, and it can 
span over all the available suppliers in the network, unknown a priori, before compiling its proposal.
Finally, it sends to the buyer a proposal ($\overline{prop}$), or the negative message ($\overline{nop}$). 
The second agent $A_2$ behaves similarly to $A_1$, except the first two actions are exchanged: before sending its certificate to $B$, it first requires the inventory list.

In Figure~\ref{fig:casestudy} from top to bottom, we display, from left to right, the automata $B, S_1$ and $S_2$; the automata  $A_1$ and $A_2$; then the most permissive controller $\mathcal K$ of  $B \otimes S_1 \otimes S_2 \otimes A_1$ (the whole composition is omitted to save space); finally a portion of $B \otimes S_1 \otimes S_2 \otimes A_2$ in weak agreement.
This example shows that through contract automata one can identify which traces reach success, and which a failure, together with those principals responsible for diverging from the behaviour in agreement, as well as to single out which failures depend on the order of actions, and which not.
Indeed, by inspecting $\mathcal K$, that of course is safe, one can notice that $A_1$ never interacts with $S_2$ because it never replies and so it is recognised liable.
As a matter of fact, the composed automaton  $B \otimes S_1 \otimes S_2 \otimes A_1$ admits agreement, but it is not safe.
Note that $\mathcal K$ blocks every communication with $S_2$, so enforcing transactional integrity, 
because $\mathcal{K}$ removes all possibilities of rollbacks from a trace not in agreement.
The composed automaton $B \otimes S_1 \otimes S_2 \otimes A_2$ admits weak agreement but not agreement (and its most permissive controller is empty), because $B$ and $A_2$ fail in exchanging the certificate and the inventory requirements, as both are stuck  waiting for the fulfilment of their requests.
However,  by abstracting away the order in which actions are performed, circularity is no longer a problem, and these requests satisfied.
Note that $S_2$ is detected to be also weakly liable.

\section{Related Work}
\label{sect:conclusion}

Contract automata are similar to I/O~\cite{LynchT89} and Interface Automata~\cite{AlfaroH01}, introduced in the field of Component Based Software Engineering.
A first difference is that principal contract automata have no internal transitions, and that our operators of composition track each principal, to find the possible liable ones.  
Also we do not allow input enabled operations and non-linear behaviour (i.e.\ broadcasting offers to every possible request), and our notion of agreement is dual to that of
compatibility in~\cite{AlfaroH01}, that requires all the \emph{offers} to be matched.

We now relate our approach to the growing body of work in the literature introduced  to describe and analyse service contracts.

\subsection*{Behavioural contracts}
In~\cite{Bordeaux2005} the behaviour of web-services is described through automata, equivalent to our principal contract automata.
However, only bi-party interactions are  considered, i.e.\ interactions between a single client and a single server, while our model deals with multi-party interactions through orchestration. 
Different notions of compliance are introduced, and one of them is close to our notion of agreement.
In~\cite{Cast2009ACM} behavioural contracts  are expressed in CCS and the interactions between services are modelled via I/O actions. 
The main focus of this work is on formalising the notion of progress of interactions. 
Two different choice operators, namely internal and external, describe how two services interact. 
The internal choice requires the other party to be able to synchronise with all the possible branches of the first, while for the 
external choice it suffices to synchronise with at least one branch. 
A client and a server are compliant
if their interactions never get stuck.
This approach is extended to a multi-party version by extending the $\pi$-calculus in~\cite{Cast2009CONCUR} with the above notions of non-deterministic choice. 
Our model represents internal/external choice as a branching of requests/offers, and it is intrinsically multi-party.
Also, we consider stronger properties than theirs: progress guarantees that a subset of contracts meets their requests, while (weak) agreement requires that all of them do, i.e.\ that each principal reaches a successful state. 
We also consider (weak) liability of principals, and conditions under which (weak) safety is preserved by composition (collaborative and competitive).
A CCS-like process calculus, called BPEL \emph{abstract activities} is used in~\cite{LaneveP2015} to represent BPEL  activities~\cite{bpel}, for connecting BPEL processes with contracts in~\cite{Cast2009ACM}.
The calculus is endowed with a notion of compliance and sub-contract relation (see below). 
Contract automata and this formalism are very close, e.g.\ both are finite state, so it would not be difficult to formally relate them.

In \cite{Padovani10} the approach of~\cite{Cast2009ACM} is extended by exploiting an orchestrator for managing the 
\emph{sub-contract} relation. A contract $\sigma_1$ is sub-contract of $\sigma_2$ if $\sigma_1$  is more deterministic or allows more interactions or  is a permutation of  the same channels of $\sigma_2$. 
However, it is not always the case that  a contract $\sigma$, compliant with $\sigma_1$, is also 
compliant with $\sigma_2$. 
A technique for synthesising an orchestrator is presented to enforce compliance of contracts
under the sub-contract relation.
This approach is further extended in~\cite{Barbanera2015}, where an orchestrator is synthesised from \emph{session contracts}, where actions in a branching can only be all inputs or outputs. 
Only bi-party contracts are considered, and synthesis is decidable even in the presence of messages never delivered to the receiver (orphan messages).
Two notions of compliance are studied: respectful and disrespectful. 
In the first, orphan messages and circularities are ruled out by the orchestrator, while in the second they are allowed.
Our notion of weak agreement is close to the orchestrator of~\cite{Padovani10,Barbanera2015} in the case of \emph{disrespectful compliance}. 

In~\cite{BDLL15} the contracts of~\cite{Cast2009ACM}  are enriched with a mechanism for recovering from a stuck computation. 
The external choices are called \emph{retractable}, and a client contract $\overline a + \overline b$ is compliant with a server $a$ since, in case the client decides to send $b$, it can retract the choice and perform the correct operation $\overline a$.
In our work, the controller for the case of agreement cuts all the paths which may lead one principal to perform a retract. 
Hence, a controlled interaction of services needs not to roll back, as the orchestrator \emph{prevents} firing of liable transitions. 
This means that, if a composition of contracts is safe then the contracts are compliant according to~\cite{BDLL15}. 
The converse does not hold. Indeed, our notion of agreement is stronger, as we force an interaction of services to reach a successful state. 

The compliance relations studied 
in~\cite{Cast2009ACM,Cast2009CONCUR,LaneveP2015,Padovani10,Barbanera2015,BDLL15} 
are mainly inspired by testing equivalence~\cite{deNicolaHennessy}: 
a CCS process (in our case the service) is tested against an observer (the client), in two different ways.
A service \emph{may-satisfy} a client if there exists a computation that ends in a successful
state, and a service \emph{must-satisfy} a client if in every maximal trace (an infinite trace or a trace that can not be prolonged) the client can terminate successfully. 
We conjecture that may-test corresponds to the notion of \emph{strong agreement} of~\cite{basile2014,basile2015} (there exists a trace only composed of matches),
while must-test implies \emph{strong safety} (all traces are in strong agreement), but not vice-versa.
For example the service $\overline a^*. \overline b $ does not must-satisfy the client $a^*.b$, but their product is strongly safe (if unfair, the service may never offer $\overline b$ to its client). 
Actually, strong safety is alike \emph{should testing} of~\cite{Rensink2007}, where 
the divergent computations are ruled out.

\subsection*{Session types and choreographies}
Session types have been introduced to reason over the behaviour of communicating processes, and are used for 
typing channel names by structured sequences of types~\cite{Dezani2009}. 
Session types can be global or local.
A \emph{global type} represents a formal specification of a choreography of services in terms of their interactions. 
The projection of a safe global type to its components yields a safe \emph{local type}, which is a term of a process algebra similar to those of~\cite{Cast2009ACM}. 
Conversely, from safe local types it is possible to synthesise a choreography as a safe global type~\cite{LangeT12,LangeTY15}. 
In~\cite{Bernardi2012} the contracts of~\cite{Cast2009ACM} are shown to be a model of first-order session types~\cite{Simon2005}.
This approach is then extended in~\cite{Bernardi2014} by introducing a notion of higher-order contracts and relating them to higher-order session types, that also handle session delegation.

Although the above approaches and ours seem unrelated, one can compare them by resorting to communicating finite states machines~\cite{BrandZ83}, that are finite state automata similar to ours,
to which local types are proved to correspond~\cite{DenielouY13}.
These automata interact through FIFO buffers, hence a principal can receive an input only if it was previously enqueued, and in this they differ from contract automata, where offers and requests can match or even fire unmatched in any order.
However, under mild conditions, the two classes of automata are equivalent~\cite{basile2014,basile2015}, so establishing a first bridge between the choreography model based on session types and our automata model of orchestration.

Many properties of communicating finite state machines, as compliance in the asynchronous case, are not decidable in general~\cite{BrandZ83}, but some become such by using FIFO queues and bags~\cite{ClementeHS14}. 
Moreover in~\cite{LangeTY15} compliance between 
communicating finite state machines is guaranteed whenever it is possible to synthesise a global choreography from them.
It would be interesting to describe compliance of~\cite{BrandZ83} in terms of flow control, as done
for weak agreement, and to study a relaxation of the linear problem  which makes the problem decidable.

In~\cite{LaneveP2015} the compliance and sub-contract relations are extended to deal with choreographies. 
Compliance is obtained by seeing a choreography as a compound service, similarly to our composed contract automata.
Since a client cannot interact with the choreography on actions already used while synchronising by other services, in order to obtain compliance the client must be \emph{non-competitive} with the other services.

\subsection*{$\lambda$-calculus, logics,  event-structures}
Services are represented in~\cite{BartolettiDFZ11,BartolettiDF09} by $\lambda$-expressions, and safety policies are imposed over their interactions. 
A type and effect system is used to compute the types of the services and their abstract behaviours, that are then model checked at static time to guarantee that the required policies are always satisfied. 
A main result shows how to construct a plan that associates requests with offers so to guarantee that no executions will violate the security requirements.  
In~\cite{BasileDF13,BasileDF14journal} these techniques have been applied to an automata based representation of the contracts of~\cite{Cast2009ACM}, recovering the same notion of progress.


Propositional Contract Logic~\cite{BartolettiZ10} and Intuitionistic Linear Logic with Mix~\cite{benton1995mixed}  have been already discussed in Section~\ref{sect:logic}.

Processes and contracts are two separate entities in~\cite{BartSci2013}, unlike ours.
In this formalism contracts are represented as formulae
or as process algebras.
A process can fulfil its duty by obeying its contract or it behaves dishonestly and becomes \emph{culpable} --- and redeems by performing later on the prescribed actions.
Also our principals can be at fault, but our notion of liability slightly differs from culpability, mainly because we do not admit the possibility of redeeming.

Contracts are represented in~\cite{Bartoletti2015} through Event Structures endowed with certain notions from Game Theory. 
An agreement property is proposed, ensuring safe interactions among participants, that is similar to ours under an eager strategy.
A principal is culpable if it has not yet fired an
enabled event, it is otherwise innocent. 
In particular a principal agrees to a contract 
if it has a positive pay-off in case all the principals are innocent, or if someone else is found culpable.
Additionally the authors study protection: a protected principal has a non-losing strategy in every possible
context, but this is not always possible.
Finally two encodings from session types to Event Structures are proposed, and compliance between bi-party
session types is shown to correspond to agreement of the corresponding event structures via an eager strategy.

\section{Concluding Remarks}
\label{sect:concludingremarks}
%

We have studied contract composition for services, focussing on orchestration.
Services are formally represented by a novel class of finite state automata, called contract automata.
They have two operators that compose services according to two different notions of orchestrations:
one when a principal joins an existing orchestration with no need of a global reconfiguration, and the other when a global adaptive re-orchestration is required. 
We have defined notions that illustrate when a composition of contracts behaves well, roughly when all the requests are fulfilled.
These properties have been formalised as agreement and safety, and have been studied both in the case when requests are satisfied synchronously and asynchronously. 
Furthermore, a notion of liability has been put forward. 
A liable principal is a service leading the contract composition into a fail state.
Key results of the paper are ways to enforce good behaviour of services.
For the synchronous versions of agreement and safety, we have applied techniques from Control Theory, while for the asynchronous versions we have taken advantage of Linear Programming techniques borrowed from Operational Research.
Using them, we efficiently find the optimal solutions of the flow in the network automatically derived from contract automata. 

We have also investigated the relationships between our contract automata and two intuitionistic logics, particularly relevant for their ability in describing the potential, but harmless and often essential circularity occurring in services.
We have considered a fragment of the Propositional Contract Logic~\cite{BartolettiZ10,BartolettiZ09} particularly suited to describe contracts, and we relate it through a translation of its formulas into contract automata.
Similarly, we have examined certain sequents of the Intuitionistic Linear Logic with Mix that naturally represent contracts in which all requests are satisfied.
Then we have proved that these sequents are provable if and only if a suitable translation of them as contract automata admits agreement.

A main advantage of our framework is that it supports the development of automatic verification tools for checking
and verifying properties of contract composition.
In particular, the formal treatment of contract composition in terms of optimal solutions of network flows paves the way of exploiting efficient optimisation algorithms. 
We have developed a prototypical verification tool~\cite{cat}, available at \url{https://github.com/davidebasile/workspace}.

\section*{Acknowledgements}
We are deeply indebted with Giancarlo Bigi for many discussions and
suggestions on the usage of optimisation techniques, with Massimo
Bartoletti for many insights on the logical aspects of our proposal,
and with the anonymous referees for their valuable comments and
remarks.


\bibliographystyle{splncs03}
\bibliography{bib}

\newpage
\section{Appendix}
\subsection{The Model}

%

\proassociative*
\begin{proof}
Example~\ref{ex:non-assoc} suffices to prove the first statement.
For the second statement one has 
 $\mathcal A = (\mathcal{A}_1 \boxtimes \mathcal{A}_2) \boxtimes \mathcal{A}_3 = 
  \bigotimes_{ \mathcal A_i \in I} \mathcal{A}_i =  \mathcal{A}_1 \boxtimes (\mathcal{A}_2 \boxtimes  \mathcal{A}_3)$ 
where 
$I = \{\Pi^i(\mathcal A) \mid i \in 1,2,3\}$.
\end{proof}

\subsection{Agreement}
\propmpca*
\begin{proof}
The existence of $\mathcal K$ is guaranteed since all actions are controllable and observable and $\mathscr L (\mathcal A)$ is regular, as well as $\mathfrak A$ \cite{Cassandras2006}.
 By contradiction assume $\mathscr{L}(\mathcal K)\subset \mathfrak{A} \cap \mathscr{L}(\mathcal{A})$, then there exists another controller $\mathcal K'_{\mathcal{A}}$ such that  $ \mathscr{L}(\mathcal K) \subset  \mathscr{L}(\mathcal K') = \mathfrak{A} \cap \mathscr{L}(\mathcal{A})$. 
\end{proof}

\procontroller*
\begin{proof}
In $\mathcal K_{\mathcal{A}}$ every request transition is removed in the first step, so it must be $\mathscr{L}(\mathcal K_{\mathcal{A}}) \subseteq \mathfrak{A} \cap \mathscr{L}(\mathcal{A})$. We will prove that  $\mathscr{L}(\mathcal K_{\mathcal{A}}) = \mathfrak{A} \cap \mathscr{L}(\mathcal{A})$, from this follows that $\mathcal K_{\mathcal{A}}$ is the most permissive controller. By contradiction assume that exists a trace $w \in \mathfrak{A} \cap \mathscr{L}(\mathcal{A}), w \not \in \mathscr L (\mathcal K_{\mathcal{A}})$. Then there exists a transition $t=(\vec{q},\vec{a},\vec{q'}) \not \in T_{\mathcal K_{\mathcal{A}}} $ in the accepting path of $w$ (i.e. the sequence of transitions used to recognise $w$). The transition $t$ is not a request since $w \in \mathfrak{A} \cap \mathscr{L}(\mathcal{A})$, and $\vec{q},\vec{q'} \not \in Hanged(\mathcal K_{\mathcal{A}})$ because the transition belongs to an accepting path. Since the only transitions removed to obtain $ \mathcal K_{\mathcal{A}}$ are requests and those involving hanged states, it follows that $t \in T_{\mathcal K_{\mathcal{A}}}$.
\end{proof}

\thecomposition*
\begin{proof}

\ref{lem:composition}) Assume by contradiction that $\mathcal{A}_1$ and $\mathcal{A}_2$ are non-collaborative, that is
\[(A^o_1 \cap co(A^r_2)) \cup (co(A^r_1) \cap A^o_2) = \emptyset\] 
Since the two automata are competitive, we  have 
\[A^o_1 \cap A^{o}_2 \cap (co(A^r_1) \cup co(A^r_2)) \neq \emptyset\] 
By the distributive law
\[(A^o_1 \cap (co(A^r_1) \cup co(A^r_2))) \cap (A^{o}_2 \cap (co(A^r_1) \cup co(A^r_2))) \neq \emptyset\]
By hypothesis the two automata are non-collaborative, hence the above term can be rewritten as
\[(A^o_1 \cap co(A^r_1)) \cap (co(A^r_2) \cap A^o_2) \neq \emptyset\]
%
%
By associative and commutative laws
\[(A^o_1 \cap co(A^r_2)) \cap (co(A^r_1) \cap A^o_2) \neq \emptyset\] 
Which implies
\[(A^o_1 \cap co(A^r_2)) \cup (co(A^r_1) \cap A^o_2) \neq \emptyset\] 
obtaining a contradiction.\\

\ref{lem:collcomp} ) 
By hypothesis the automata are collaborative:
\[(A_1^o  \cap  co(A_2^r)) \cup (A_2^o  \cap co(A_1^r)) \neq \emptyset\]
By hypothesis  $\mathcal{A}_1$ and $\mathcal{A}_2$ are safe, hence for each request there 
is a corresponding action, that is $co(A^r_i) \subseteq A^o_i$ where $i=1,2$. Then the following holds
\[
 \ A_i^o \cap co(A_i^r) = co(A_i^r) \qquad i=1,2
\]
By substitution in the previous term we obtain
\[
( A_1^o  \cap A_2^o \cap co(A_2^r))  \cup  (A_2^o  \cap A_1^o \cap co(A_1^r)) \neq \emptyset
\]
Which implies
\[
( A_1^o  \cap A_2^o \cap (co(A^r_1) \cup co(A^r_2)))  \cup  (A_2^o  \cap A_1^o \cap (co(A^r_1) \cup co(A^r_2)) ) \neq \emptyset
\]
By simplification we have
\[
( A_1^o  \cap A_2^o \cap (co(A^r_1) \cup co(A^r_2))) \neq \emptyset
\]
Hence $\mathcal{A}_1$ and $\mathcal{A}_2$ are competitive.

\ref{lem:safe})
 Note that the labels of $\mathcal{A}_1\otimes\mathcal{A}_2$ are the union of the labels of $\mathcal{A}_1$ and $\mathcal{A}_2$ (extended with idle actions for fitting the rank), hence no request transitions are added, and \hbox{$\mathcal{A}_1\otimes\mathcal{A}_2$} is \emph{safe}.
%
%
Since the traces of $\mathcal A_1 \otimes \mathcal A_2$ are a subset of $\mathcal A = \mathcal A_1 \boxtimes \mathcal A_2$,
$\mathcal A$ has at least a trace in agreement. Example~\ref{ex:liable} shows that not all the traces of $\mathcal A$ admit
agreement.
%

\ref{lem:unsafe}) Without loss of generality assume that $\mathcal{A}_1$ is unsafe, hence there exists a request $\vec{a}$, and 
traces $w,v$ such that $w\vec{a}v \in \mathscr{L}(\mathcal{A}_1)$. Since $\mathcal{A}_1$ and $\mathcal{A}_2$ are non-collaborative there will be no match between the actions of  $\mathcal{A}_1$ and $\mathcal{A}_2$, hence we have $w_1\vec{a'}v_1 \in \mathscr{L}(\mathcal{A}_1\otimes \mathcal{A}_2), w_2\vec{a'}v_2 \in \mathscr{L}(\mathcal{A}_1\boxtimes \mathcal{A}_2)$ for some $w_1,w_2,v_1,v_2$, where
$\vec a'$ is obtained from $\vec a$ by adding the idle actions to principals from
$r_{\mathcal{A}_1}+1$ to $r_{\mathcal{A}_1}+r_{\mathcal{A}_2}$.


\ref{lem:noncompetitive_safe}) The proof is similar to that of item~\ref{lem:safe}, indeed it suffices to prove that no new matches between principals in  $\mathcal{A}_1$ and $\mathcal{A}_2$ are introduced in  $\mathcal{A}_1 \boxtimes \mathcal{A}_2$. 
By item~\ref{lem:collcomp} it follows that $\mathcal{A}_1$ and $\mathcal{A}_2$ are non-collaborative:
\[(A_1^o  \cap  co(A_2^r)) \cup (A_2^o  \cap co(A_1^r)) \neq \emptyset\]
This suffices to prove that no matches will be introduced in their composition.
\end{proof}
%
%
%
\subsection{Weak Agreement}
%
\lemweaksafe*
\begin{proof}
Let $req^w_a,of^w_a$ be the number of requests and offers of an action $a \in \Rset \cup \Oset$ in a trace $w$.
\begin{enumerate}
\item For $\otimes$: we will prove that in every trace of $\mathcal{A}_1 \otimes \mathcal{A}_2$, for each action the number of requests are less than or equal to  the number of offers, and the thesis follows. 
By contradiction, assume that 
there exists a trace $w$ in $\mathcal{A}_1 \otimes \mathcal{A}_2$ and an action $a$ with $req^w_a > of^w_a$. Assume that $w$ is obtained combining two traces $w_1,w_2$ of $\mathcal{A}_1$ and $\mathcal{A}_2$, that is each principal in each automaton  performs the moves prescribed by its trace. Since both automata are weakly safe, we have $req^{w_1}_a \leq of^{w_1}_a$ and  $req^{w_2}_a \leq of^{w_2}_a$ for all actions $a$.

Independently of how many matches occur, in $w$ we still have more requests than offers: 
$req^{w1}_a + req^{w2}_a - k \leq of^{w1}_a + of^{w2}_a - k$ where $k$ are the new matches.

For $\boxtimes$ it suffices to take a trace $w$ in  $\mathcal{A}_1 \boxtimes \mathcal{A}_2$  obtained by combining two traces $w_1,w_2$ of respectively $\mathcal{A}_1$ and $\mathcal{A}_2$, where the match actions of both automata are maintained in $w$ (the matches are performed by the same principals). In this case, the trace $w$ will be present also in  $\mathcal{A}_1 \otimes \mathcal{A}_2$, hence $w \in \mathfrak{W}$.

\item
Without loss of generality assume that $\mathcal{A}_1$ is weakly unsafe, hence there exists an action $a$ and a trace $w_1$ in $\mathcal{A}_1$ such that $req^{w_1}_a > of^{w_1}_a$. Since $\mathcal{A}_1$ and $\mathcal{A}_2$ are non-collaborative,  in every trace $w$ of $\mathcal{A}_1 \otimes \mathcal{A}_2$ or  $\mathcal{A}_1 \boxtimes  \mathcal{A}_2$ obtained by shuffling  $w_1$ with an arbitrary $w_2$ in $\mathcal{A}_2$ we will have $req^{w}_a > of^{w}_a$.

\item
from Theorem \ref{the:composition} item~\ref{lem:noncompetitive_safe},  $\mathcal{A}_1 \boxtimes \mathcal{A}_2$ is safe
and since $\mathfrak A \subset \mathfrak W$ the thesis follows.\qedhere
\end{enumerate}
\end{proof}

The following proposition helps the proof of Theorem~\ref{the:contextsensitive}.

\begin{prop}\label{pro:obs_w}
Let $WA(\mathfrak{W})=\{ w \in  (\Rset \cup \Oset \cup \{ \tau \})^* \mid
 \exists f: [1 \ldots |w|] \rightarrow [1 \ldots |w| ] $
injective and such that  $ f(i)=j \text{ only if }  \ithel{w}{i}=co(\ithel{w}{j}), 
 \text{ total on the requests of $w$}  \}$. \\
Then, $Obs(w) \in WA(\mathfrak{W})$ implies $w \in \mathfrak{W}$. 
\end{prop}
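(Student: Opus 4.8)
The plan is to transfer the witnessing function from $Obs(w)$ back to $w$ essentially unchanged, after setting up a position-by-position correspondence between the two. First I would observe that by the defining clauses of $Obs$ each action $\vec a_i$ of $w=\vec a_1\ldots\vec a_m$ contributes exactly one symbol to $Obs(w)$, so $|Obs(w)|=m$ and position $i$ of $Obs(w)$ is produced by $\vec a_i$. A short case analysis using Definition~\ref{def:actions} and the definition of $Obs$ then gives: $\ithel{Obs(w)}{i}\in\Rset$ iff $\vec a_i$ is a request action (on that very request), $\ithel{Obs(w)}{i}\in\Oset$ iff $\vec a_i$ is an offer action (on that very offer), and $\ithel{Obs(w)}{i}=\tau$ iff $\vec a_i$ is a match. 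In particular the request indices of $w$ in the sense of the definition of $\mathfrak W$ are exactly the positions of $Obs(w)$ carrying a symbol of $\Rset$.

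Next I would take the function $f\colon[1\ldots m]\to[1\ldots m]$ guaranteed by $Obs(w)\in WA(\mathfrak W)$ --- injective, total on the request positions of $Obs(w)$, and satisfying $f(i)=j$ only if $\ithel{Obs(w)}{i}=co(\ithel{Obs(w)}{j})$ --- and let $g$ be its restriction to the request indices of $w$. Since those indices coincide with the request positions of $Obs(w)$ by the previous paragraph, $g$ is a well-defined map into $[1\ldots m]$, and it is total and injective on the request indices of $w$, exactly as demanded by the definition of $\mathfrak W$.

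It then remains to verify that $g(i)=j$ entails $\vec a_i\matchrel\vec a_j$ for every request index $i$. In that case $\vec a_i$ is a request on some $\alpha\in\Rset$, so $\ithel{Obs(w)}{i}=\alpha$; hence $\ithel{Obs(w)}{j}=co(\alpha)=\overline\alpha\in\Oset$, and by the classification above $\vec a_j$ must be an offer action on $\overline\alpha$. Unfolding the three conditions defining $\matchrel$ in Definition~\ref{def:actions}: (i) holds since $\vec a_i$ is a request on $\alpha$; (ii) holds vacuously because $\vec a_i$ is not an offer; (iii) holds because $\vec a_j$ is an offer on $co(\alpha)$. Therefore $\vec a_i\matchrel\vec a_j$, so $g$ witnesses $w\in\mathfrak W$, which concludes the argument.

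I do not expect a genuine obstacle here: the argument is essentially bookkeeping. The one place that deserves care is the identification of the positions of $Obs(w)$ with the actions of $w$ together with the classification of those positions by the type (request / offer / match) of the underlying action, since once this is in place, injectivity, totality on requests, and the passage from complementarity of basic actions (via $co$) to complementarity of action vectors (via $\matchrel$) are all immediate.
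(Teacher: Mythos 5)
Your proof is correct and takes essentially the same approach as the paper, whose own proof is a one-line version of the same observation: the function certifying $Obs(w)\in WA(\mathfrak{W})$ directly certifies $w\in\mathfrak{W}$. Your write-up merely makes explicit the position-by-position correspondence and the request/offer/match classification that the paper leaves implicit.
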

\begin{proof}
Let $\sigma =Obs(w) \in WA(\mathfrak{W})$, and let $f$  be a function that certifies that $\sigma \in WA(\mathfrak{W})$, i.e.\ that all the requests in $w$ are fulfilled. Then $f$ certifies $w \in \mathfrak{W}$.
\end{proof}

\thecontextsensitive*
\begin{proof}
Example \ref{exa:wagreement} shows that the property is not context-free. 
For proving that $\mathfrak{W}$ is context-sensitive
we now outline a Linear Bounded Automata (LBA) \cite{Kuroda64} that decides whether a trace $w$ belongs to $\mathfrak{W}$,
giving us time and space complexity for the membership problem. 
Roughly, a LBA is a Turing machine with a tape, linearly bounded by the size of the input.
Since we have an infinite alphabet due to the (unbounded) rank of vector $\vec{a}$, we compute $Obs(w)$ and decide if $Obs(w) \in WA(\mathfrak{W})$. 
By Proposition~\ref{pro:obs_w} we obtain the thesis. 
Below is the scheme of the algorithm:

\begin{minipage}{1\textwidth}
\small
\begin{algorithmic}
\For{$i=0;i<length(w);i++$ } 
	\If{$w_i \in \Rset$}
		\For{$j=0;j<length(w);j++$} 
			\If{$w_j=co(w_i)$}
				\State $w_j \gets \#$
				\State break 
			\Else
				\If{$j=length(w)-1$}
					\Return{false}
				\EndIf
			\EndIf
		\EndFor
	\EndIf
\EndFor
\State \Return{true}
\end{algorithmic}
\end{minipage}\\

\noindent
The length of the tape equals the length of $w$, so the algorithm is $O(n)$ space, while it is $O(n^2)$ time, because 
of the two nested {\bf for} cycles.
%
\end{proof}

The following is an auxiliary result to the theorems below.

\begin{lem}\label{lem:flowtrace}
Let $\mathcal{A}$ be a contract automaton such that  $\vec x \in F_x$, then there exists a run $(w,\vec{q_0}) \rightarrow^* (\epsilon,\vec q_f)$ that passes through each $t_j \in T$ exactly $x_{t_j}$ times. 
\end{lem}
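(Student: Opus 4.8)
The plan is to read the integer flow $\vec x$ as a directed Eulerian trail. First I would form the directed multigraph $G_{\vec x}$ on the state set $Q$ that contains, for each transition $t_j \in T$, exactly $x_{t_j}$ parallel copies of the edge underlying $t_j$ (so $G_{\vec x}$ has no copy of $t_j$ when $x_{t_j}=0$). In $G_{\vec x}$ the out-degree of a state $\vec q$ is $\sum_{t_i \in FS(\vec q)} x_{t_i}$ and its in-degree is $\sum_{t_i \in BS(\vec q)} x_{t_i}$, so the first block of constraints defining $F_x = F_{\vec{q_0},\vec q_f}$ says precisely that every $\vec q \neq \vec{q_0}, \vec q_f$ is balanced, that $\vec{q_0}$ has exactly one more outgoing than incoming edge, and that $\vec q_f$ has exactly one more incoming than outgoing edge. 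These are the degree conditions of Euler's theorem for a directed trail from $\vec{q_0}$ to $\vec q_f$ (recall $\vec q_f \neq \vec{q_0}$), so what remains is to verify the connectivity hypothesis, and this is the step where the auxiliary $z$-variables are essential.

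The key step, and the main obstacle, is to show that every state $\vec q$ with $\sum_{t_i \in FS(\vec q)} x_{t_i} > 0$ is reachable from $\vec{q_0}$ by a directed path of $G_{\vec x}$, i.e.\ along edges carrying positive flow only. For such a $\vec q$ one has $p^{\vec q} = 1$, so by definition of $F_{\vec{q_0},\vec q_f}$ the variables $z^{\vec q}_{t_i}$ form a unit flow from $\vec{q_0}$ to $\vec q$; the bounds $0 \le z^{\vec q}_{t_i} \le x_{t_i}$ force this flow to live entirely on edges with $x_{t_i} > 0$, and the standard flow-decomposition argument extracts from it a directed $\vec{q_0}$--$\vec q$ path whose edges all occur in $G_{\vec x}$. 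This is exactly what rules out cycles of positive flow that are disconnected from $\vec{q_0}$, which is why the $z$-constraints appear in the definition; establishing it cleanly (handling $p^{\vec q}$ and the flow-decomposition carefully) is where the real work lies.

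Next I would propagate reachability: if $(\vec q', \vec a, \vec q)$ is an edge of $G_{\vec x}$ then $\vec q'$ has positive out-degree, hence is reachable from $\vec{q_0}$ by the previous step, hence so is $\vec q$; also $\vec{q_0}$ has positive out-degree (from the balance $\sum_{BS} x_{t_i} - \sum_{FS} x_{t_i} = -1$) and is trivially reachable from itself, and $\vec q_f$ has in-degree at least $1$, hence is reachable as the head of some edge. Consequently every state incident to an edge of $G_{\vec x}$ lies in the single weakly connected component containing $\vec{q_0}$.

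Finally I would invoke Euler's theorem for directed multigraphs: a directed multigraph all of whose edges lie in one weakly connected component, in which one distinguished vertex has out-degree exceeding in-degree by one, another has in-degree exceeding out-degree by one, and every other vertex is balanced, has a directed Euler trail from the first to the second. Applying it to $G_{\vec x}$, $\vec{q_0}$, $\vec q_f$ yields a composable sequence of transitions starting at $\vec{q_0}$, ending at $\vec q_f$, using each copy of each $t_j$ exactly once, hence using $t_j$ exactly $x_{t_j}$ times; reading off the labels along this trail produces a word $w$ together with a run $(w,\vec{q_0}) \rightarrow^* (\epsilon,\vec q_f)$ with the required transition multiplicities.
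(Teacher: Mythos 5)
Your proof is correct, but it takes a genuinely different route from the one in the paper. The paper constructs the run directly by a backward greedy traversal: starting at $\vec q_f$, it repeatedly picks a transition in the backward star with remaining capacity (tracked by counters $\overline x_{t_j}$), proves termination and that the walk ends at $\vec q_0$ from the balance equations, and then argues \emph{by contradiction} that any transition with positive flow left unvisited would sit on a cycle disconnected from $\vec q_0$ and $\vec q_f$, which would violate the constraints on the auxiliary variables $z^{\vec q}_{t_i}$ --- essentially re-deriving Hierholzer's construction by hand. You instead read the balance equations as the degree conditions of the directed Euler-trail theorem and use the $z$-variables \emph{positively}: for each state $\vec q$ with positive out-flow, the unit flow $z^{\vec q}$ from $\vec q_0$ to $\vec q$, capped by $\vec x$, decomposes into a directed $\vec q_0$--$\vec q$ path supported on positive-flow edges, which yields the connectivity hypothesis, after which the classical theorem delivers the trail. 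Both proofs use the $z$-constraints for exactly the same purpose (excluding disconnected positive cycles); yours offloads the trail construction to a standard theorem and is cleaner and easier to check, while the paper's is self-contained and makes the nondeterministic choice of runs induced by a flow explicit (which is the point illustrated in the discussion after Example~\ref{ex:flow}). Your handling of the boundary cases is sound: $\vec q_0$ is reachable trivially, $\vec q_f$ is reachable as the head of some positive-flow edge forced by $\sum_{t_i \in BS(\vec q_f)} x_{t_i} - \sum_{t_i \in FS(\vec q_f)} x_{t_i} = 1$, and the real-valuedness of the $z$'s is harmless since flow decomposition does not require integrality.
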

\begin{proof}
 We outline an algorithm that visits all the transitions $t_j$ with $x_{t_j}>0$, starting from $\vec q_f$ and proceeding backwards to $\vec q_0$. 
 
 We use auxiliary variables $\overline x_{t_j}, {t_j} \in T$, initialised to zero, for storing how many times we have passed through a transition ${t_j}$. At each iteration the algorithm selects non deterministically a transition ${\hat t}$ in the backward star of the selected node such that $x_{\hat t} - \overline x_{\hat t}>0$, and increases by one unit the variable $\overline x_{\hat t}$ for the selected $\hat t$. The next node will be the starting state of $\hat t$.
  The algorithm terminates when for all the transitions $t_j$ in the backward star we have $x_{t_j} - \overline x_{t_j}=0$.

We prove that the algorithm terminates and constructs a trace that passes through each ${t_j}$ exactly $x_{t_j}$ times, and the last transition considered leaves the initial state. 
For the first step we have $\sum_{t_j \in BS(\vec q_f)} x_{t_j} - \sum_{t_j \in FS(\vec q_f)} x_{t_j}=1$ hence there exists at least one $t_i \in BS(\vec q_f)$ such that $x_{t_i}>0$ (and $\overline x_{t_i}=0$).

Pick up one of these transitions, say $t_i$, and assign it to the iteration variable $\hat t$.
Two cases may arise, depending on the source of $\hat t$:

\begin{enumerate}
\item the source of $\hat t$ is $\vec{q} \neq \vec{q_0}$:  we have $\sum_{t_j \in BS(\vec{q})} x_{t_j}- \sum_{t_j \in FS(\vec{q})} x_{t_j}\geq 0$ and we know that $\sum_{t_j \in FS(\vec{q})} x_{t_j}> 0$, because $\hat t \in FS(\vec{q})$ and $x_{\hat t}>0$,  hence  $\sum_{t_j \in BS(\vec{q})} x_{t_j} > 0$.

We now show that there is at least one $t \in BS(\vec{q})$ such that $(x_{t} - \overline x_t)>0$. 
By contradiction, assume $\sum_{t_j \in BS(\vec{q})} x_{t_j} - \sum_{t_j \in BS(\vec{q})} \overline x_{t_j}=0$. We distinguish two cases:
\begin{itemize}
\item $\vec q = \vec q_f$: we have $\sum_{t_j \in FS(\vec{q})} \overline x_{t_j} = \sum_{t_j \in BS(\vec{q})} \overline x_{t_j}$, since at every iteration we increase by one unit the value of $\overline x_{t_i}$ for $\hat t$ and we are proceeding backwards starting from $\vec q_f$ (the flow variable of a loop belongs to both backward and forward star). 
Since  $\sum_{t_j \in BS(\vec{q})} x_{t_j} > \sum_{t_j \in FS(\vec{q})} x_{t_j}$, we have $\sum_{t_j \in FS(\vec{q})} x_{t_j} - \sum_{t_j \in FS(\vec{q})} \overline x_{t_j} < 0$. 
Contradiction, since by definition the value $ \overline x_{t_j}$ for a transition $t_j$ will never be greater then the corresponding value  $x_{t_j}$.
\item $\vec q \neq \vec q_f$: we have $\sum_{t_j \in FS(\vec{q})} \overline x_{t_j} > \sum_{t_j \in BS(\vec{q})} \overline x_{t_j}$.
Since  $\sum_{t_j \in BS(\vec{q})} x_{t_j} = \sum_{t_j \in
  FS(\vec{q})} x_{t_j}$, we also have $\sum_{t_j \in FS(\vec{q})} x_{t_j} - \sum_{t_j \in FS(\vec{q})} \overline x_{t_j} < 0$ obtaining a contradiction as above.
\end{itemize}

\noindent Then, we iterate the algorithm taking the above $t$ as $\hat t$.

\item the source of $t_i$ is $\vec{q_0}$: we have $\sum_{t_j \in BS(\vec{q_0})} x_{t_j} - \sum_{t_j \in FS(\vec{q_0})} x_{t_j}=-1$. \\
Let
$k_1 = \sum_{t_j \in FS(\vec{q_0})} x_{t_j} - \sum_{t_j \in FS(\vec{q_0})} \overline x_{t_j}$,
$k_2 = \sum_{t_j \in BS(\vec{q_0})} x_{t_j} - \sum_{t_j \in BS(\vec{q_0})} \overline x_{t_j}$, 
and note that since we are proceeding backwards starting from $\vec q_f$ it must be that 
 $\sum_{t_j \in FS(\vec{q_0})} \overline x_{t_j} = 1 + \sum_{t_j \in BS(\vec{q_0})} \overline x_{t_j}$.  
Hence, from the previous equations it must be that $k_2 - k_1 = 0$.
We have that:
\begin{itemize}
\item if $k_1=0$, 
we have $k_2=0$  and the algorithm terminates;

\item  if $k_1>0$, we have $k_2>0$ 
and the algorithm continues by selecting a transition $\hat t \in  BS(\vec{q_0})$ such that $x_{\hat t} - \overline x_{\hat t}=0$.
\end{itemize} 

\end{enumerate}

\noindent Since at every iteration we increase the value $\overline x_{\hat t}$, the constraints on $F_x$ guarantee that the algorithm will eventually terminate. 
Moreover there exists an execution of the algorithm that traverses all the possible cycles of the trace induced by $\vec x$. 
Hence we have a trace from $\vec{q_0}$ to $\vec q_f$ that passes through each transition $t_j$ visited by the algorithm exactly $x_{t_j}$ times.

It remains to prove that for all the transitions $t_j$ not visited by the algorithm we have $x_{t_j}=0$. By contradiction assume that there exists a transition $t_i=(\vec q_s, \vec a, \vec q_d)$ with $x_{t_i} - \overline x_{t_i} >0$ for all the possible executions of the algorithm. 

This is possible only if $\vec q_d$ it is not connected to $\vec q_f$ by the flow $\vec x$. Moreover in this case by the flow constraints on $\vec x$ it follows that $\vec q_s$ is not reachable from $\vec q_0$ by the flow $\vec x$, i.e.\ $t_i$ is not part of the trace induced by $\vec x$. 
Then there must exist a cycle   $C=\{ t_{c1}, \ldots, t_{cm} \}$ with $t_i \in C$ and
 disconnected from $\vec q_0$ and $\vec q_f$ with positive flow.
Let $Q_C$ be the set of nodes having ingoing or outgoing transitions in $C$.
 The constraints $\sum_{t \in BS(\vec{q})} x_{t} - \sum_{t \in FS(\vec{q})} x_{t}=0$ are satisfied for all $\vec q \in C$.


We show that $C$ will eventually violate the constraints defined by the variables $z^{\vec{q_s}}_{t_j}$. 
We have:

\[
\forall \vec{q'} \in Q:\ 
\sum_{t_j \in BS(\vec{q'})} z_{t_j}^{\vec{q_s}} - \sum_{t_j \in FS(\vec{q'})} z_{t_j}^{\vec{q_s}} = \left\{ 
\begin{array}{ll} 
- p^{\vec q_s} &  \mbox{if } \vec{q'}=\vec q_0\\ 
0  & \mbox{if } \vec{q'} \neq \vec q_0,\vec q_s \\
p^{\vec q_s} & \mbox{if } \vec{q'} = \vec q_s
\end{array}
\right. 
\]
\[
\forall t_j \in T. \ z_{t_j}^{\vec{q_s}} \in \mathds{R}, \quad 0 \leq z_{t_j}^{\vec{q_s}} \leq x_{t_j}
\]

We have  $\sum_{t_j \in FS(\vec{q_s})} x_{t_j}>0$ and  $p^{\vec q_s}=1$, hence $\sum_{t_j \in BS(\vec{q_s})}
 z_{t_j}^{\vec{q_s}} - \sum_{t_j \in FS(\vec{q_s})} z_{t_j}^{\vec{q_s}}=
 1$ and for all 
$\vec q \in Q_C, \vec q  \neq \vec{q_s}: \sum_{t_j \in BS(\vec{q})}
 z_{t_j}^{\vec{q_s}} - \sum_{t_j \in FS(\vec{q})} z_{t_j}^{\vec{q_s}}=
 0$. Note that is not possible to satisfy these constraints since 
for all $t \in C$, $x_t$ are all equal and positive and $0 \leq z_{t}^{\vec{q_s}} \leq x_{t}$.
  %
\end{proof}
%
\theflowweaksafe*
\begin{proof}
($\Rightarrow$) By contradiction assume that $\textsf{min } \gamma <0$. Hence there exists an action $a^j$ such that $v_j=1, \forall i \in I_l, i\neq j. v_i=0$ and $\gamma= \sum_{t_j \in T} a^j_{t_j} x_{t_j} < 0$. 
By Lemma \ref{lem:flowtrace} we know that $\vec x$ builds a trace recognising $w \in \mathscr{L}(\mathcal{A})$, and the number of offers for $a^j$ in $w$ are less than the corresponding number of requests since $\sum_{t_j \in T} a^j_{t_j} x_{t_j} < 0$, hence $w \not \in \mathfrak W$.

($\Leftarrow$)
By contradiction there exists $w \in  \mathscr{L}(\mathcal{A}) \setminus \mathfrak W$. 
Hence there exists an action $a^j$ that occurs in $w$ fewer times as an offer than as a request.
Let $\vec x$ be the flow induced in the obvious way by the trace $w$, counting the number of times each transition occurs in the path accepting $w$. 
We have $\sum_{t_j \in T} a^j_{t_j} x_{t_j} < 0$, hence it must be $\textsf{min }\gamma <0$.
\end{proof}

\theflowweakagreement*
\begin{proof}
($\Rightarrow$) Let $w$ be a trace in weak agreement,  and let $\vec x$ be the flow induced by $w$. 
Then by construction $\forall i \in I_l. \sum_{t_j \in T} a^i_{t_j} x_{t_j} \geq 0$, hence  $\text{max } \gamma \geq 0$.

($\Leftarrow$) 
Follows from Lemma~\ref{lem:flowtrace} and the hypothesis.
\end{proof}

\theweaklyliableflow*
\begin{proof}

($\Rightarrow$) By hypothesis  $\exists w_1$ such that $\forall w_3.\, w_1\vec{a}w_3 \in \mathscr{L}(\mathcal{A}) \setminus \mathfrak W$ and 
$\exists w_2.\,w_1 w_2  \in \mathscr{L}(\mathcal{A}) \cap \mathfrak W$.
Let $\overline t =(\vec q_s, \vec a, \vec q_d)$ be the transition such that 
$(w_1\vec a, \vec q_0) \rightarrow^* (\vec a, \vec q_s) \rightarrow (\epsilon, \vec q_d)$, i.e. the principal $i$ in $\vec a$ is weakly liable. 
We show that $\gamma_{\overline{t}} <0$.

Let $w_1$ from $\vec{q_0}$ to $\vec{q_s}$ induce the flow $\vec x$, while  $w_2$ from $\vec{q_s}$ to $\vec q_f$ induce $\vec y$.
Since  $w_1w_2$ is in weak agreement, $\forall i \in I_l. \sum_{t_j \in T} a^i_{t_j}(x_{t_j} + y_{t_j})\geq 0 $. 


Since by hypothesis the i-th principal is liable, the flow $\vec x$ corresponding to the trace $w_1$ is such that $g (\vec x) <0$. 
Otherwise if $g (\vec x)\geq 0$ we can choose a trace, say, $w_3$ such that $w_1 \vec a w_3 \in \mathscr L (\mathcal A) \cap \mathfrak W$, obtaining a contradiction.
Therefore, $\gamma_{\overline t}\leq g(\vec x)<0$.

($\Leftarrow$) 
 by hypothesis $\gamma_{\overline t}<0$ and by 
Lemma~\ref{lem:flowtrace} $\vec x$ corresponds to a run $w$ from the initial state to  $\vec q_s$ such that (by hypothesis again)  $\forall w_3.w_1\vec{a}w_3 \not \in \mathscr{L}(\mathcal{A})\cap \mathfrak W$ and $\exists w_4.w_1w_4 \in \mathscr{L}(\mathcal{A})\cap \mathfrak W$, that is $\overline t$ is a weakly liable transition. 
\end{proof}

\subsection{Automata and Horn Propositional Contract Logic}

For completeness, we first define the grammar for the full PCL, while the rules for its sequent calculus are in Figure~\ref{fig:fullgentzen}.
Unless stated differently, in what follows we only consider proofs without the rules $(weakR)$ and $(cut)$, which are proved to be redundant in~\cite{BartolettiZ09}. 

\begin{defi}[\bf{PCL}]
The formulae of PCL are inductively defined by the following grammar.\\

\bigskip
\vbox{
\noindent\textbf{} \\
\begin{tabular}{lp{4cm}l}
\hspace{16pt} $p \;\; ::=$ & $\bot$ & false \\
& $\top$ & true \\
& $a$ & prime \\
& $\neg p$ & negation  \\
& $p \vee p$ & disjunction \\
& $p \wedge p$ & conjunction \\
& $p \rightarrow p$ & implication \\
& $p \twoheadrightarrow p$ & contractual implication \\
\end{tabular} \vspace{4pt} \\
}
\end{defi}

\begin{figure}[tb]

\[
\irule{}{\Gamma, p \vdash p} id
\qquad
\irule{\Gamma,p \wedge q, p \vdash r}{\Gamma, p \wedge q \vdash r} \wedge L1
\qquad
\irule{\Gamma,p \wedge q, q \vdash r}{\Gamma, p \wedge q \vdash r} \wedge L2
\qquad
\irule{\Gamma \vdash p \quad \Gamma \vdash q}{\Gamma \vdash p \wedge q} \wedge R
\]
\medskip
\[
\irule{\Gamma, p \vee q, p \vdash r \quad \Gamma, p \vee q, q \vdash r}{\Gamma,p \vee q \vdash r} \vee L
\qquad
\irule{\Gamma \vdash p}{\Gamma \vdash p \vee q} \vee R1
\qquad
\irule{\Gamma \vdash q}{\Gamma \vdash p \vee q} \vee R2
\]
\medskip
\[
\irule{\Gamma \vdash p \quad \Gamma,p \vdash q}{\Gamma \vdash q} \ \ cut
\qquad
\irule{\Gamma, p \rightarrow q \vdash p \quad \Gamma, p \rightarrow q, q \vdash r}{\Gamma, p \rightarrow q \vdash r} \rightarrow L
\qquad
\irule{\Gamma, p \vdash q}{\Gamma \vdash p \rightarrow q} \rightarrow R
\]
\medskip
\[
\irule{\Gamma, \neg p \vdash p}{\Gamma,\neg p \vdash r} \neg L
\qquad
\irule{\Gamma, p \vdash \bot}{\Gamma \vdash \neg p}\neg R
\qquad
\irule{}{\Gamma, \bot \vdash p} \bot L
\]
\medskip
\[
\irule{}{\Gamma \vdash \top} \top R
\qquad
\irule{\Gamma \vdash \bot}{\Gamma \vdash p} \ \ weakR
\qquad
\irule{\Gamma \vdash q}{ \Gamma \vdash p \twoheadrightarrow q} \ \ Zero
\]
%
\medskip
\[
\irule{\Gamma, p \twoheadrightarrow q,r \vdash p \quad \Gamma, p \twoheadrightarrow q,q \vdash r}{\Gamma, p \twoheadrightarrow q \vdash r} \ \  Fix
\qquad
\irule{\Gamma, p \twoheadrightarrow q,a \vdash p \quad  \Gamma, p \twoheadrightarrow q,q \vdash b}{\Gamma, p \twoheadrightarrow q \vdash a \twoheadrightarrow b}  \ \ PrePost
\]
\caption{The rules of the sequent calculus for PCL. The contractual implication rules  are  $Zero,Fix \text{ and }Prepost$ while the others are the standards for Intuitionistic logic.}
\label{fig:fullgentzen}
\end{figure}

%
%
%
%

The following proposition will be helpful later on.


\propuno*
\begin{proof}
The first item follows immediately from Definition~\ref{def:translationPCL}.
\\
For the second item, we first consider the translation of the clauses in the formula.
By construction, for each of them two cases are possible when considering request actions: either 
$\llbracket ( \bigwedge_{j \in J_i}a_j) \rightarrow b  \rrbracket$ or
$\llbracket (\bigwedge_{j \in J_i}a_j) \twoheadrightarrow b \rrbracket$.
In both cases we have outgoing request transitions of the form
$\{  (J' \cup \{j\}, a_j, J') \mid J' \cup \{j\} \in  2^{J_i}, j \in J \}$.
Finally by applying the associative composition $\boxtimes$ (Definition~\ref{def:aproduct}),  some requests
may be matched with corresponding offers, but no new request can be originated.
\\
The third item follows immediately by the translation and by the condition in Definition~\ref{def:hornpcl}, 
that all the atoms are different.
\end{proof}

The following lemma shows that if an atom $a$ is entailed by a formula $p$ then
there is a trace recognised by the contract automaton $\llbracket p \rrbracket$ 
where the request corresponding to 
the atom $a$, if any,  is always matched.

\begin{lem}
\label{lem:if}
Given a H-PCL formula $p$  and an atom $a$ in $p$ we have:
\[
p \vdash a \text{ is provable implies }\exists w \in \mathscr{L}(\llbracket p \rrbracket)
\text{ such that no } \vec{a}\text{ request on }a \text{ occurs in } w
\]
\end{lem}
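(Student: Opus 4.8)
The plan is to split the argument into a structural fact about where $a$ can occur in $p$, proved from the provability hypothesis, followed by an explicit construction of the required accepting run of $\llbracket p \rrbracket$.

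First I would record two easy observations from Definition~\ref{def:translationPCL} and Proposition~\ref{prop:1}: a request action on $a$ appears in some run of $\llbracket p \rrbracket$ exactly when $a$ is one of the premises $a_j$ of an implication clause of $p$ (standard or contractual), and a clause $\alpha_k$ of $p$ carries a (self-looping) transition labelled by the offer $\overline a$ exactly when $a$ is a conjunct of an atom-conjunction clause or the conclusion of an implication clause; moreover every offer transition of every $\llbracket \alpha_i \rrbracket$ is a self-loop, and the pending-request set of each principal can only shrink along a run. If no clause of $p$ has $a$ among its premises, then $\llbracket p \rrbracket$ has no request action on $a$ at all, and any accepting run works; one exists, e.g.\ by consuming in each principal every pending request until the unique final state $\langle \{ \ast \}, \ldots, \{ \ast \} \rangle$ of Proposition~\ref{prop:1} is reached. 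The interesting case is thus when $a$ is a premise of some clause, and here I would use $p \vdash a$ to show that $\overline a$ is offered by \emph{some} clause $\alpha_k$ of $p$: working cut-free (the rules $cut$ and $weakR$ being redundant), one shows by induction on the height of a derivation of $p \vdash a$ that every atom occurring on the left of any sequent of the derivation is a conjunct of an atom-conjunction clause of $p$ or the conclusion of an implication clause of $p$ — new left atoms enter only through $\wedge L1/\wedge L2$ applied to such a conjunction or through the formula $q$ of $\rightarrow L$ or $Fix$, which by the H-PCL grammar is exactly such a conclusion — so, since a bare atom $a$ on the right is ultimately obtained by $id$ with $a$ on the left, $a$ must have this form. (Alternatively this follows from soundness of PCL: otherwise the valuation making every atom of $p$ except $a$ true would satisfy every clause of $p$, as each conjunct and each conclusion differs from $a$, contradicting $p\vdash a$.) The H-PCL side condition $b \neq a_j$ then guarantees that $\alpha_k$ does not itself have $a$ among its premises, hence its index differs from that of any principal whose premise set contains $a$.

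Then I would build the run. First drive $\llbracket \alpha_k \rrbracket$ to state $\{ \ast \}$ by consuming, one at a time, all its premises — all of which differ from $a$; each such step is a legal transition of the product $\llbracket p \rrbracket$, namely the single-component request of $\alpha_k$ or, if another principal is simultaneously ready to offer the complement, the forced match, and in either case the premise is consumed and the offering principals are left untouched because all offers are self-loops. From that point $\llbracket \alpha_k \rrbracket$ permanently offers $\overline a$, so by the single-component case of Definition~\ref{def:prod} every remaining request on $a$ can be fired only as a match with this offer, never standalone. I then consume all remaining pending premises of all principals, in any order, one at a time: requests on $a$ become matches with $\alpha_k$ (possible since its index differs from the requester's), every other request is fired standalone or as a forced match. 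Since pending-request sets shrink monotonically and each premise is consumed exactly once, every principal ends in state $\{ \ast \}$, so the resulting trace $w$ lies in $\mathscr{L}(\llbracket p \rrbracket)$, and by construction it contains no request action on $a$.

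I expect the main obstacle to be the structural claim of the second paragraph: it is the only point where the provability hypothesis is genuinely used, and it requires either the cut-free induction sketched above or an appeal to the Kripke semantics and soundness of PCL; once it is in place, the run construction is routine bookkeeping over Definition~\ref{def:translationPCL}, Definition~\ref{def:prod} and Proposition~\ref{prop:1}.
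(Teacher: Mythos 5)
Your proposal is correct in its overall strategy, which coincides with the paper's: use $p \vdash a$ to guarantee that some clause $\alpha_k$ of $p$ carries the offer $\overline a$ (i.e.\ that $a$ occurs as a conjunct of an atom-conjunction or as the conclusion of an implication), and then argue that requests on $a$ are forced into matches. Where you genuinely diverge is in the treatment of the case where $\overline a$ comes from a \emph{standard} implication $\bigwedge_{j\in J}a_j \rightarrow a$. The paper at this point performs an induction on the depth of the derivation of $p \vdash a$, extracting $p \vdash a_j$ for every premise and recursively ensuring each $a_j$ is itself matchable before $\overline a$ becomes available. You instead observe that the lemma only forbids request actions \emph{on $a$}, so the premises $a_j$ (all distinct from $a$ by the grammar side condition $a_j \neq b$) may simply be fired as standalone requests or forced matches --- every request of every principal is always fireable in the product of Definition~\ref{def:prod} --- which drives $\alpha_k$ to $\{\ast\}$ and makes $\overline a$ permanently available. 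This eliminates the paper's induction entirely and replaces it with an explicit, routine run construction; it is a real simplification, and your bookkeeping (offers are self-loops, pending-request sets shrink, a request on $a$ cannot fire standalone once some component offers $\overline a$, the offerer's index differs from any requester's) is accurate.

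The one weak spot is the structural claim itself, which you correctly identify as the only place the hypothesis $p \vdash a$ is used (the paper, for what it is worth, merely asserts it). Your proposed invariant --- \emph{every atom occurring on the left of any sequent of the derivation is a conjunct of an atom-conjunction clause or the conclusion of an implication clause of $p$} --- is violated by the rule $Fix$ of Figure~\ref{fig:gentzen}: its first premise $\Gamma, p'\twoheadrightarrow q, r \vdash p'$ moves the current goal $r$ to the left, and at the root $r$ is the atom $a$ itself, which need be neither a conjunct nor a conclusion. This does not sink the claim (in that branch the goal has changed to the premise $p'$ of the contractual implication, and the branch that must still deliver $a$ is the second premise $\Gamma, p'\twoheadrightarrow q, q \vdash a$, where only the conclusion $q$ has been added), but the induction must be restated to account for it --- for instance by tracking only the sequents whose goal is still $a$ and showing that along that spine the left-hand side grows only by conclusions of implications of $p$. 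Your semantic alternative needs similar care: the valuation argument requires a semantics for which PCL is sound and in which $\twoheadrightarrow$ is satisfied whenever its conclusion is, and material implication does not validate $Fix$, so this has to be backed by the Kripke-style semantics of~\cite{BartolettiZ10} rather than by classical truth tables.
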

\begin{proof}
Consider each of the conjuncts $\alpha$ of $p$.
If $a$ does not appear in $\alpha$ as the premise of an implication/contractual implication, 
then the statement follows trivially by Definition~\ref{def:translationPCL} and by hypothesis, 
since the translation of $a$ is an offer action. 
Otherwise $a$ also occurs in $\alpha$ within:
\begin{enumerate}[label=\arabic*.]
\item a conjunction, or
\item the conclusion of a contractual implication, or
\item the conclusion of an implication. 
\end{enumerate}

For the first two cases, by Definition~\ref{def:translationPCL}, a transition labelled by the relevant offer $\overline a$ is available in all states, so preventing a request $a$ to appear in $ \llbracket p \rrbracket$, i.e.\ after the product of the principals (Definition~\ref{def:aproduct}). 

For proving case 3, $\alpha = \bigwedge_{j \in J}a_j \rightarrow a$ and we proceed by induction on the depth of the proof of $p \vdash a$. 
It must be the case then that $\forall j$ it holds $p \vdash a_j$.
We can now either re-use the proof for cases 1 and 2 (that act as base cases), or the induction hypothesis if $a_j$ occurs in the conclusion of an implication.  By Definition~\ref{def:translationPCL} after all $a_j$ are matched, the offer $a$ will be 
always available, preventing a request $a$ to appear.
%
%
%
\end{proof} 

In order to keep the following definition compact, we use  $\circ$ for either $\rightarrow$ or $\twoheadrightarrow$.
In addition, by abuse of the notation we also use $\land$ to operate between formulas, we write $p'$ for an empty formula or  with a single clause, and we allow the indexing sets $J$ and $K$ in clauses to be empty.
Finally, we let $(\bigwedge_{j \in \emptyset}a_j) \circ b$ stand for $b$.

\begin{defi}
\label{def:meno-b}
Given a formula $p$, if from the initial state of $\llbracket p \rrbracket$ there is an outgoing offer or an outgoing match transition with label $\vec a$, we define
\[
  p \slash \vec{a}  = \left\{ 
    \begin{array}{ll} 
     p & \text{ if $\vec{a}$ is an offer}      
      \\
      p' \wedge  (\bigwedge_{z \in Z}c_z \twoheadrightarrow b)  \wedge  ( \bigwedge_{j \in J}a_j ) \circ b' \ & 
       \text{ if $\vec{a}$ is a match with $\ithel{\vec{a}} i=b$ and }\ \\
      & p=p' \wedge (\bigwedge_{z \in Z}c_z \twoheadrightarrow b) \wedge  ( \bigwedge_{j \in J}a_j \wedge b) \circ b' 
      \\
      p' \wedge  (\bigwedge_{k \in K} a_k \wedge b) \wedge  ( \bigwedge_{j \in J}a_j ) \circ b' \ & 
      \text{ if $\vec{a}$ is a match  with $\ithel{\vec{a}} i=b$ and }\ \\
    &  p' \wedge  (\bigwedge_{k \in K} a_k \wedge b) \wedge  ( \bigwedge_{j \in J}a_j  \wedge b) \circ b'
    \end{array} 
  \right.
\]
\end{defi}

%
%

We now establish a relation between $\llbracket p \slash \vec{a} \rrbracket$,
and the contract automaton
obtained by changing the initial state $\vec q_0$ of $\llbracket p \rrbracket$ 
to $\vec q$, for  the transition $(\vec q_0, \vec a, \vec q)$ of $\llbracket p \rrbracket$.  
The main idea is to relate  the formula $p \slash \vec{a}$ to the residual of the automaton 
$\llbracket p \rrbracket$ after the execution of an initial transition labelled by $\vec a$, that is 
$\llbracket p \slash \vec a \rrbracket$.
Recall that the translation given in Definition~\ref{def:translationPCL} yields deterministic automata.

\begin{lem}
\label{lem:automata_traduction}
Given a H-PCL formula  p and the contract automaton
  $\llbracket p \rrbracket=\langle Q, \vec{q_0},A^r,A^o,T,F \rangle$, 
if \mbox{$t=(\vec{q_0},\vec{a},\vec{q}) \in T$} is an offer or a match transition,  
then $\mathscr{L}(\mathcal{A}) =\mathscr{L}(\llbracket p \slash \vec{a} \rrbracket)$ 
where \linebreak \mbox{$\mathcal{A}=\langle Q, \vec{q},A^r,A^o,T,F \rangle$}.

\end{lem}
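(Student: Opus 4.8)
The plan is to argue by cases on the shape of the transition $t=(\vec{q_0},\vec a,\vec q)$ — the offer case and the match case, exactly the two branches of Definition~\ref{def:meno-b} — and to reduce each case to a purely syntactic comparison between the translation $\llbracket p\slash\vec a\rrbracket$ and the automaton obtained from $\llbracket p\rrbracket$ by moving the initial state to $\vec q$. The recurring fact I would use is that $\llbracket p\rrbracket=\boxtimes_i\llbracket\alpha_i\rrbracket$ and that the product constructions (Definitions~\ref{def:prod} and~\ref{def:aproduct}) fix $Q,A^r,A^o,T,F$ independently of the initial states of the operands; hence $\mathcal A=\langle Q,\vec q,A^r,A^o,T,F\rangle$ is the product of the principals $\llbracket\alpha_i\rrbracket$ re-rooted at the respective components $\ithel{\vec q}{i}$, and it suffices to establish a principal-by-principal correspondence and then lift it back through the product.

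First I would handle the offer case. By inspection of Definition~\ref{def:translationPCL}, every offer transition of each $\llbracket\alpha_i\rrbracket$ is a self-loop (conjunctions, standard implications and contractual implications all loop on the state firing the offer), and by the second clause of the definition of $T$ in Definition~\ref{def:prod} an offer transition of the product comes from an offer transition of a single component, hence is again a self-loop (this is also reflected in Proposition~\ref{prop:1}(1)). Therefore $\vec q=\vec{q_0}$, and since $p\slash\vec a=p$ in this case we get $\mathcal A=\llbracket p\rrbracket=\llbracket p\slash\vec a\rrbracket$, so the languages coincide trivially.

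The match case is the substantial one. Let $i$ be the principal performing the offer $\overline b$ and $i'$ the one performing the request $b$ in $\vec a$. At the initial state $\vec{q_0}$ the only enabled offers of $\llbracket p\rrbracket$ are those of conjunction clauses and the conclusions of contractual implications (the conclusion of a standard implication is enabled only after all its premises have been consumed, which does not hold at $\vec{q_0}$); hence $\llbracket\alpha_i\rrbracket$ is the translation of a conjunction $\bigwedge_{k\in K}a_k\wedge b$ or of a contractual implication $\bigwedge_{z\in Z}c_z\twoheadrightarrow b$, and in both situations $\vec a$ leaves $\ithel{\vec q}{i}=\ithel{\vec{q_0}}{i}$ (a conjunction has a single state; a contractual implication loops $(q,\overline b,q)$). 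On the requesting side, $\llbracket\alpha_{i'}\rrbracket$ is the translation of a clause $(\bigwedge_{j\in J}a_j\wedge b)\circ b'$ with $\circ\in\{\rightarrow,\twoheadrightarrow\}$, whose initial state is $J\cup\{b\}\cup\{\ast\}$ and whose request transition on $b$ leads to $J\cup\{\ast\}$ — which is precisely the decomposition of $p$ used in Definition~\ref{def:meno-b}, so that $p\slash\vec a$ is obtained from $p$ by replacing this one clause with $(\bigwedge_{j\in J}a_j)\circ b'$ and leaving every other clause (in particular $\alpha_i$) unchanged. Thus $\vec q$ and $\vec{q_0}$ differ only in the $i'$-th component, and the core step is to check from Definition~\ref{def:translationPCL} that $\llbracket(\bigwedge_{j\in J}a_j\wedge b)\circ b'\rrbracket$ re-rooted at $J\cup\{\ast\}$ coincides, on its reachable part, with $\llbracket(\bigwedge_{j\in J}a_j)\circ b'\rrbracket$: the states reachable from $J\cup\{\ast\}$ are exactly $\{J''\cup\{\ast\}\mid J''\subseteq J\}$ (the index of $b$ is never re-added), the request transitions among them are the $(J''\cup\{j\},a_j,J'')$ with $j\in J$, and the offer transition is $(\{\ast\},\overline{b'},\{\ast\})$ for $\rightarrow$ and $(q,\overline{b'},q)$ for all these $q$ for $\twoheadrightarrow$, i.e. literally the automaton $\llbracket(\bigwedge_{j\in J}a_j)\circ b'\rrbracket$.

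Putting the pieces together, $\mathcal A$ restricted to the states reachable from $\vec q$ is the product of the principals $\llbracket\alpha_k\rrbracket$ for $k\neq i'$ together with $\llbracket(\bigwedge_{j\in J}a_j)\circ b'\rrbracket$, which is exactly $\llbracket p\slash\vec a\rrbracket$; since the language only depends on runs from the initial state, $\mathscr{L}(\mathcal A)=\mathscr{L}(\llbracket p\slash\vec a\rrbracket)$. I expect the main obstacle to be the bookkeeping in the match case — tracking which component of $\vec q$ differs from $\vec{q_0}$, verifying that the consumed premise index is genuinely never re-introduced so that the two reachable state spaces coincide, and making the passage from the principal-level identity to the product-level identity precise (that runs of the re-rooted product correspond bijectively, label by label, to runs of $\llbracket p\slash\vec a\rrbracket$) rather than merely plausible.
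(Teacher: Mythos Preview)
Your proposal is correct and follows essentially the same approach as the paper: case analysis on offer versus match, with the offer case handled by the self-loop observation ($\vec q=\vec{q_0}$ and $p\slash\vec a=p$), and the match case handled by observing that only the requesting principal's component changes, that the offer at $\vec{q_0}$ must come from a conjunction or a contractual implication (not a standard implication), and that re-rooting that single principal corresponds exactly to dropping the consumed premise in Definition~\ref{def:meno-b}. The paper is terser---it invokes Proposition~\ref{prop:1}(2) and the unreachability of $\vec{q_0}$ from $\vec q$ rather than spelling out the reachable-state comparison you give---but the underlying argument is the same.
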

\begin{proof}
The proof is by cases of $\vec a$. If $\vec{a}$ is an offer, then  by  
Definition~\ref{def:translationPCL} it must be  $\vec{q}=\vec{q_0}$
 and trivially $\mathcal{A}=\llbracket p \slash \vec{a} \rrbracket$.

Otherwise, since $\vec{a}$ is a match action, say on atom $b$, it contains a request from, say, the $i$-th principal and a corresponding offer from another.
Therefore, $p = \bigwedge_{k \in K} \alpha_k$ contains within a clause $\alpha_j$ the atom $b$, originating the offer, as a conjunction or as a conclusion of a contractual implication (note that it cannot be an implication because we are in the initial state), and $\alpha_i$ also contains $b$ originating this time the request.
%
%
We now prove that  the automata $\mathcal A$ and $\llbracket p \slash \vec{a} \rrbracket$ have the same initial state.
Let $\vec{q_0}=\langle J_1, \ldots, J_n \rangle$, then, since $\ithel{\vec{a}} i = b$, 
the states $\vec{q_0}$ and $\vec{q}$ only differ in the $i$-th element, 
where in $\ithel{\vec{a}} i$ the request action $b$ is not available anymore; formally,
$\forall j \neq i$ it must be 
$\ithel{\vec{q}} j = \ithel{\vec{q_0}} j = J_j$, and 
$\ithel{\vec{q}} i=\ithel{\vec{q_0}} i  \setminus \{i\}$.  
By Definition~\ref{def:meno-b} $p$ and $p \slash \vec{a}$
differ because of the single atom $b$ has been removed from $\alpha_i$.
By these facts and by item 2 of Proposition~\ref{prop:1} the language equivalence follows.
Indeed, $\llbracket p \slash \vec{a} \rrbracket$ is the product of the same $\llbracket \alpha_k \rrbracket, k \neq i$ used for $\llbracket p \rrbracket$, and the match on $b$ of $\mathcal{A}$ leaves $\vec q_0$, that is not reachable from 
$\vec q$.
\end{proof}

\begin{figure}[tb]
\centering
\begin{tikzpicture}[->,>=stealth',shorten >=1pt,auto,node distance=2.3cm,
                    semithick, every node/.style={scale=0.6}]
  \tikzstyle{every state}=[fill=white,draw=black,text=black]

  \node[initial,state] 				 (A)                    {$\vec{q_1}$};
  \node[state] 				         (B) [right of=A]   {$\vec{q_2}$};
  \node[state]						 (C) [right of=B]   {$\vec{q_3}$};
  \node[state]         				 (D) [below of=A] {$\vec{q_4}$};	
  \node[state] 				         (E) [right of=D]   {$\vec{q_5}$};
  \node[state,accepting]			 (F) [right of=E] {$\vec{q_6}$};
  \node[state]				(G) [right of=F] {$\vec{q_8}$};
  \node[state]				(H) [above of=C] {$\vec{q_7}$};	
 
  \path (A)  edge             				   node{$(b,\overline b,\blk)$} (B)
			    edge             				   node[left]{$(\blk,c,\overline{c})$} (D)
			  edge	[bend left]		node {$(\blk,a,\blk)$} (H)
		  (B)  edge   [loop above]           node {$(\blk, \overline b, \blk)$} (B)
				edge         				       node {$(\overline{a},a,\blk)$} (C)
				edge         				       node[left] {$(\blk,c,\overline{c})$} (E)
		  (C)  edge                             node {$(\blk,c,\overline{c})$} (F)
				edge [loop right]               node {$*$} (C)
		  (D)  edge [loop left]             node {$(\blk, \blk, \overline c)$} (D)
				edge                             node{$(b,\overline b,\blk)$} (E)
		  (E)  edge[loop below]	               node {$ (\blk, \overline b, \blk), (\blk, \blk, \overline c)$} (E)
				edge              	     	   node {$(\overline{a},a,\blk)$} (F)
		   (F)  edge [loop below]             node {$**$} (F)
		  (H)  edge[bend left]	   node{$(\blk, c ,\overline c)$}(G)
			edge 			 node{$(b,\overline b, \blk)$}(C)
		 (G)  edge 			node{$(b,\overline b, \blk)$}(F)
		      edge[loop below]	node{$(\blk,\blk,\overline c)$}(G);
\end{tikzpicture} 
\caption{The contract automaton $\llbracket Alice \wedge Bob \wedge Charlie \rrbracket$discussed in Example~\ref{ex:hpcltranslation} is displayed here, where the principals are those of Figure~\ref{fig:pclautomata}, and \\ $*= (\overline a, \blk, \blk), (\blk, \overline b, \blk)$, $** =(\overline a, \blk, \blk), (\blk, \overline b, \blk), (\blk, \blk, \overline c)$.}
\label{fig:alicebobcharlieproduct}
\end{figure}

\begin{exa}
\label{ex:hpcltranslation}
 Let $\llbracket p \rrbracket$ be the automaton shown in Figure~\ref{fig:alicebobcharlieproduct},
where $p = Alice \wedge Bob \wedge Charlie$ and the principals are those of Figure~\ref{fig:pclautomata}.
Consider now $p' = p \slash (b,\overline b, \blk)=
(\,a \wedge ( (a \wedge c) \twoheadrightarrow b) \wedge c\,) $
and build
$\llbracket p' \rrbracket = \{ \langle \{ \vec q_2,\vec q_3,\vec q_5, \vec q_6 \}, \vec{q_2},A^r,A^o,T,\vec{q_6} \rangle  \}$
(transitions, alphabets and states are taken from $\llbracket p \rrbracket$).
It is immediate to verify that the language of $\llbracket p' \rrbracket$ is the same of $\llbracket p \rrbracket$, when the initial state is $\vec q_2$ instead of $\vec q_1$.
\end{exa}

The following lemma is auxiliary for proving the next theorem.
Its second item is similar to Lemma 1 in~\cite{Pfenning2000}.
\begin{lem}\label{lem:PCLaux}
Let $a,b$ be atoms, $p,q$ be conjunction of atoms, with $q$ possibly empty, $p_1, \ldots, p_n$ be formulae,
and $\circ \in \{\rightarrow, \twoheadrightarrow \}$,
 then
\begin{enumerate}[label=(\roman*)]
\item
$
\text{if }\quad
\dfrac{
 \Delta
}
{
\Gamma, q \circ b \vdash p
}
\quad
\text{ then}
\quad \exists \Delta' :
\dfrac{ \Delta'
}
{
\Gamma, (q \wedge a) \circ b, a \vdash p
}
$
\item 
if \quad $\Gamma \vdash p$ \quad then \quad $\forall \Gamma'.\,\,\Gamma,\Gamma' \vdash p$

\item
if \quad $\bigwedge_{i \in 1 \ldots n} p_i \vdash q$  \quad then \quad $p_1, \ldots, p_n \vdash q$

\item
if \quad $\Gamma \vdash \bigwedge_{i \in 1 \ldots n} p_i$  \quad then \quad $\forall i. \, \Gamma \vdash p_i$

\end{enumerate}
\end{lem}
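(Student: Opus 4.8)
All four items are statements about provability in the sequent calculus of Figure~\ref{fig:fullgentzen} (restricted, as throughout this section, to cut- and $weakR$-free proofs, although $cut$ itself is admissible for PCL by~\cite{BartolettiZ09}). The plan is to reduce every item to three ingredients: admissibility of left weakening, admissibility of $cut$, and a handful of short explicit derivations; no induction on the shape of the $p_i$ nor on the proof $\Delta$ of item~(i) will be needed. I would establish weakening (item~(ii)) first: $\Gamma \vdash p$ implies $\Gamma, r \vdash p$ for a single formula $r$ by a one-line induction on the derivation, since the axiom $(id)$ already carries an arbitrary context and every rule of Figure~\ref{fig:fullgentzen} only propagates or enlarges the contexts of its premises, so $r$ threads through unchanged; iterating over $\Gamma'$ gives the statement.

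\emph{Items~(iii) and~(iv).} Two easy derivations settle these: $p_1, \dots, p_n \vdash \bigwedge_{i \in 1 \ldots n} p_i$, obtained by repeated $(\wedge R)$ with leaves closed by $(id)$ and weakening, and $\bigwedge_{i \in 1 \ldots n} p_i \vdash p_j$ for each $j$, obtained by repeated $(\wedge L1)$/$(\wedge L2)$ closed by $(id)$. For~(iii), from the hypothesis $\bigwedge_i p_i \vdash q$ weakened to $p_1, \dots, p_n, \bigwedge_i p_i \vdash q$ and from $p_1, \dots, p_n \vdash \bigwedge_i p_i$, a single $cut$ on $\bigwedge_i p_i$ over the common context $p_1, \dots, p_n$ yields $p_1, \dots, p_n \vdash q$. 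For~(iv), from the hypothesis $\Gamma \vdash \bigwedge_i p_i$ and from $\bigwedge_i p_i \vdash p_j$ weakened with $\Gamma$, a single $cut$ on $\bigwedge_i p_i$ over the common context $\Gamma$ yields $\Gamma \vdash p_j$.

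\emph{Item~(i).} The crux is the auxiliary provability
\[
  (q \wedge a) \circ b,\ a \ \vdash\ q \circ b
\]
which, by the conventions fixed before the statement, reads $a \circ b,\ a \vdash b$ when $q$ is empty. For $\circ = {\rightarrow}$ this is immediate: apply $(\rightarrow R)$, then $(\rightarrow L)$ on $(q \wedge a) \rightarrow b$, discharging the first branch by $(\wedge R)$ with $(id)$ on $q$ and on $a$, and the second by $(id)$ (for empty $q$ both branches are just $(id)$). For $\circ = {\twoheadrightarrow}$ with $q$ non-empty I would use $PrePost$, which reduces the goal to the two premises $a,\,(q \wedge a) \twoheadrightarrow b,\, q \vdash q \wedge a$ and $a,\,(q \wedge a) \twoheadrightarrow b,\, b \vdash b$, both closed by $(\wedge R)$ and $(id)$. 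For $\circ = {\twoheadrightarrow}$ with $q$ empty, the rule $Fix$ with principal $a \twoheadrightarrow b$ reduces the goal $a \twoheadrightarrow b,\, a \vdash b$ to two instances of $(id)$. Given this auxiliary fact, item~(i) follows by one $cut$: weaken the hypothesis $\Gamma, q \circ b \vdash p$ to $\Gamma,\,(q \wedge a) \circ b,\, a,\, q \circ b \vdash p$, weaken the auxiliary derivation to $\Gamma,\,(q \wedge a) \circ b,\, a \vdash q \circ b$, and cut on $q \circ b$ over the common context $\Gamma,\,(q \wedge a) \circ b,\, a$.

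The only real obstacle is getting the contractual-implication derivations of~(i) exactly right, in particular keeping straight the degenerate ``empty $q$'' reading and the fact that $\bigwedge$ abbreviates a fixed nesting of binary conjunctions, so that each appeal to $(id)$, $(\wedge L1)$/$(\wedge L2)$ and $(\wedge R)$ is a legal instance of the corresponding rule; once these short derivations are in hand, the rest is just weakening and $cut$.
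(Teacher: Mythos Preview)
Your proposal is correct, and it takes a genuinely different route from the paper's own proof.

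The paper argues item~(i) by induction on the depth of the given derivation~$\Delta$, with a case analysis on the last rule applied (distinguishing whether that rule touches $q\circ b$ or not, and handling $\rightarrow L$ and $Fix$ explicitly); item~(iii) is proved by direct syntactic surgery on the derivation tree (deleting the $\wedge L_i$ steps on $\bigwedge_i p_i$ and patching the axioms). Only item~(iv) in the paper uses $cut$. Your approach is more uniform: you prove weakening once, derive the short auxiliary sequent $(q\wedge a)\circ b,\,a \vdash q\circ b$ (via $\rightarrow R/\rightarrow L$, $PrePost$, or $Fix$ depending on $\circ$ and whether $q$ is empty), and then obtain~(i), (iii), (iv) by a single admissible $cut$ each. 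This avoids the rule-by-rule induction entirely. The trade-off is that the paper's construction produces a cut-free $\Delta'$ whose structure mirrors~$\Delta$ (its proof of~(ii) even preserves the last rule applied), whereas your derivations contain cuts and must appeal to cut-elimination to land back in the cut-free fragment; for the downstream uses in Theorem~\ref{the:1NPCLagreement} only provability is needed, so this costs nothing.
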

\begin{proof}
To prove the first item, we  proceed by induction on the depth of $\Delta$ and by case analysis on the last rule applied. 
In the base case $\Delta$ is empty, we have two cases
\begin{enumerate}
\item $q \text{ non-empty or } p\neq b $: then it must be that $\Gamma = p, \Gamma'$  for some $\Gamma'$
and the last rule
applied is $id$. Trivially, $\Delta'$ will be empty and we have
\[
\dfrac{ 
}
{
\Gamma',p, (q \wedge a) \circ b, a \vdash p
}{id}
\]
\item $q \text{ empty and }p=b$: our hypothesis reads as $\,\,\dfrac{}{\Gamma, b \vdash b}{id}$, and we build the following deduction
\[
\dfrac{
	\dfrac{}{\Gamma', a \circ b, a \vdash a}{id}
	\quad
	\dfrac{}{\Gamma, a \circ b, a, b \vdash b}{id}
}
{
\Gamma, a \circ b, a \vdash b
}{\diamondsuit}
\]
where if $\circ = \rightarrow$ then $\Gamma'  = \Gamma$ and  $\diamondsuit= \rightarrow L$, otherwise if
$\circ = \twoheadrightarrow$ then $\Gamma' = \Gamma,b$ and $\diamondsuit = Fix$.
\end{enumerate}
For the inductive step, we distinguish two cases:
\begin{enumerate}
\item the last rule applied to deduce the hypothesis does not involve $q \circ b$.  
Hence the rule must be applied on $p$ or on a formula in $\Gamma$. 
We can apply the same rule to 
$\Gamma, (q \wedge a) \circ b, a \vdash p$ and use the inductive hypothesis.
\item  the last rule applied to deduce the hypothesis involves $q \circ b$.  
There are two exhaustive cases

\begin{enumerate}
\item $\circ = \rightarrow$, 
then the last rule applied is $\rightarrow L$ and the deduction tree has the following form:
\[
\dfrac{
	\dfrac{\Delta_1}{
	\Gamma,  q \rightarrow b \vdash q
	}
	\quad 
	\dfrac{\Delta_2}{\Gamma, q \rightarrow b, b \vdash p}
}
{\Gamma,  q \rightarrow b \vdash p } \rightarrow L
\]
Then by induction hypothesis we have
\[ 
	\dfrac{ \Delta_1'}
		{\Gamma, ( q\wedge a) \rightarrow b, a \vdash q}
	\qquad\qquad 
	\dfrac{\Delta_2'}{\Gamma, ( q\wedge a) \rightarrow b, a, b \vdash p}
\]

From the right one and a derivation tree $\Delta_3$ detailed below, we build 

\[
\dfrac{
	\Delta_3
	\quad 
	\dfrac{\Delta_2'}{\Gamma, ( q\wedge a) \rightarrow b, a, b \vdash p}
}
{\Gamma, ( q\wedge a) \rightarrow b, a \vdash p } \rightarrow L
\]
$\Delta_3$ is the derivation tree:
\[
\dfrac{
		\dfrac{ \Delta_1'}
		{\Gamma, ( q\wedge a) \rightarrow b, a \vdash q} 
	\quad
	\dfrac{}
		{\Gamma, (q\wedge a) \rightarrow b, a \vdash  a }id		
	}
	{
	\Gamma, (q\wedge a) \rightarrow b,a \vdash  q\wedge a
	}\wedge R
\]

\item $\circ = \twoheadrightarrow$, 
then the last rule applied is $Fix$ and the deduction tree has the following form:
\[
\dfrac
{
	\dfrac
	{
		\Delta_1
	}
	{
	\Gamma, q \twoheadrightarrow b, p  \vdash  q
	} \quad
	\dfrac
	{
		\Delta_2
	}
	{
	\Gamma,q \twoheadrightarrow b, b  \vdash p
	}
}
{\Gamma,q \twoheadrightarrow b  \vdash p } Fix
\]
Then by the induction hypothesis we have
\[
\dfrac
	{\Delta_1'} {\Gamma, ( q \wedge a) \twoheadrightarrow b,a ,p  \vdash   q}
	\qquad \qquad
\dfrac
	{\Delta_2'}{\Gamma, ( q \wedge a) \twoheadrightarrow b,a ,  b  \vdash p}	
\]
From the above, we build the following
\end{enumerate}
\[
\dfrac
{
	\dfrac
	{
		\dfrac
		{
			\Delta_1'
		}
		{
		\Gamma, ( q \wedge a) \twoheadrightarrow b,a ,p  \vdash   q
		}\quad
		\dfrac
		{
		}
		{
		\Gamma, ( q \wedge a) \twoheadrightarrow b,a ,p  \vdash   a
		} id
	}
	{
	\Gamma, ( q \wedge a) \twoheadrightarrow b,a ,p  \vdash   q \wedge a
	} \wedge R
	\quad
	\dfrac
	{
		\Delta_2'
	}
	{
	\Gamma, ( q \wedge a) \twoheadrightarrow b,a ,  b  \vdash p
	}
}
{\Gamma, ( q \wedge a) \twoheadrightarrow b,a  \vdash p } Fix
\] 
\end{enumerate}

\vspace{5pt}

\noindent For the second item, we prove a stronger fact: the last rule used to deduce $\Gamma, \Gamma' \vdash p$ is the same used for proving $\Gamma \vdash p$.
We proceed by induction on the depth of the derivation for $\Gamma \vdash p$ and then by case analysis on the last rule applied.

The base case is when the axiom $id$ is applied, and the proof is immediate.

For the inductive case, we assume that for some rule $\diamondsuit$
\[
\dfrac{\Delta}{\Gamma \vdash p} \,\diamondsuit
\qquad \text{implies} \qquad
\dfrac{\overline{\Delta}}{\Gamma, \Gamma' \vdash p} \,\diamondsuit
\]
Rather than considering each rule at a time, we group them in two classes: those with two premises, and those with one premise. 
Below, we discuss the first case, and the second follows simply erasing one premise in what follows.
The deduction tree in the premise above has the following form
\[
\dfrac
   { 
   \dfrac{\Delta'}{\overline{\Gamma} \vdash q}
    \qquad
    \dfrac{\Delta''}{\overline{\Gamma'} \vdash q'} 
    }
   {
   \Gamma \vdash p
   }\, \diamondsuit
\]
and by applying the induction hypothesis to both the premises we conclude
\[
\dfrac
   { 
   \dfrac{\overline{\Delta'}}{\overline{\Gamma}, \Gamma' \vdash q}
    \qquad
    \dfrac{\overline{\Delta''}}{\overline{\Gamma'}, \Gamma' \vdash q'} 
    }
   {
   \Gamma, \Gamma' \vdash p
   }\, \diamondsuit
\]
%
%
%
%

\smallskip
%
%
%
%
%
\noindent
Moreover note that in this fragment no contradictions can be introduced. 

For the third item, we have a derivation tree $\Delta$ for the sequent $\bigwedge_{i \in 1 \ldots n} p_i \vdash q$.
To build a derivation tree $\Delta'$ for $p_1, \ldots, p_n \vdash q$ apply the following two steps.
The first step removes from $\Delta$ all the rules $\wedge L_i$ applied to (each sub-term of)
$\bigwedge_{i \in 1 \ldots n} p_i $, obtaining $\Delta''$. 
Then, replace all applications of the axiom $(id)$ in $\Delta''$ of the form
\[
\dfrac{}{\Gamma, \bigwedge_{j \in J} p_j \vdash \bigwedge_{j \in J} p_j}{id}
\]
with a derivation tree with $k = |J|$ leaves of the form
\[
\dfrac{}{\Gamma, p_1, p_2, ...,p_k \vdash p_j}\,{id}
\]
and by repeatedly applying the rule $(\land R)$ until we obtain the relevant judgement
\[\Gamma, p_1, p_2, ...,p_k \vdash \bigwedge_{i \in 1 \ldots n} p_i\,.\]

\smallskip
\noindent
For the fourth item, we have a derivation tree $\Delta$ for the sequent $\Gamma \vdash \bigwedge_{i \in \{ 1 \ldots n\}} p_i$.

\noindent
For each sequent $\Gamma \vdash p_j$, $j \in \{ 1 \ldots n\}$, the derivation tree is then:
\[
\dfrac
{
	\dfrac{\Delta}{\Gamma \vdash p_j \wedge \bigwedge_{i \in \{ 1 \ldots n\} \setminus \{j\}} p_i}
	\quad
	\dfrac{
		\dfrac{}{ p_j , \bigwedge_{i \in \{ 1 \ldots n\} \setminus \{j\}} p_i \vdash p_j }{id}
	}
	{ p_j \wedge \bigwedge_{i \in \{ 1 \ldots n\} \setminus \{j\}} p_i \vdash p_j }{\wedge L1}
}
{
\Gamma \vdash p_j
}{cut}\vspace{-24 pt}
\]
\end{proof}\medskip

\theNPCLagreement*
\proof

($\Rightarrow$) Since $p\vdash \lambda(p)$ by Lemma~\ref{lem:PCLaux}$(iv)$ (where $\Gamma=p$) we have $p \vdash a$ for all atoms $a$ in $p$. 
It suffices to apply Lemma~\ref{lem:if} to each of these atoms, and by Definition~\ref{def:translationPCL} the offers are never consumed, there must be a trace $w \in \mathscr{L}(\llbracket p \rrbracket)$ where all the requests are matched.

($\Leftarrow$)
Let $\vec{q_0}$ be the initial state of $\llbracket p \rrbracket$ and $\vec{f}$ be the final state. 
We proceed by induction on the length of $w$. 

In the base case $w$ is empty, hence the initial state of $\llbracket p \rrbracket$ is also final. 
This situation only arises when the second rule of Definition~\ref{def:translationPCL} has been applied for all conjuncts $\alpha_i$ corresponding to principals.
Therefore it must be that  $p$ is a conjunction of atoms, so $p = \lambda(p) $ and the thesis holds immediately. 

For the inductive step we have $w=\vec{a}w_2$, and $(\vec{a}w_2,\vec{q_0}) \rightarrow (w_2,\vec{q}) \rightarrow^+ (\epsilon,\vec{f})$.
By inductive hypothesis and Lemma~$\ref{lem:automata_traduction}$ we have $p \slash \vec{a} \vdash \lambda(p \slash \vec{a})$. 
If $\vec{a}$ is an offer by Definition~\ref{def:meno-b} we have $p = p \slash \vec{a}$ and the thesis holds directly. 
Note that $\lambda(p)=\lambda(p \slash \vec{a})$ because $\vec{a}$ labels a match or an offer transition outgoing from $\vec q_0$ and the offer comes from the conclusion of a contractual implication or a conjunction of atoms, that is unmodified in $ p \slash \vec{a}$ .
Hence since by inductive hypothesis
$p \slash \vec{a} \vdash \lambda(p \slash \vec{a})$ and since $\lambda(p)=\lambda(p \slash \vec{a})$,  
proving $p \vdash p \slash \vec{a}$ entails $p \vdash \lambda(p)$. 
This is because of the following proof (note that there exists a longer one, cut-free)
 and  Lemma~\ref{lem:PCLaux} $(ii)$
\[
\dfrac{p \vdash p \slash \vec{a} \quad p, p \slash \vec{a} \vdash \lambda(p)}
{p \vdash \lambda(p)} \,cut
\]

To prove  $p \vdash p \slash \vec{a}$ we proceed by cases according to the structure of $p$, (omitting the cases for $J = \emptyset$ for which the proof is trivial)


\begin{itemize}
\item
if $p = p' \wedge (\bigwedge_{z \in Z}c_z \twoheadrightarrow b) \wedge  ( \bigwedge_{j \in J}a_j \wedge b \rightarrow b')$
 we have to prove the sequent $p \vdash p \slash \vec{a}$ that reads as
\[(\, p' \wedge (\bigwedge_{z \in Z}c_z \twoheadrightarrow b) \wedge  ( \bigwedge_{j \in J}a_j \wedge b \rightarrow b' )\,)\vdash  (\, p' \wedge  (\bigwedge_{z \in Z}c_z \twoheadrightarrow b)  \wedge  ( \bigwedge_{j \in J}a_j  \rightarrow b') \,)\]
For readability, we first determine the sequent $\Gamma \vdash ( \bigwedge_{j \in J}a_j ) \rightarrow b'$ where 
\[\Gamma = p' , (\bigwedge_{z \in Z}c_z \twoheadrightarrow b),  ( \bigwedge_{j \in J}a_j \wedge b) \rightarrow b'\]
from $p$, by applying the rule $\wedge R$, and Lemma~\ref{lem:PCLaux}$(iii)$.
Then we build the following derivation, where * is detailed below:
 \[
 \dfrac{
	\dfrac{ 
			\dfrac{
				\dfrac{}{\Gamma,\bigwedge_{j \in J}a_j  \vdash \bigwedge_{j \in J}a_j } id \quad
				\dfrac{*}{\Gamma,  \bigwedge_{j \in J}a_j \vdash b } \diamondsuit
			}{\Gamma,  \bigwedge_{j \in J}a_j \vdash \bigwedge_{j \in J}a_j \wedge b} \wedge R \quad
			\dfrac{}{\Gamma,  \bigwedge_{j \in J}a_j , b' \vdash b'} id
	}
	{\Gamma,  \bigwedge_{j \in J}a_j \vdash b'} \rightarrow L}
{\Gamma \vdash   ( \bigwedge_{j \in J}a_j ) \rightarrow b'} \rightarrow R
\]
The fragment * of the proof can have two different forms, depending on the set $Z$:
\begin{itemize}
\item if $Z=\emptyset$, then * is empty and the rule $\diamondsuit$ is $id$

\item otherwise the fragment * consists of the two sub-derivations below, and the rule $\diamondsuit$ applied to them is $Fix$ 
\begin{equation}
\label{delta3}
\dfrac{ \Delta_3 }{\Gamma, \bigwedge_{j \in J}a_j, b  \vdash \bigwedge_{z \in Z}c_z}
\end{equation}
\[
		\dfrac{}{\Gamma, \bigwedge_{j \in J}a_j, b  \vdash b} id
\]\\
\end{itemize}

We now show how to obtain $\Delta_3$. 
Let $\Delta$ be the derivation tree for $p \slash \vec{a} \vdash \lambda(p \slash \vec{a})$, that exists by the inductive  hypothesis. 
Note that since $\lambda(p \slash \vec{a})$ is a conjunction where $\bigwedge_{z \in Z}c_z$ occurs, the following proof
can be obtained by applying Lemma~\ref{lem:PCLaux}$(iv)$ for all $c_z$ and by combining them with rule $\wedge R$:
\begin{equation}
\label{delta2}
\dfrac{ \Delta_2}
{
(p' , (\bigwedge_{z \in Z}c_z \twoheadrightarrow b) ,  ( \bigwedge_{j \in J}a_j  \rightarrow b' )) \vdash \bigwedge_{z \in Z}c_z
}
\end{equation}

Now, in order to obtain the following from the proof~(\ref{delta2}), i.e.
\begin{equation} \label{delta3}
\dfrac
{ \Delta_3}
{ 
(p' , (\bigwedge_{z \in Z}c_z \twoheadrightarrow b),  ( \bigwedge_{j \in J}a_j \wedge b) \rightarrow b', \bigwedge_{j \in J}a_j, b) \vdash \bigwedge_{z \in Z}c_z
}
\end{equation}
we apply Lemma~\ref{lem:PCLaux}$(ii)$: the left hand-side of the sequent 
\[
(p' , (\bigwedge_{z \in Z}c_z \twoheadrightarrow b) ,  ( \bigwedge_{j \in J}a_j  \rightarrow b' ) \,)\vdash \bigwedge_{z \in Z}c_z
\]
is augmented with  $\bigwedge_{j \in J}a_j$. 
Finally by applying Lemma~\ref{lem:PCLaux}$(i)$, the formula 
\mbox{$ \bigwedge_{j \in J}a_j  \rightarrow b' $} above becomes
$  ( \bigwedge_{j \in J}a_j \wedge b) \rightarrow b',b$, obtaining~(\ref{delta3}).\\


\item
if $p = p' \wedge (\bigwedge_{k \in K} a_k \wedge b) \wedge ( ( \bigwedge_{j \in J}a_j \wedge b) \rightarrow b')$
we have to prove the sequent $p \vdash p \slash \vec{a}$ that reads as
\[(\, p' \wedge (\bigwedge_{k \in K} a_k \wedge b) \wedge  ( \bigwedge_{j \in J}a_j \wedge b) \rightarrow b' \,)\vdash  (\, p' \wedge   (\bigwedge_{k \in K} a_k \wedge b)   \wedge  ( \bigwedge_{j \in J}a_j  \rightarrow b') \,)\]
For readability, we first determine the sequent $\Gamma \vdash ( \bigwedge_{j \in J}a_j ) \rightarrow b'$, where 
\[\Gamma =  p' ,  (\bigwedge_{k \in K} a_k \wedge b) ,  (( \bigwedge_{j \in J}a_j \wedge b) \rightarrow b')\]
from $p$, by applying the rule $\wedge R$ and Lemma~\ref{lem:PCLaux}$(iii)$.
Then we build the following derivation, where * is detailed below:
 \[
 \dfrac{
	\dfrac{ 
			\dfrac{
				\dfrac{}{\Gamma,\bigwedge_{j \in J}a_j  \vdash \bigwedge_{j \in J}a_j } id \quad
				\dfrac{*}{\Gamma,  \bigwedge_{j \in J}a_j \vdash b } \diamondsuit
			}{\Gamma,  \bigwedge_{j \in J}a_j \vdash \bigwedge_{j \in J}a_j \wedge b} \wedge R \quad
			\dfrac{}{\Gamma,  \bigwedge_{j \in J}a_j , b' \vdash b'} id
	}
	{\Gamma,  \bigwedge_{j \in J}a_j \vdash b'} \rightarrow L}
{\Gamma \vdash   ( \bigwedge_{j \in J}a_j ) \rightarrow b'} \rightarrow R
\]

The fragment * of the proof can have two different forms, depending on the set $K$:
\begin{itemize}
\item if $K=\emptyset$, then * is empty and the rule $\diamondsuit$ is $id$

\item otherwise the rule $\diamondsuit$ is $\wedge L2$ applied to the fragment * below

\[
\dfrac{}
{p' ,  (\bigwedge_{k \in K} a_k \land b) , b, (( \bigwedge_{j \in J}a_j \wedge b) \rightarrow b'),  \bigwedge_{j \in J}a_j \vdash b}{id}
\]
\end{itemize}

\item 
if $p =  p' \wedge (\bigwedge_{z \in Z}c_z \twoheadrightarrow b) \wedge  (( \bigwedge_{j \in J}a_j \wedge b) \twoheadrightarrow b')$
we have to prove the sequent $p \vdash p \slash \vec{a}$ that reads as
\[(\, p' \wedge (\bigwedge_{z \in Z}c_z \twoheadrightarrow b) \wedge  ( \bigwedge_{j \in J}a_j \wedge b) \twoheadrightarrow b' \,)
\vdash  
(\, p' \wedge  (\bigwedge_{z \in Z}c_z \twoheadrightarrow b)  \wedge  ( \bigwedge_{j \in J}a_j  \twoheadrightarrow b') \,)\]
For readability, we first determine the sequent $\Gamma \vdash  \bigwedge_{j \in J}a_j \twoheadrightarrow b'$ where
\[\Gamma = p' , (\bigwedge_{z \in Z}c_z \twoheadrightarrow b),  ( \bigwedge_{j \in J}a_j \wedge b \twoheadrightarrow b')\,,\] by applying the rule $\wedge R$ and Lemma~\ref{lem:PCLaux}$(iii)$.
Then we build the following derivation, where * is detailed afterwards:
 \[
 \dfrac{
	\dfrac{ 
			\dfrac{
				*
			}{\Gamma, b' \vdash \bigwedge_{j \in J}a_j \wedge b} \, Fix \quad
			\dfrac{}{\Gamma, b' \vdash b'} id
	}
	{\Gamma \vdash b'} Fix}
{\Gamma \vdash    \bigwedge_{j \in J}a_j \twoheadrightarrow b'} Zero
\]

The fragment * of the proof can have two different forms, depending on the set $Z$:
\begin{itemize}
\item if $Z=\emptyset$, we have that $\Gamma = p',b,  ( \bigwedge_{j \in J}a_j \wedge b) \twoheadrightarrow b' $ and

\[
\dfrac
{
	\dfrac{}
	{\Gamma, b',  \bigwedge_{j \in J}a_j \wedge b \vdash  \bigwedge_{j \in J}a_j \wedge b} id
	\quad
	\dfrac{
		\dfrac{
			\Delta'_3
		}
		{
		\Gamma, b' \vdash \bigwedge_{j \in J}a_j
		} \quad
		\dfrac{
		}
		{
		\Gamma, b' \vdash  b
		}  id
	}
	{
	\Gamma, b' \vdash \bigwedge_{j \in J}a_j \wedge b
	} \wedge R
}
{
\Gamma , b' \vdash  \bigwedge_{j \in J}a_j \wedge b
} Fix
\]
Since the inductive  hypothesis guarantees that $p \slash \vec{a} \vdash \lambda(p \slash \vec{a})$ holds and 
\mbox{$\lambda(p \slash \vec{a})$} is a conjunction where $\bigwedge_{j \in J}a_j $ occurs, by applying the reasoning
of the previous case we have a derivation tree $\Delta'_2$ for the sequent
\[
(\, p' , b ,( \bigwedge_{j \in J}a_j \twoheadrightarrow b' )\,) \vdash \bigwedge_{j \in J}a_j \]
As done above, by applying Lemma~\ref{lem:PCLaux} we obtain the derivation tree $\Delta_3'$ for 
\[
(\, \Gamma'', p' ,b,  ( \bigwedge_{j \in J}a_j \wedge b) \twoheadrightarrow b' , b'\,) \vdash  \bigwedge_{j \in J}a_j 
\]

\item if $Z\neq \emptyset$ we obtain:\\

$\dfrac{
			\dfrac{(**)}{\Gamma,b',\bigwedge_{j \in J}a_j \wedge b   \vdash \bigwedge_{z \in Z}c_z  }  \quad
			\dfrac{ 
				\dfrac{ 
					(***)
				}{\Gamma,  b',b \vdash \bigwedge_{j \in J}a_j  }  \quad
				\dfrac{ 
				}{\Gamma,  b',b \vdash  b } id
			}{\Gamma,  b',b \vdash \bigwedge_{j \in J}a_j \wedge b } \wedge R
		}{\Gamma, b' \vdash \bigwedge_{j \in J}a_j \wedge b} Fix$\\
\end{itemize}

\noindent From the induction hypothesis, with the argument used in the previous cases, we prove the following sequent
\[ (p' ,  (\bigwedge_{z \in Z}c_z \twoheadrightarrow b) ,  ( \bigwedge_{j \in J}a_j  \twoheadrightarrow b')) \vdash \bigwedge_{z \in Z}c_z\]
Now, we apply Lemma~\ref{lem:PCLaux} to it, we determine the deduction $(**)$ and a proof for the leftmost sequent above
\[
(p' , b' , \bigwedge_{j \in J}a_j \wedge b , (\bigwedge_{z \in Z}c_z \twoheadrightarrow b) , ( \bigwedge_{j \in J}a_j \wedge b \twoheadrightarrow b') )\vdash \bigwedge_{z \in Z}c_z
\]

Just as done above, from the induction hypothesis we prove the sequent
\[ (p' ,  (\bigwedge_{z \in Z}c_z \twoheadrightarrow b) ,  ( \bigwedge_{j \in J}a_j ) \twoheadrightarrow b') \vdash \bigwedge_{j \in J}a_j \]
from which we obtain the right most sequent above (***), by applying Lemma~\ref{lem:PCLaux}
\[
(p' , b' ,  b , (\bigwedge_{z \in Z}c_z \twoheadrightarrow b) , ( \bigwedge_{j \in J}a_j \wedge b) \twoheadrightarrow b') \vdash \bigwedge_{j \in J}a_j 
\]
\\
\item 

if $p =  p' \wedge (\bigwedge_{k \in K} a_k \wedge b) \wedge  ( \bigwedge_{j \in J}a_j \wedge b \twoheadrightarrow b') $
we have to prove the sequent $p \vdash p \slash \vec{a}$ that reads as
\[(\, p' \wedge (\bigwedge_{k \in K} a_k \wedge b) \wedge  ( \bigwedge_{j \in J}a_j \wedge b \twoheadrightarrow b') \,)
\vdash  
(\, p' \wedge  (\bigwedge_{k \in K} a_k \wedge b)  \wedge  ( \bigwedge_{j \in J}a_j  \twoheadrightarrow b') \,)\]
For readability, we first determine the sequent $\Gamma \vdash  \bigwedge_{j \in J}a_j \twoheadrightarrow b'$ where 
\[\Gamma =  p' , (\bigwedge_{k \in K} a_k \wedge b),  ( \bigwedge_{j \in J}a_j \wedge b \twoheadrightarrow b')\,,\] by applying the rule $\wedge R$ and Lemma~\ref{lem:PCLaux}$(iii)$.
Then we build the following derivation, where * is detailed afterwards:
 \[
 \dfrac{
	\dfrac{ 
			\dfrac{
				*
			}{\Gamma, b' \vdash \bigwedge_{j \in J}a_j \wedge b} \, Fix \quad
			\dfrac{}{\Gamma, b' \vdash b'} id
	}
	{\Gamma \vdash b'} Fix}
{\Gamma \vdash    \bigwedge_{j \in J}a_j \twoheadrightarrow b'} Zero
\]

The fragment * of the proof can have two different forms, depending on the set $K$:
\begin{itemize}
\item if $K=\emptyset$, we have $\Gamma = p',b,  ( \bigwedge_{j \in J}a_j \wedge b) \twoheadrightarrow b' $ and

\[
\dfrac
{
	\dfrac{}
	{\Gamma, b',  \bigwedge_{j \in J}a_j \wedge b \vdash  \bigwedge_{j \in J}a_j \wedge b} id
	\quad
	\dfrac{
		\dfrac{
			\Delta'_3
		}
		{
		\Gamma, b' \vdash \bigwedge_{j \in J}a_j
		} \quad
		\dfrac{
		}
		{
		\Gamma, b' \vdash  b
		}  id
	}
	{
	\Gamma, b' \vdash \bigwedge_{j \in J}a_j \wedge b
	} \wedge R
}
{
\Gamma , b' \vdash  \bigwedge_{j \in J}a_j \wedge b
} Fix
\]
Since the inductive  hypothesis guarantees that $p \slash \vec{a} \vdash \lambda(p \slash \vec{a})$ holds and 
\mbox{$\lambda(p \slash \vec{a})$} is a conjunction where $\bigwedge_{j \in J}a_j $ occurs, by applying the reasoning
of the previous case we have a derivation tree $\Delta'_2$ for the sequent
\[
(\, p' , b ,( \bigwedge_{j \in J}a_j \twoheadrightarrow b' )\,) \vdash \bigwedge_{j \in J}a_j \]
As done above, by applying Lemma~\ref{lem:PCLaux} we obtain the derivation tree $\Delta_3'$ for 
\[
(\, p' ,b,  ( \bigwedge_{j \in J}a_j \wedge b) \twoheadrightarrow b' , b'\,) \vdash  \bigwedge_{j \in J}a_j 
\]

\item if $K \neq \emptyset$ we have that $\Gamma = p',(\bigwedge_{k \in K} a_k \wedge b),  ( \bigwedge_{j \in J}a_j \wedge b) \twoheadrightarrow b' $ and\\

\[
\dfrac
{
	\dfrac{}
	{\Gamma, b',  \bigwedge_{j \in J}a_j \wedge b \vdash  \bigwedge_{j \in J}a_j \wedge b} id
	\quad
	\dfrac{
		\dfrac{
			\Delta'_3
		}
		{
		\Gamma, b' \vdash \bigwedge_{j \in J}a_j
		} \quad
		\dfrac{
			\dfrac{
			}
			{\Gamma, b', b \vdash b}{id}
		}
		{
		   \Gamma, b' \vdash  b
		}  	\wedge L2
	}
	{
	\Gamma, b' \vdash \bigwedge_{j \in J}a_j \wedge b
	} \wedge R
}
{
\Gamma , b' \vdash  \bigwedge_{j \in J}a_j \wedge b
} Fix
\]
Since the inductive  hypothesis guarantees that $p \slash \vec{a} \vdash \lambda(p \slash \vec{a})$ holds and 
\mbox{$\lambda(p \slash \vec{a})$} is a conjunction where $\bigwedge_{j \in J}a_j $ occurs, by applying the reasoning
of the previous case we have a derivation tree $\Delta'_2$ for the sequent
\[
(\, p' , (\bigwedge_{k \in K} a_k \wedge b) ,( \bigwedge_{j \in J}a_j \twoheadrightarrow b' )\,) \vdash \bigwedge_{j \in J}a_j \]
As done above, by applying Lemma~\ref{lem:PCLaux} we obtain the derivation tree $\Delta_3'$ for 
\[
(\, p' ,(\bigwedge_{k \in K} a_k \wedge b),  ( \bigwedge_{j \in J}a_j
\wedge b) \twoheadrightarrow b' , b'\,) \vdash  \bigwedge_{j \in
  J}a_j\rlap{\hbox to 117 pt{\hfill\qEd}} 
\]
\end{itemize}
\end{itemize}

\thepclweakautomata*
\begin{proof}
($\Rightarrow$)
Straightforward from Theorem~\ref{the:1NPCLagreement} and from
$\mathfrak A \subset \mathfrak W$

($\Leftarrow$)
Since $\llbracket p  \rrbracket$  admits weak agreement 
there exists a trace $w \in \mathscr L (\llbracket p \rrbracket)$ where each request
is combined with a corresponding offer.  
For proving $p \vdash \lambda(p)$ we will prove $ p \vdash a$ for all the atoms $a$ in $\lambda(p)$ 
and the thesis follows by repeatedly applying the rule $\wedge R$.
If $a$ occurs within:
\begin{enumerate}
\item $\bigwedge_{j \in J} a_j$: it suffices to apply the rules $\wedge L_1, \wedge L_2, id$;

\item $\bigwedge_{j \in J} a_j \twoheadrightarrow a$:
$p\vdash a$ holds if we prove the sequent
\[
\Gamma,  ( \bigwedge_{j \in J}a_j  \twoheadrightarrow a ) \vdash a
\]
that is obtained from $ p \vdash a$ by repeatedly applying  the rules  $\wedge L_i$, for some $\Gamma$ containing $p$ and sub-formulas of $p$.
The proof of this sequent has the following form:
\[
	\dfrac{  
		\dfrac{*}{\Gamma,  ( \bigwedge_{j \in J}a_j  \twoheadrightarrow a),a \vdash \bigwedge_{j \in J}a_j} \qquad
		\dfrac{}{\Gamma,  ( \bigwedge_{j \in J}a_j  \twoheadrightarrow a),a \vdash a} \,id
	}{\Gamma,  ( \bigwedge_{j \in J}a_j  \twoheadrightarrow a) \vdash a} \,Fix
\]

We prove the sequent in the left premise, it suffices to establish the sequents 
\linebreak
\hbox{$\Gamma,  ( \bigwedge_{j \in J}a_j  \twoheadrightarrow a),a \vdash a_j$}, for all the atoms $a_j$ 
of $ \bigwedge_{j \in J} a_j$.
Then, the derivation proceeds by repeatedly applying the rule $\wedge R$.
We are left to prove \hbox{$\Gamma,  ( \bigwedge_{j \in J}a_j  \twoheadrightarrow a), a \vdash a_j$}, which is done by recursively applying the construction of cases (1) and (2).
This procedure will eventually terminate. 
Indeed, at each iteration $a_j$ is either a conjunct in $\bigwedge_{k \in K} a_k$ (case 1) and the proof is closed  by rule 
$(id)$, or $a_j$ is the conclusion of the contractual implication $\bigwedge_{k \in K} a_k \twoheadrightarrow a_j$ and the proof proceeds as in case (2) by applying the rule 
$(Fix)$.
In the last case, the premise in the left hand-side becomes
$\Gamma',  ( \bigwedge_{k \in K} a_k  \twoheadrightarrow a_j), a, a_j \vdash \bigwedge_{k \in K} a_k$, so adding $a_j$ in the left part of the sequent.
The number of iterations is therefore bound by the number of atoms in $p$.

\item $\bigwedge_{j \in J} a_j \twoheadrightarrow b$ where $a \neq b$.
This case reduces to one of the above two, because if $\exists j \in J$ such that \mbox{$a_j = a$}, then $a$ must also appear in another conjunct $\bigwedge_{z \in Z} a_z$ or  in another contractual implication $\bigwedge_{z \in Z} a_z \twoheadrightarrow a$, otherwise all the traces of $\llbracket p \rrbracket$ would have an unmatched request on $a$, against the hypothesis that it admits weak agreement.\qedhere
\end{enumerate}
\end{proof}

\subsection{Automata and Intuitionistic Linear Logic with Mix}
We recall for completeness the full grammar of \illmix.
\begin{defi}
The formulas $A,B, \ldots$ of \illmix are defined as follows:
\[
A::= a \mid A^\bot \mid A \otimes A \mid A \multimap A\mid A \& A \mid A \oplus A \mid !A \mid 1 \mid 0 \mid \top \mid \bot
\]
\end{defi}
The full sequent calculus for \illmix is displayed in Figure~\ref{fig:fullgentzenlinear}.
We will only consider proofs without the rule $Cut$, which is redundant by~\cite{benton1995mixed}, 
Theorem 24.

The following definition and lemmata are auxiliary. 
\begin{figure}[tb]
\[
\irule{}{A \vdash A}{  \ \ Ax}
\qquad
\irule{\Gamma \vdash \ \ \Gamma' \vdash \gamma}{ \Gamma, \Gamma' \vdash \gamma}{ \ \ Mix}
\qquad 
\irule{\Gamma \vdash A}{\Gamma,A^\bot \vdash}{ \ \ Neg L }
\qquad
\irule{\Gamma, A, B \vdash \gamma}{\Gamma, A \otimes B \vdash \gamma}{ \ \ \otimes L }
\] \\
\[
\irule{\Gamma \vdash A \quad \Gamma' \vdash B}{\Gamma, \Gamma' \vdash A \otimes B}{ \ \ \otimes R }
\qquad
\irule{\Gamma \vdash A \quad \Gamma',B \vdash \gamma}{ \Gamma, \Gamma', A \multimap B \vdash \gamma}{ \ \ \multimap  L }
\qquad
\irule{\Gamma, A \vdash B}{ \Gamma \vdash A \multimap B}{ \ \ \multimap R }
\]
\[
\irule{\Gamma \vdash A \quad \Gamma', A \vdash \gamma}{ \Gamma, \Gamma' \vdash \gamma}{ Cut}
\qquad
\irule{\Gamma, A \vdash}{\Gamma \vdash A^\bot}{NegR}
\qquad
\irule{\Gamma \vdash}{\Gamma \vdash \bot}{\bot R}
\qquad
\irule{}{\bot \vdash }{\bot L}
\]
\[
\irule{}{\vdash 1}{1R}
\qquad
\irule{\Gamma \vdash \gamma}{\Gamma,1 \vdash \gamma}{1L}
\qquad
\irule{}{\Gamma \vdash \top}{\top}
\qquad
\irule{}{\Gamma,0 \vdash A}{0L}
\]
\[
\irule{\Gamma,A \vdash \gamma \quad \Gamma,B\vdash \gamma}{\Gamma, A \oplus B \vdash \gamma}{ \oplus L}
\qquad
\irule{\Gamma \vdash A}{\Gamma \vdash A \oplus B}{ \oplus R1}
\qquad
\irule{\Gamma\vdash B}{\Gamma \vdash A \oplus B}{\oplus R2}
\]
\[
\irule{\Gamma \vdash A \quad \Gamma \vdash B}{\Gamma \vdash A \& B}{ \& R}
\qquad
\irule{\Gamma,A \vdash \gamma}{\Gamma, A \& B \vdash \gamma }{\& L1}
\qquad
\irule{\Gamma,B \vdash \gamma}{\Gamma, A \& B \vdash \gamma}{ \& L2}
\]
\[
\irule{\Gamma, A \vdash \gamma}{\Gamma,!A \vdash \gamma}{!L}
\qquad
\irule{!\Gamma \vdash A}{!\Gamma \vdash !A}{!R}
\qquad
\irule{\Gamma \vdash \gamma}{\Gamma,!A \vdash \gamma}{weakL}
\qquad
\irule{\Gamma,!A,!A \vdash \gamma}{\Gamma,!A \vdash \gamma}{coL}
\]
\caption{The sequent calculus for \illmix}
\label{fig:fullgentzenlinear}
\end{figure}

\begin{lem}\label{lem:honouredrules}
If $\Gamma \vdash Z$ is an honoured sequent, there exists a derivation tree for $\Gamma \vdash Z$ such that:
\begin{itemize}
\item it only uses the rules $Ax,Mix,NegL,\otimes L,\otimes R$ and $\multimap L$ of Figure~\ref{fig:fullgentzenlinear};
\item it is only made of honoured sequents.
\end{itemize}
\end{lem}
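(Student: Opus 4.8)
The plan is to argue directly about cut-free derivations, invoking cut-elimination for \illmix\ (\cite{benton1995mixed}, Theorem~24) to assume from the outset that the given derivation of $\Gamma\vdash Z$ does not use $Cut$. The first step is the subformula property: every rule of Figure~\ref{fig:fullgentzenlinear} other than $Cut$ passes subformulae of its conclusion to its premises, so every formula occurring in the derivation is a subformula of a formula of $\Gamma$ (a Horn formula or clause of H-\illmix, cf.\ Definition~\ref{def:hornillmix}) or of $Z$ (which, $\Gamma\vdash Z$ being honoured, is a positive tensor product of atoms or empty). Inspecting the grammar, the subformulae in play are exactly: positive atoms, negative atoms, tensor products of literals, tensor products of two or more clauses, and Horn implications $\bigotimes_{b\in Y} b\multimap\bigotimes_{a\in X} a$. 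In particular none of the connectives $\bot, 1, 0, \top, \oplus, \&$ or $!$ ever appears, so the rules mentioning them never fire; together with cut-freeness this leaves only $Ax, Mix, NegL, \otimes L, \otimes R, \multimap L, \multimap R$ and $NegR$ as candidates.

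The heart of the proof is to discard $\multimap R$ and $NegR$ and, simultaneously, to verify honouredness of every sequent. I would prove, by induction on the distance of a sequent from the root of the derivation, the invariant that \emph{its left-hand side is a multi-set of subformulae of Horn formulae (and clauses), and its right-hand side is a positive tensor product of atoms (allowing a single atom) or empty}. The root satisfies this by assumption. In the inductive step one checks, for each rule whose conclusion meets the invariant, that its premises do: for $\otimes L$ one uses that the components of a $\otimes$-subformula of a Horn formula are again subformulae of a Horn formula; for $\multimap L$ one uses that a $\multimap$-subformula of a Horn formula must be a whole Horn implication, whose premise is a positive tensor of atoms (so the left premise of the rule carries such a formula on the right) and whose conclusion is a tensor of literals; for $NegL$ one uses that a $(\cdot)^\bot$-subformula of a Horn formula is a negative atom, so the premise carries a single positive atom on the right while the conclusion has empty succedent; for $\otimes R$ one uses that when the succedent has the shape $A\otimes B$ it is in fact a positive tensor of atoms, hence so are $A$ and $B$; and $Mix$ and $Ax$ are immediate. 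Crucially, the conclusion of $\multimap R$ would carry a formula $A\multimap B$, and that of $NegR$ a negative atom $A^\bot$, on the right, neither of which satisfies the invariant; since by the induction every sequent of the derivation does, these two rules cannot occur at all. Thus the derivation uses only the six listed rules, and every one of its sequents has a positive-tensor-product (or empty) succedent; being the conclusion of a valid subderivation, each such sequent is provable, hence honoured.

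The step I expect to be delicate is the case analysis in this induction — not conceptually hard, but dependent on being scrupulous about formula shapes and degenerate cases: the precise list of forms a principal formula of a left rule can take as a subformula of a Horn formula, the empty contexts and empty succedents created by $NegL$ and by the left premise of $Mix$, the single-factor ``tensor products'', and the convention that a lone positive atom counts as a (degenerate) positive tensor product so that it is an admissible honoured succedent. I would therefore begin by fixing once and for all the class of ``Horn-like'' formulae and proving the three closure facts — about subformulae of $A\otimes B$, of $A\multimap B$ (only Horn implications, with positive-tensor premise and literal-tensor conclusion) and of $A^\bot$ (only negative atoms) — so that the main induction can quote them without interruption.
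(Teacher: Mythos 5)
Your proof is correct and follows essentially the same route as the paper's, which appeals to cut-freeness and a ``simple inspection'' of the rules restricted to the Horn fragment. Your explicit top-down invariant (succedent a positive tensor of atoms or empty, antecedent built from subformulae of Horn formulae) is just a careful writing-out of that inspection, and it correctly excludes $\multimap R$ and $NegR$ while establishing honouredness of every sequent in the derivation.
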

\begin{proof}
Recall that we are in the Horn fragment and we only consider cut-free proofs.
Since $Z$ is a positive tensor product (or empty), a simple inspection on the rules in Figure~\ref{fig:fullgentzenlinear} 
suffices to prove the first statement.
The second statement is proved because $Ax,Mix,NegL,\otimes L,\otimes R$ and $\multimap L$ introduce no sequents with negative literals on their right hand-side.
\end{proof}

%
\begin{restatable}{lem}{lemhonoredagreement}\label{lem:honored_agreement}
Let $\Gamma \vdash Z$ be an honoured sequent, then:
\[
\Gamma \vdash Z \text{ implies } \llbracket \Gamma \rrbracket \text{ admits agreement on } Z.
\]
\end{restatable}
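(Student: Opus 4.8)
The plan is to argue by induction on a derivation of the honoured sequent $\Gamma \vdash Z$. First I would invoke Lemma~\ref{lem:honouredrules} to replace the given proof by one that is cut-free, uses only the six rules $Ax$, $Mix$, $NegL$, $\otimes L$, $\otimes R$ and $\multimap L$, and whose every node is again an honoured sequent; in particular the right-hand side of every sequent in the derivation is a positive tensor product (or empty), so that ``admits agreement on $\cdot$'' is meaningful throughout and the induction hypothesis can be stated uniformly. The base case is $Ax$, where the sequent is $A \vdash A$ and, being honoured, $A = \bigotimes_{a \in X} a$ is a positive tensor product: by Definition~\ref{def:translation} the automaton $\llbracket A \rrbracket$ is a single principal that fires exactly the offers $\overline{a}$, $a \in X$, in some order and then accepts, so it has a trace made only of those offers, which witnesses that it admits agreement on $A$.

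Next I would dispatch the ``structural'' rule $\otimes L$ together with the homomorphic formation of multisets, for which the translation is essentially unchanged. Indeed $\otimes L$ only splits a top-level $\otimes$ of a formula or clause into several components, and by Definition~\ref{def:translation} and Definition~\ref{def:aproduct} the translation of a multiset is the $\boxtimes$-composition of the translations of its members, with $\boxtimes$ commutative and associative (Proposition~\ref{pro:associative}); since moreover these automata merely shuffle their literals, $\llbracket \Gamma, A \otimes B \rrbracket$ and $\llbracket \Gamma, A, B \rrbracket$ determine the same observable traces, and the induction hypothesis carries over verbatim.

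The real content is in the rules that combine two sub-derivations: $\otimes R$, $Mix$ and $\multimap L$ (the case $NegL$ being a degenerate instance of the same pattern). In each of them I would build the required trace of the composed automaton $\llbracket \Gamma, \Gamma', \dots \rrbracket = \llbracket \Gamma \rrbracket \boxtimes \llbracket \Gamma' \rrbracket \boxtimes \dots$ by splicing the witness traces given by the induction hypothesis. For $\multimap L$, where the principal formula is $A \multimap B = \bigotimes_{b \in Y} b \multimap \bigotimes_{a \in X} a$, the construction is two-phase: first run the witness trace of $\llbracket \Gamma \rrbracket$ for agreement on the positive tensor $A = \bigotimes_{b \in Y} b$, pairing each of its free offers $\overline{b}$ with the still-available request $b$ of the premise automaton $\llbracket \bigotimes_{b \in Y} b^\bot \rrbracket$ inside $\llbracket A \multimap B \rrbracket$ so that those actions become matches; after this phase the principal $\llbracket A \multimap B \rrbracket$ sits at the initial state of its $\llbracket B \rrbracket$-part and every principal of $\Gamma$ has reached a final state. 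Then run the witness trace of $\llbracket \Gamma', B \rrbracket$ for agreement on $\gamma$. For $\otimes R$ and $Mix$ one interleaves the two witness traces directly. Three facts make the splicing legitimate: (i) match transitions carry no side condition in Definition~\ref{def:prod}, so every internal match of a sub-trace is still a transition of the larger automaton; (ii) the translated automata have no cycles and their unique final states have no outgoing transitions, so a principal that has finished can never cause a forced synchronisation; and (iii) a conservation observation, namely that along a match/offer-only trace the multiset of atoms occurring as free offers equals the number of offers provided minus the number of requests made, summed over all principals, a quantity invariant under re-routing which offer cancels which request. Since the witnesses of the premises have free offers exactly the literals of $A$, of $B$ (resp.\ of $\gamma$, and none when the right-hand side is empty), fact (iii) ensures that whatever forced matches occur during the splice, the resulting trace still consists only of matches and offers, with those offers corresponding exactly to the literals of $Z$.

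The step I expect to be the main obstacle is precisely turning the spliced sequence into a legal run of the product: one must schedule the combined actions so that, whenever the forced-synchronisation clause of Definition~\ref{def:prod} rules out a solo offer, the match actually chosen is the intended one (the premise-part request in $\multimap L$, the negative-atom principal in $NegL$, ``no interaction'' in $\otimes R$ and $Mix$), and so that the needed partners are genuinely enabled at that moment; checking this, together with carrying out the conservation count (iii) at the level of per-atom cardinalities of requests and offers, is the only genuinely technical part, everything else being routine bookkeeping over the six rules.
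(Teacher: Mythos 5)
Your proposal follows essentially the same route as the paper's own proof: induction on the derivation restricted via Lemma~\ref{lem:honouredrules} to the six rules over honoured sequents, with the same base case for $Ax$, the same identification of $\llbracket \Gamma, A\otimes B\rrbracket$ with $\llbracket \Gamma, A, B\rrbracket$ for $\otimes L$, and witness traces for the premises spliced together via $\boxtimes$ for $Mix$, $NegL$, $\otimes R$ and $\multimap L$. If anything, you are more explicit than the paper about why the spliced sequence is a legal run of the product (acyclicity, absence of forced synchronisations after termination, and the per-atom offer/request count), points the paper's proof leaves implicit.
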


\begin{proof}
We will prove that there exists a trace $w \in \mathscr L (\llbracket \Gamma \rrbracket)$
made of matches and as many offers as the literals in $Z=\bigotimes_{a \in Y} a$ (recall that they all are positive), 
or it is made by only matches if $Z$ is empty.
Also, note that the sequents in the proof of $\Gamma \vdash Z$ are all honoured, by hypothesis and Lemma~\ref{lem:honouredrules}.
We proceed by induction on the depth of the proof of  $\Gamma \vdash Z$.


In the base case, the proof consists of a single application of the rule $Ax$. 
By Definition~\ref{def:translation} one first has an offer transition for each $a$ in $Z$, and then interleaves them in any possible order.
Hence the thesis holds trivially.

For the inductive case we proceed by cases on the last rule applied. 
We assume that all clauses (i.e. principals) in $\Gamma$ are divided by commas, which can be easily
obtained by repeatedly applying  the rule $\otimes L$. 
In the following, let $\overline a$ be offers in correspondence with the literals $a$ in $Z$, we will consider only
the relevant rules as stated by Lemma~\ref{lem:honouredrules}.
\begin{itemize}
\item $\irule{\Gamma \vdash  \quad \Gamma' \vdash Z}{\Gamma,\Gamma' \vdash Z}{ \ \  Mix}$
By induction hypothesis there exists $w \in \mathscr L (\llbracket \Gamma \rrbracket)$ with match actions, only,  and
 $w_1 \in \mathscr L ( \llbracket \Gamma' \rrbracket)$ with match actions and offers in correspondence with the literals in 
 $Z$ (if non-empty).   
By Definition~\ref{def:aproduct}, there exists 
$w_2 \in \mathscr L (\llbracket \Gamma \rrbracket \boxtimes \llbracket \Gamma' \rrbracket)$ in agreement.\\

\item $\irule{\Gamma \vdash A }{\Gamma, A^\bot \vdash }{ \ \ NegL}$
By induction hypothesis there exists $w \in \mathscr L (\llbracket \Gamma \rrbracket)$ with match actions, and with offers in correspondence with the literals in $A$.  
By Definition~\ref{def:translation} the traces of the automaton $\llbracket A^\bot \rrbracket$ are all the possible
permutations of the requests in correspondence with the literals in $A^\bot$. 
The thesis follows, because there is an offer for each request, and by Definition~\ref{def:aproduct}.\\

\item $\irule{\Gamma, A, B \vdash Z}{\Gamma, A \otimes B \vdash Z}{ \ \  \otimes L}$ \ 
By the induction hypothesis there exists $w \in \mathscr{L}(\llbracket \Gamma,A,B \rrbracket) = $ \linebreak
\mbox{$\mathscr{L}(\llbracket \Gamma \rrbracket \boxtimes \llbracket A \rrbracket \boxtimes \llbracket B \rrbracket)$} with offers in correspondence with the literals in $Z$ (if non-empty). 
No atom and its negation can occur in $A \otimes B$ by Definition~\ref{def:hornillmix}, because it is a principal.
Hence $\llbracket A \otimes B \rrbracket$ and $\llbracket A, B \rrbracket$ are the same automaton (with a different rank), and the statement follows immediately.
\\


\item $\irule{\Gamma \vdash A \quad \Gamma' \vdash B}{\Gamma, \Gamma' \vdash A \otimes B}{ \ \ \otimes R}$
By the induction hypothesis there exist $w \in \mathscr L (\llbracket \Gamma \rrbracket)$ and 
$w' \in \mathscr L ( \llbracket \Gamma' \rrbracket)$ with only match actions and offers in correspondence with the literals in $A$ and in $B$, respectively.
Now Definition~\ref{def:aproduct} guarantees that there exists a trace in 
$\mathscr{L} ( \llbracket \Gamma \rrbracket \boxtimes  \llbracket \Gamma' \rrbracket ) $ in agreement.
\\

\item $\irule{\Gamma \vdash A \quad \Gamma',B \vdash Z}{ \Gamma, \Gamma', A \multimap B \vdash Z}{ \multimap L}$
By the induction hypothesis there exists $w \in \mathscr L (\llbracket \Gamma \rrbracket)$ and \linebreak
$w' \in \mathscr L (\llbracket \Gamma',B \rrbracket)$ with only match actions and offers in correspondence with the literals in $A$ and in $Z$ (if non-empty), respectively.
By Definition~\ref{def:translation} the literals occurring in $A$ become requests in $\llbracket A \multimap B\rrbracket$, in all possible ordering.
The trace $w$ contains exactly the needed matching offers.
We conclude by noting that no other request is possible in 
$\mathscr L (\llbracket \Gamma, \Gamma', A \multimap B \rrbracket)$.\qedhere
\end{itemize}
\end{proof}


\noindent In order to keep the following definition compact, with a slight abuse of the notation we use $\otimes$ to operate between formulas; we remove the constraints of Definition~\ref{def:hornillmix} on the indexing sets $I$ in formulas and $X_1,X_2$ and $Y$ in clauses; and we let  \mbox{$\bigotimes_{b \in \emptyset} b \multimap  \bigotimes_{a \in X_2} a$} to stand for 
$\bigotimes_{a \in X_2} a$.

\begin{defi}\label{def:meno-alinear}
Given a Horn formula $p$ and an offer or match transition leaving 
the initial state of  $\llbracket p \rrbracket$ with label $\vec a$, then define the formula
$p \slash \vec{a}$ as:
\[
  p \slash \vec{a}  = \left\{ 
    \begin{array}{ll} 
     p' \otimes \bigotimes_{a_1\in X_1 } a_1 & \text{ if $\vec{a}$ is an offer on $c$ and }      \\
	& p =  p' \otimes \bigotimes_{a_1 \in X_1 \cup \{c\}} a_1 \\
       p' \otimes \bigotimes_{a_1 \in X_1 } a_1 \otimes   \bigotimes_{a_2 \in X_2} a_2 & \text{ if $\vec{a}$ is a match on $c$ and }      \\
	 & p =  p' \otimes \bigotimes_{a_1 \in X_1 \cup \{ c\} } a_1 \otimes  \\
	 & \qquad \bigotimes_{a_2 \in X_2 \cup \{c^\bot \}} a_2\\
     p' \otimes \bigotimes_{a_1 \in X_1 } a_1 \otimes  \bigotimes_{b \in Y} b \multimap \bigotimes_{a_2 \in X_2} a_2 & \text{ if $\vec{a}$ is a match on $c$ and }      \\
	& p =  p' \otimes \bigotimes_{a_1 \in X_1 \cup \{c\}} a_1 \otimes   \\
	 & \qquad \bigotimes_{b \in Y \cup \{ c \}} b \multimap \bigotimes_{a_2 \in X_2}a_2\\
    \end{array} 
  \right.
\]
\end{defi}

We now establish a relation between $\llbracket p \slash \vec{a} \rrbracket$,
and the contract automaton
obtained by changing the initial state $\vec q_0$ of $\llbracket p \rrbracket$ 
to $\vec q$, for  the transition $(\vec q_0, \vec a, \vec q)$ of $\llbracket p \rrbracket$. Without
loss of generality we assume that the automaton obtained from Definition~\ref{def:translation} is 
deterministic. 
If not, we first transform the non deterministic automaton to a
deterministic one.

\begin{restatable}{lem}{lemautomataformula}\label{lem:automata2formula}
Given a Horn formula  $p$ and the contract automaton
  $\llbracket p \rrbracket=\langle Q, \vec{q_0},A^r,A^o,T,F \rangle$, 
if $t=(\vec{q_0},\vec{a},\vec{q}) \in T$ is an offer or a match transition,  
then $\mathscr{L}(\mathcal{A}) =\mathscr{L}(\llbracket p \slash \vec{a} \rrbracket)$, where  
$\mathcal{A}=\langle Q, \vec{q},A^r,A^o,T,F \rangle$.
\end{restatable}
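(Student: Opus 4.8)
The plan is to mirror the proof of Lemma~\ref{lem:automata_traduction}. I will analyse the shape of the initial transition $t=(\vec q_0,\vec a,\vec q)$ and, in each case, show that the part of $\llbracket p\rrbracket$ reachable from $\vec q$ is language-equivalent to $\llbracket p\slash\vec a\rrbracket$, with $p\slash\vec a$ as in Definition~\ref{def:meno-alinear}. Since the automata produced by Definition~\ref{def:translation} are acyclic, $\vec q_0$ is unreachable from $\vec q$, so this equivalence already yields $\mathscr L(\mathcal A)=\mathscr L(\llbracket p\slash\vec a\rrbracket)$. Throughout I rely on a structural description of $\llbracket p\rrbracket$, an analogue of Proposition~\ref{prop:1} for H-\illmix\ to be recorded beforehand: a state of $\llbracket p\rrbracket$ is a vector whose $k$-th component records exactly the literal occurrences of the $k$-th clause still to be fired (with $\{\ast\}$ meaning nothing remains); the outgoing transitions of a state are determined by those occurrences; and, by acyclicity, an occurrence never reappears once consumed. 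Hence firing a transition out of $\vec q_0$ that consumes one or two literal occurrences corresponds, on the syntactic side, exactly to erasing those occurrences from the relevant clauses, which is what Definition~\ref{def:meno-alinear} prescribes.

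If $\vec a$ is an offer on an atom $c$, then $c$ must occur as a positive atom of a tensor clause $\alpha_i=\bigotimes_{a_1\in X_1\cup\{c\}}a_1$: no offer can leave the initial state of a concatenation $\llbracket\bigotimes_{b\in Y}b^\bot\rrbracket\cdot\llbracket\bigotimes_{a\in X}a\rrbracket$, because $|Y|\geq 1$ and premises translate to requests, so the conclusion part of an implication clause is not yet reachable. This is the first case of Definition~\ref{def:meno-alinear}; the target $\vec q$ differs from $\vec q_0$ only in component $i$, where $c$ is removed, and the part of $\llbracket\alpha_i\rrbracket$ reachable from $\ithel{\vec q}{i}$ is isomorphic to $\llbracket\bigotimes_{a_1\in X_1}a_1\rrbracket$; recombining with the unchanged $\llbracket\alpha_k\rrbracket$, $k\neq i$, via the a-product gives $\llbracket p\slash\vec a\rrbracket$. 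If instead $\vec a$ is a match on $c$, it pairs an offer on $c$ by some principal $j$ --- again from a positive occurrence of $c$ in a tensor clause $\alpha_j=\bigotimes_{a_1\in X_1\cup\{c\}}a_1$ --- with a request on $c$ by some principal $i$, for which there are two sub-cases: either $c^\bot$ is a negative literal of a tensor clause $\alpha_i=\bigotimes_{a_2\in X_2\cup\{c^\bot\}}a_2$ (second case of Definition~\ref{def:meno-alinear}), or $c$ is a premise of a linear implication $\alpha_i=\bigotimes_{b\in Y\cup\{c\}}b\multimap\bigotimes_{a_2\in X_2}a_2$ (third case); a conclusion is excluded at the initial state since all premises must fire first. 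In the second sub-case one uses $\llbracket\alpha_i\rrbracket=\llbracket\bigotimes_{b\in Y\cup\{c\}}b^\bot\rrbracket\cdot\llbracket\bigotimes_{a_2\in X_2}a_2\rrbracket$ to see that consuming the premise request $c$ leaves exactly $\llbracket\bigotimes_{b\in Y}b^\bot\rrbracket\cdot\llbracket\bigotimes_{a_2\in X_2}a_2\rrbracket=\llbracket\bigotimes_{b\in Y}b\multimap\bigotimes_{a_2\in X_2}a_2\rrbracket$. In every sub-case acyclicity keeps the consumed occurrences from returning, so the part of $\llbracket p\rrbracket$ reachable from $\vec q$ is the a-product of the unchanged principals with the two modified ones, i.e.\ $\llbracket p\slash\vec a\rrbracket$.

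The step I expect to be the main obstacle is making the notion of the part of $\llbracket p\rrbracket$ reachable from $\vec q$ precise and showing it coincides with the a-product obtained by modifying a single principal: because $\llbracket p\rrbracket$ is built with $\boxtimes$, which interleaves and re-combines all principals, one needs the structural lemma above together with Proposition~\ref{pro:product_decomposition}, so that modifying one clause by erasing a consumed literal occurrence and then retaking the a-product yields the same language as firing the corresponding initial transition in the full composition. A secondary nuisance is that $p\slash\vec a$ may step outside the grammar of Definition~\ref{def:hornillmix}, with an empty tensor product or the degenerate implication $\bigotimes_{b\in\emptyset}b\multimap\bigotimes_{a\in X_2}a$; its translation is then the one-state automaton accepting only $\epsilon$, respectively $\llbracket\bigotimes_{a\in X_2}a\rrbracket$, so the bijection between the components of $\vec q$ and the clauses of $p\slash\vec a$ still holds and the a-product is still well defined, while determinism of $\llbracket\cdot\rrbracket$, which we may assume as in the statement, keeps the residual automaton unambiguous.
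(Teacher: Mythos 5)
Your proof is correct and follows essentially the same route as the paper's, which itself only sketches the argument by appeal to the PCL analogue (Lemma~\ref{lem:automata_traduction}): a case analysis on whether the initial transition is an offer or a match, the observation that states record exactly the literal occurrences still to be fired so that $p\slash\vec a$ erases precisely what $\vec a$ consumed, and acyclicity to discard $\vec q_0$. Your treatment of the degenerate residuals (empty tensor, empty premise) and of why conclusions of implications cannot fire at $\vec q_0$ is a welcome elaboration of details the paper leaves implicit.
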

\begin{proof}
The proof is similar to the one of Lemma~\ref{lem:automata_traduction}.
The statement follows by noting that in Definition~\ref{def:translation} a tensor product is translated in all the possible
permutations of actions corresponding to the literals, and noting that in $p \slash \vec{a}$ we remove exactly the actions fired in $\vec a$, that are therefore not 
available any more in the state $\vec q$.
%
%
%
\end{proof}

The following lemma suggests that we can safely substitute a multi-set of Horn formulae and clauses $\Gamma$ with a
single Horn formula, without affecting the corresponding automaton.

\begin{lem}\label{lem:gammaformula}
Let $\Gamma$ be a non-empty multi-set of Horn formulae, then there exists a Horn formula $p$ such that:
\[
\llbracket \Gamma \rrbracket = \llbracket p \rrbracket \quad 
\]
\end{lem}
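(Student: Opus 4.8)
The plan is to exhibit the formula $p$ explicitly by \emph{flattening} $\Gamma$ into a single tensor product of clauses. By definition a multi-set $\Gamma$ decomposes into top-level members, each of which is either a Horn formula $q_k = \bigotimes_{j \in I_k} \alpha_{kj}$ or a single clause $\alpha_k$. First I would collect all the clauses occurring this way --- the $\alpha_{kj}$ coming from the Horn-formula members, together with the clause members --- into a finite list $\alpha_1, \dots, \alpha_m$, and set $p = \bigotimes_{i=1}^{m} \alpha_i$. I would then check that $p$ is a well-formed H-$ILL^{mix}$ formula in the sense of Definition~\ref{def:hornillmix}: the side conditions on a clause ($|X|,|Y|\geq 1$, $\{a,a^\bot\}\not\subseteq X$, and $b\in Y$ implies $b\notin X$) are local to each clause and hence inherited from $\Gamma$, while the only formula-level requirement is $|I|\geq 2$, i.e.\ $m\geq 2$; this holds whenever $\Gamma$ flattens to at least two clauses, which is the only relevant case (otherwise $\llbracket\Gamma\rrbracket$ is already a single principal, and the abuse-of-notation relaxation adopted elsewhere in this subsection applies).

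Next I would compute both translations and compare them. On one side, Definition~\ref{def:translation} gives directly $\llbracket p \rrbracket = \llbracket \bigotimes_{i} \alpha_i \rrbracket = \boxtimes_{i=1}^{m} \llbracket \alpha_i \rrbracket$. On the other side, the homomorphic clauses $\llbracket q, \Gamma' \rrbracket = \llbracket q \rrbracket \boxtimes \llbracket \Gamma' \rrbracket$ and $\llbracket \alpha, \Gamma' \rrbracket = \llbracket \alpha \rrbracket \boxtimes \llbracket \Gamma' \rrbracket$, applied repeatedly, rewrite $\llbracket\Gamma\rrbracket$ as the $\boxtimes$-product of the translations of its top-level members, and each Horn-formula member $q_k$ in turn contributes $\llbracket q_k \rrbracket = \boxtimes_{j \in I_k} \llbracket \alpha_{kj} \rrbracket$, again by Definition~\ref{def:translation}. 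Thus $\llbracket \Gamma \rrbracket$ is a $\boxtimes$-product of exactly the same principal automata $\llbracket \alpha_1 \rrbracket, \dots, \llbracket \alpha_m \rrbracket$ as $\llbracket p \rrbracket$, merely bracketed and ordered differently. I would conclude by invoking Proposition~\ref{pro:associative}: $\boxtimes$ is associative, and (like $\otimes$) commutative up to the expected rearrangement of the action vectors --- which is precisely the equality used throughout for contract automata --- whence $\llbracket p \rrbracket = \llbracket \Gamma \rrbracket$.

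The main obstacle --- and the step deserving the most care --- is the bookkeeping that the two iterated $\boxtimes$-products really range over the \emph{same} finite family of principal automata. This relies on the fact that the a-product of Definition~\ref{def:aproduct} first re-extracts the principals and only then reassembles them, so that nested applications of $\boxtimes$ genuinely flatten to a single a-product of principals; once this is made precise, the claimed identity follows immediately from associativity and commutativity of $\boxtimes$, with no further computation required.
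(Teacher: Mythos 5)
Your proposal is correct and follows essentially the same route as the paper, which simply declares the lemma ``immediate from Definition~\ref{def:translation} (recall that we abuse the notation)'': both arguments take $p$ to be the tensor product of all clauses occurring in $\Gamma$ and observe that, since the translation turns both multi-set union and $\otimes$ into $\boxtimes$, the two sides flatten (by associativity and commutativity of $\boxtimes$, Proposition~\ref{pro:associative}) to the same a-product of principal automata. Your explicit treatment of the well-formedness side conditions via the abuse-of-notation relaxation is exactly the point the paper's one-line proof alludes to.
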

\begin{proof}
Immediate from Definition~\ref{def:translation} (recall that we abuse the notation).
\end{proof}

We now prove the following lemma.

\begin{restatable}{lem}{lemagreementhonoured}
\label{lem:agreement_honoured}
Let $\Gamma \neq \emptyset$ be a multi-set of Horn formulae and $Z$ be a positive tensor product or empty.
Then 
\[
\llbracket \Gamma \rrbracket \text{ admits agreement on } Z \text{ implies } \Gamma \vdash Z \text{ is an honoured sequent}
\] 
\end{restatable}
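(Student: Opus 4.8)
\textbf{Proof plan for Lemma~\ref{lem:agreement_honoured}.}
The plan is to prove the contrapositive direction constructively, by induction on the length of a witnessing trace, mirroring the structure already used in Lemma~\ref{lem:automata_traduction} and the proof of Theorem~\ref{the:1NPCLagreement}. By Lemma~\ref{lem:gammaformula} I may replace $\Gamma$ with a single Horn formula $p$ such that $\llbracket \Gamma \rrbracket = \llbracket p \rrbracket$, so it suffices to show: if $\llbracket p \rrbracket$ admits agreement on $Z$ then the sequent $\Gamma \vdash Z$ is honoured, where $\Gamma$ is the multi-set of clauses of $p$. Fix a trace $w \in \mathscr{L}(\llbracket p \rrbracket)$ made only of match actions and of offers in bijective correspondence with the (positive) literals in $Z$. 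The induction is on $|w|$.

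For the base case $w = \epsilon$, the initial state of $\llbracket p \rrbracket$ is final; inspecting Definition~\ref{def:translation} this forces every clause of $p$ to be a positive tensor product whose atoms have all already been fired, i.e.\ $p$ is a tensor of positive atoms exactly matching $Z$, and $\Gamma \vdash Z$ is then derivable by $Ax$, $\otimes R$ and $Mix$ alone — in particular it is honoured. For the inductive step write $w = \vec a\, w'$ with $(\vec a w', \vec q_0) \to (w', \vec q) \to^* (\epsilon, \vec q_f)$. Here $\vec a$ is either an offer or a match (it cannot be a bare request, since $w$ contains none). Applying Lemma~\ref{lem:automata2formula}, the automaton obtained from $\llbracket p \rrbracket$ by moving the start state to $\vec q$ has language $\mathscr{L}(\llbracket p \slash \vec a \rrbracket)$, so $w'$ witnesses that $\llbracket p \slash \vec a \rrbracket$ admits agreement on $Z'$, where $Z'$ is $Z$ minus the literal emitted by $\vec a$ if $\vec a$ is an offer, and $Z' = Z$ if $\vec a$ is a match. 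By the induction hypothesis the sequent for $p \slash \vec a$ is honoured; it remains to lift a derivation of $(p\slash\vec a)\text{-sequent} \vdash Z'$ to a derivation of $\Gamma \vdash Z$. This is the crux, and it is done by case analysis on the three clauses of Definition~\ref{def:meno-alinear}: (i) $\vec a$ an offer on $c$, so $p$ has a clause $\bigotimes_{a_1 \in X_1\cup\{c\}}a_1$ and $Z = Z' \otimes c$ — rebuild the proof by one extra $\otimes R$ pairing the $Ax$ on $c$ with the derivation for $Z'$; (ii) $\vec a$ a match on $c$ between a positive $c$ in one clause and a $c^\bot$ in another — rebuild using $NegL$ to discharge $c^\bot$ against an $Ax$ on $c$, then $Mix$; (iii) $\vec a$ a match on $c$ against a premise atom $c$ of a Horn implication $\bigotimes_{b\in Y\cup\{c\}}b \multimap \bigotimes_{a_2\in X_2}a_2$ — rebuild using $\multimap L$, supplying $c$ (and recursively the other premises of that implication, which by the structure of $w$ are matched earlier/later) on the left branch and the induction-hypothesis derivation on the right. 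In each case one also reintroduces the $\otimes L$ applications that Lemma~\ref{lem:honouredrules} allows, to recombine the separated clauses into $p$; and in each case the newly introduced sequents carry only positive literals on the right, so the whole tree stays honoured.

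The main obstacle I anticipate is case (iii): when the matched atom is a \emph{premise} of a Horn implication whose other premises $b \in Y$ may be satisfied at different, possibly later, positions of $w$. The careful bookkeeping here is to observe that, since $w$ is made only of matches and of the offers listed by $Z$, \emph{every} request occurring inside $\llbracket p \rrbracket$ along this run is matched by a corresponding offer, so each premise $b$ of the implication is itself the target of a match transition somewhere in $w$; peeling these off one at a time (reordering the induction so that a whole implication's premise-bundle is consumed before the $\multimap L$ is applied, exactly as the proof of Theorem~\ref{the:1NPCLagreement} handles nested implications via an inner induction on atoms) lets me supply all left-branch subproofs. A secondary subtlety is keeping the $NegL$/$Mix$ discipline so that every intermediate sequent remains honoured — but this is guaranteed structurally by Lemma~\ref{lem:honouredrules}, since none of $Ax$, $Mix$, $NegL$, $\otimes L$, $\otimes R$, $\multimap L$ ever places a negative literal on the right. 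Putting the two directions together with Lemma~\ref{lem:honored_agreement} yields Theorem~\ref{the:automaillmix}.
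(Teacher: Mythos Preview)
Your overall plan---induction on the length of a witnessing trace $w$, reducing to $p/\vec a$ via Lemma~\ref{lem:automata2formula}---is exactly the paper's structure. The substantive difference is in how the inductive step is closed.

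The paper does \emph{not} rebuild the proof from the restricted rule set of Lemma~\ref{lem:honouredrules}; instead it uses \emph{cut}. For an offer on $c$ it proves $p \vdash (p/\vec a)\otimes c$ and cuts against the IH derivation $(p/\vec a)\vdash Z'$ combined with $c\vdash c$ via $\otimes R$. For a match it proves $p \vdash p/\vec a$ directly (eight explicit subcases according to whether the offer side of $c$ comes from a plain tensor or an implication conclusion, and whether the request side comes from a $c^\bot$ literal or an implication premise) and cuts against the IH. Each subcase is a short explicit tree using $Ax$, $Mix$, $NegL$, $\otimes L/R$, $\multimap L/R$. This is legitimate because ``honoured'' only requires that $\Gamma\vdash Z$ be provable with $Z$ positive or empty; it does \emph{not} require a cut-free proof, and Lemma~\ref{lem:honouredrules} is used only for the converse implication (Lemma~\ref{lem:honored_agreement}).

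Your case~(iii) is where avoiding cut bites. When $\vec a$ matches an atom $c$ that is a \emph{premise} of $\bigotimes_{b\in Y\cup\{c\}}b\multimap\bigotimes_{a_2\in X_2}a_2$, the residual $p/\vec a$ contains the \emph{shorter} implication $\bigotimes_{b\in Y}b\multimap\bigotimes_{a_2\in X_2}a_2$, and the IH hands you a proof that mentions this shorter implication. To obtain a proof of the original sequent without cut you would need a proof-transformation replacing the shorter implication by the longer one together with a supply of $c$; the ``reorder the induction so the whole premise bundle is consumed first'' idea does not give this, because the other premises $b\in Y$ may be matched by offers coming from clauses that themselves depend on the conclusion $\bigotimes_{a_2\in X_2}a_2$. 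The paper sidesteps all of this by proving $p\vdash p/\vec a$ once (using $\multimap R$/$\multimap L$ to relate the two implications) and cutting.

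A minor point: your base case $w=\varepsilon$ is vacuous, since every clause has $|X|\geq 1$ and hence $\llbracket\alpha\rrbracket$ never has its initial state final. The paper takes $|w|=1$ as the base and argues that, with $|I|\geq 2$ principals, the single action must be a match, forcing $\Gamma=\{a\otimes a^\bot\}$ and $Z$ empty.
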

\proof
By hypothesis $w \in \mathscr L (\llbracket \Gamma \rrbracket)$ is a trace only composed of match and offer actions on $Z$.
We proceed by induction on the length of $w$. 
In the base case $w$ has length one. 
Note that it is not possible to have $w=\epsilon$ by the hypothesis $\Gamma \neq \emptyset$ and 
Definition~\ref{def:hornillmix}. Moreover by Definition~\ref{def:hornillmix} it must be that 
$w=\vec a$ where $\vec a$ is a match on action $a$ (a Horn formula must contain at least two principals).  Hence
by Definition~\ref{def:translation} it must be that $Z=\emptyset$ and
 $\Gamma = \{ \alpha \otimes \alpha'\}$ where $\alpha=a$ and $\alpha'=a^\bot$ for some literal $a$.  
Then we have:
\[
\dfrac
{
	\dfrac
	{
		\dfrac{}{a \vdash a} Ax
	}
	{a,a^\bot \vdash} Neg
}
{a \otimes a^\bot \vdash} \otimes L
\]
%

For the inductive step, let $w=\vec{a}w_2$, let $\vec{q_0}$ and $\vec{f}$ be the initial and the final states of $\llbracket \Gamma \rrbracket$, then $(\vec{a}w_2,\vec{q_0}) \rightarrow (w_2,\vec{q}) \rightarrow^+ (\epsilon,\vec{f})$.
Let $p$ be a Horn formula such that 
$\llbracket \Gamma \rrbracket = \llbracket p \rrbracket$ (Lemma~\ref{lem:gammaformula}), so it suffices to prove $p \vdash Z$.
 By the induction hypothesis and Lemma~$\ref{lem:automata2formula}$ we have that 
$\llbracket p \slash \vec{a} \rrbracket$ admits agreement on some $Z'$ implies 
$p \slash \vec{a} \vdash Z'$ is honoured.
To build $Z$ from $Z'$, we proceed by cases on $\vec a$:
\begin{itemize}
\item if $\vec a$ is an offer action on c we prove that   $p \vdash Z$ 
where $Z= Z' \otimes c$. 
We have the following
\[
\dfrac
{
	\dfrac{\Delta'}{p \vdash (p \slash \vec{a}) \otimes c}{}
	\quad
	\dfrac
	{
		\dfrac{\Delta}{ (p \slash\vec a) \vdash Z' }
		\quad
		\dfrac{}{c \vdash c}{Ax}
	}
	{p \slash \vec{a} \otimes c \vdash Z' \otimes c} \otimes R
}
{
p \vdash Z
}{cut}
\]
where $\Delta$ is obtained by the inductive hypothesis and for $\Delta'$ we have two cases depending on $p$:

\begin{itemize}
\item 
$p = p' \otimes \bigotimes_{a_1 \in X_1 \cup \{c\}}a_1$
then the derivation  $\dfrac{\Delta'}{p \vdash (p \slash \vec{a} ) \otimes c}$  becomes 
$
\dfrac
{
}
{
 p' \otimes c \vdash  p' \otimes c
} Ax
$\\
if $X_1 = \emptyset$ and the following otherwise

\[
\dfrac
{
	\dfrac
	{
		\dfrac{
			\dfrac{
				\dfrac{}{p' \vdash p' }{ Ax }
				\quad
				\dfrac{}{  \bigotimes_{a_1 \in X_1 } a \vdash \bigotimes_{a_1 \in X_1} a }{ Ax }
			}
			{p',   \bigotimes_{a_1 \in X_1 } a \vdash p' \otimes   \bigotimes_{a_1 \in X_1} a }{\otimes R}
			\quad				
			\dfrac{}
			{c \vdash  c}{Ax}
		}
		{p',   \bigotimes_{a_1 \in X_1} a, c \vdash p' \otimes   \bigotimes_{a_1 \in X_1} a \otimes c}{\otimes R}
	}
	{p', \bigotimes_{a_1 \in X_1\cup \{c\} } a \vdash p' \otimes   \bigotimes_{a_1 \in X_1} a \otimes c}{\otimes L}
}
{
	p' \otimes \bigotimes_{a_1 \in X_1 \cup \{c\} } a \vdash p' \otimes   \bigotimes_{a_1 \in X_1} a \otimes c
}{\otimes L}
\]

\end{itemize}

%
%
%
%
%
%

\item if $\vec a$ is a match action we prove that   $p \vdash Z$. 
We have the following
\[
\dfrac
{
	\dfrac{\Delta'}{p \vdash p \slash \vec{a}}
	\quad
	\dfrac
	{
		\Delta
	}
	{p \slash \vec{a} \vdash Z'} 
}
{
p \vdash Z'
}{cut}
\]
where $\Delta$ is obtained by the inductive hypothesis, $Z = Z'$ because $\vec a$ is a match, and for $\Delta'$ we have eight   cases depending on $p$:
\begin{itemize}
\item $p  =  p' \otimes c \otimes  c^\bot$; 
then the derivation $\dfrac{\Delta'}{p \vdash p \slash \vec{a}}$ becomes:
$
\dfrac
{
\dfrac{\Delta_{mix}}
{p' , c \otimes  c^\bot  \vdash p'} \otimes L
}
{p' \otimes c \otimes  c^\bot  \vdash p'} \otimes L$\\\\
%
Since the deduction tree $\Delta_{mix}$ will be also used later on, we keep it more general, by writing $q$ for $p'$:
\[
\Delta_{mix}\quad =\quad
\dfrac
{
	\dfrac{
		\dfrac{}{c \vdash c} Ax
	}
	{c,c^\bot \vdash} NegL
	\quad
	\dfrac{}{q\vdash q}id
}
{q , c ,  c^\bot  \vdash q} Mix
\]

\item $ p =  p' \otimes \bigotimes_{a_1 \in X_1 \cup \{ c\} } a_1 \otimes  c^\bot $\\ 
then, writing in $\Delta_{mix}$ $p' \otimes \bigotimes_{a_1 \in X_1  } a_1$ for $q$ the derivation 
$\dfrac{\Delta'}{p \vdash p \slash \vec{a}}$ becomes:
\[
\dfrac
{{
\dfrac{\Delta_{mix}}
{ p' \otimes \bigotimes_{a_1 \in X_1 \cup \{ c\} } a_1 ,  c^\bot  \vdash  p' \otimes \bigotimes_{a_1 \in X_1  } a_1} \otimes L
}
}
{
 p' \otimes \bigotimes_{a_1 \in X_1 \cup \{ c\} } a_1 \otimes  c^\bot  \vdash  p' \otimes \bigotimes_{a_1 \in X_1  } a_1 
} \otimes L 
\]

\item $ p =  p' \otimes \bigotimes_{a_2 \in X_2 \cup \{c^\bot \}} a_2 \otimes c$\\ 

then, writing in $\Delta_{mix}$ $p' \otimes \bigotimes_{a_2 \in X_2 } a_2$ for $q$ the derivation
$\dfrac{\Delta'}{p \vdash p \slash \vec{a}}$ becomes:
\[
\dfrac
{
{
\dfrac{
\Delta_{mix}
}
{p' \otimes  \bigotimes_{a_2 \in X_2 \cup \{c^\bot \}} a_2 , c  \vdash 
 p' \otimes \bigotimes_{a_2 \in X_2 } a_2} \otimes L
}
}
{
 p'  \otimes  \bigotimes_{a_2 \in X_2 \cup \{c^\bot \}} a_2  \otimes c\vdash 
 p' \otimes \bigotimes_{a_2 \in X_2 } a_2
} \otimes L 
\]

\item $ p =  p' \otimes \bigotimes_{a_1 \in X_1 \cup \{ c\} } a_1 \otimes  \bigotimes_{a_2 \in X_2 \cup \{c^\bot \}} a_2  $\\ 

then, writing in $\Delta_{mix}$ $p' \otimes \bigotimes_{a_1 \in X_1  } a_1 \otimes  \bigotimes_{a_2 \in X_2 } a_2$ for $q$ the derivation
$\dfrac{\Delta'}{p \vdash p \slash \vec{a}}$ becomes:
\[
\dfrac
{ \dfrac{\Delta_ {mix}}
  {p' \otimes \bigotimes_{a_1 \in X_1 \cup \{ c\} } a_1 \otimes  \bigotimes_{a_2 \in X_2} a_2 , c^\bot  \vdash
p' \otimes \bigotimes_{a_1 \in X_1  } a_1 \otimes  \bigotimes_{a_2 \in X_2 } a_2 } \otimes L
}
{p' \otimes \bigotimes_{a_1 \in X_1 \cup \{ c\} } a_1 \otimes  \bigotimes_{a_2 \in X_2 \cup \{c^\bot \}} a_2  \vdash
p' \otimes \bigotimes_{a_1 \in X_1  } a_1 \otimes  \bigotimes_{a_2 \in X_2 } a_2  } \otimes L 
\]

\item $ p =  p' \otimes c \otimes (c \multimap \bigotimes_{a_2 \in X_2}a_2 ) $\\ 

then the derivation $\dfrac{\Delta'}{p \vdash p \slash \vec{a}}$ becomes:
\[
\dfrac
{
\dfrac
{
		\Delta_{ax} 
		\quad
		\Delta_{\multimap}
	}
	{
	 p' , c , (c \multimap \bigotimes_{a_2 \in X_2}a_2 ) \vdash
	 p' \otimes \bigotimes_{a_2 \in X_2}a_2  
	} \otimes R 
}
{
	 p' \otimes c \otimes (c \multimap \bigotimes_{a_2 \in X_2}a_2 ) \vdash
	 p' \otimes \bigotimes_{a_2 \in X_2}a_2  
} \otimes L (x2)
\]
where letting $q=p'$ \\
\[
\Delta_{ax} \quad = \quad \dfrac{}{q \vdash q} \, Ax
\]

and $\Delta_\multimap$ is the following proof:
\[
\dfrac{ 
			\dfrac{}{c \vdash c} \,Ax
			\quad
			\dfrac{}{ \bigotimes_{a_2 \in X_2}a_2 \vdash  \bigotimes_{a_2 \in X_2}a_2}Ax
		}{ c , (c \multimap \bigotimes_{a_2 \in X_2}a_2 ) \vdash   \bigotimes_{a_2 \in X_2}a_2  } \multimap L
\]

\item $p =  p' \otimes \bigotimes_{a_1 \in X_1 \cup \{c\}} a_1 \otimes (c \multimap \bigotimes_{a_2 \in X_2}a_2 )$\\ 

then, letting in $\Delta_{ax}$ $q= p' \otimes \bigotimes_{a_1 \in X_1 } a_1$,  
the derivation $\dfrac{\Delta'}{p \vdash p \slash \vec{a}}$ becomes:
\[
\dfrac{
	\dfrac{ 
		\Delta_{ax}
		\quad
		\Delta_{\multimap}
	}
	{
		 p' \otimes \bigotimes_{a_1 \in X_1} a_1, c, (c \multimap \bigotimes_{a_2 \in X_2}a_2) \vdash 
 		p' \otimes \bigotimes_{a_1 \in X_1 } a_1 \otimes  \bigotimes_{a_2 \in X_2}a_2
	} \otimes R 
}
{
 p' \otimes \bigotimes_{a_1 \in X_1 \cup \{c\}} a_1 \otimes (c \multimap \bigotimes_{a_2 \in X_2}a_2) \vdash 
 p' \otimes \bigotimes_{a_1 \in X_1 } a_1 \otimes  \bigotimes_{a_2 \in X_2}a_2
} \otimes L \text{ --- twice}
\]

\item $ p =  p' \otimes \bigotimes_{a_1 \in X_1 \cup \{c\}} a_1 \otimes (\bigotimes_{b \in Y \cup \{ c \}} b \multimap \bigotimes_{a_2 \in X_2}a_2 )$\\ 

then writing 
$\hat q$ for $ p' \otimes \bigotimes_{a_1 \in X_1 } a_1 \otimes
(\bigotimes_{b \in Y } b \multimap \bigotimes_{a_2 \in X_2}a_2 )$ below, the derivation $\dfrac{\Delta'}{p \vdash p \slash \vec{a}}$ becomes:

\[
\dfrac
{
	\dfrac
	{
		\Delta_{ax}
		\quad
		\Delta_{\multimap 2}
	}
	{
	 p' \otimes \bigotimes_{a_1 \in X_1 }  a_1  , c , (\bigotimes_{b \in Y \cup \{ c \}} b \multimap \bigotimes_{a_2 \in X_2}a_2 )\vdash
	 \hat q
	} \otimes R
}
{
 p' \otimes \bigotimes_{a_1 \in X_1 \cup \{c\}} a_1 \otimes (\bigotimes_{b \in Y \cup \{ c \}} b \multimap \bigotimes_{a_2 \in X_2}a_2 )\vdash
 \hat q
}\otimes L (x2)
\]

where  $q = p' \otimes \bigotimes_{a_1 \in X_1 } a_1$  in $\Delta_{ax}$, and $\Delta_{\multimap 2}$ is the deduction tree below:
\[
\dfrac
{
	\dfrac
	{
		\dfrac
		{
			\dfrac{}{ c \vdash c} Ax
			\quad
			\dfrac{}{ \bigotimes_{b \in Y }  b  \vdash \bigotimes_{b \in Y }  b  }Ax
		}
		{
			 c ,  \bigotimes_{b \in Y }  b \vdash\bigotimes_{b \in Y \cup \{ c \}} b 
		} \otimes R
		\quad
		\dfrac
		{}{ \bigotimes_{a_2 \in X_2}a_2  \vdash  \bigotimes_{a_2 \in X_2}a_2 } Ax
	}
	{
		 c , (\bigotimes_{b \in Y \cup \{ c \}} b \multimap \bigotimes_{a_2 \in X_2}a_2 ), \bigotimes_{b \in Y }  b \vdash
		\bigotimes_{a_2 \in X_2}a_2 
	} \multimap L
}
{
	 c , (\bigotimes_{b \in Y \cup \{ c \}} b \multimap \bigotimes_{a_2 \in X_2}a_2 )\vdash
	 \bigotimes_{b \in Y } b \multimap \bigotimes_{a_2 \in X_2}a_2 
} \multimap R
\]

\item $p =  p' \otimes c \otimes (\bigotimes_{b \in Y \cup \{ c \}} b \multimap \bigotimes_{a_2 \in X_2}a_2 ) $\\ 

then, letting $q= p'$ in $\Delta_{ax}$, the derivation $\dfrac{\Delta'}{p \vdash p \slash \vec{a}}$ becomes:
\[
\dfrac
{
	\dfrac
	{
		\Delta_{ax}
		\quad
		\Delta_{\multimap 2}
	}
	{
	 p', c \otimes (\bigotimes_{b \in Y \cup \{ c \}} b \multimap \bigotimes_{a_2 \in X_2}a_2  )
	\vdash
	 p' \otimes (\bigotimes_{b \in Y} b \multimap \bigotimes_{a_2 \in X_2}a_2)
	} \otimes R
}
{
 p' \otimes c \otimes (\bigotimes_{b \in Y \cup \{ c \}} b \multimap \bigotimes_{a_2 \in X_2}a_2  )
\vdash
 p' \otimes (\bigotimes_{b \in Y} b \multimap \bigotimes_{a_2 \in X_2}a_2)
} \otimes L\rlap{\hbox to 55 pt{\hfill\qEd}}
\]
\end{itemize}
\end{itemize}\medskip

\noindent The main theorem of this sub-section has now an immediate proof.
\theautomaillmix*
\begin{proof}
By Lemmata~\ref{lem:honored_agreement} and~\ref{lem:agreement_honoured}. 
\end{proof}

\end{document}